\documentclass[12pt]{article}
\usepackage[T1]{fontenc}
\usepackage{lmodern}
\usepackage{microtype}
\usepackage{amsmath,amssymb,amsthm,mathtools}
\usepackage{comment}
\usepackage{cite}
\usepackage{enumitem,array}
\usepackage{authblk}
\usepackage[colorlinks=true,
            linkcolor=blue,
            citecolor=blue,
            urlcolor=blue]{hyperref}

\usepackage{geometry}
\numberwithin{equation}{section}

\newtheorem{theorem}{Theorem}[section]
\newtheorem{proposition}[theorem]{Proposition}
\newtheorem{lemma}[theorem]{Lemma}
\newtheorem{corollary}[theorem]{Corollary}
\theoremstyle{definition}

\newtheorem{example}[theorem]{Example}
\theoremstyle{remark}
\newtheorem{remark}[theorem]{Remark}

\newcommand{\one}{\mathbf 1}
\newcommand{\dd}{\,\mathrm d}
\newcommand{\scal}{\operatorname{sc}}
\newcommand{\az}{$\alpha$--$z$}

\newcolumntype{L}[1]{>{\raggedright\arraybackslash}p{#1}}

\newcommand{\defeq}{\mathrel{\mathop:}=}

\DeclareMathOperator{\tr}{tr}
\DeclareMathOperator{\Tr}{Tr}

\newcommand{\Rmnum}[1]{\expandafter\@slowromancap\romannumeral #1@}

\newcommand{\be}{\begin{equation}}
\newcommand{\ee}{\end{equation}}

\renewcommand{\a}{\alpha}	
\newcommand{\e}{\epsilon}

\renewcommand{\r}{\rho}		
\newcommand{\s}{\sigma}

\newcommand{\cH}{\mathcal{H}}

\newcommand{\cN}{\mathcal{N}}

\renewcommand{\e}{\mathrm e}

\begin{document}

\title{Fixed-ray escort representations of sandwiched and
$\alpha$--$z$ R\'enyi divergences on von Neumann algebras}

\author[1]{Tanay Kibe \thanks{tanay.kibe@ib.edu.ar}}
\affil[1]{Instituto Balseiro, Centro At\'omico Bariloche,
S.C. de Bariloche, 8400 R\'io Negro, Argentina}

\author[2]{Pratik Roy \thanks{roy.pratik92@gmail.com}}
\affil[2]{Institute of Mathematics, University of Warsaw,
ul. Banacha 2, 02-097 Warsaw, Poland}

\date{}

\maketitle

\abstract{
We represent sandwiched and $\alpha$--$z$ R\'enyi divergences as averages
of ordinary relative entropy. The $\alpha$--$z$ R\'enyi divergence is shown to be an integral over the relative entropy of a canonical family of fixed-ray escort states along the ray $z=c\alpha$.  We prove
this representation for normal states on an arbitrary von Neumann algebra, using Haagerup non-commutative $L^p$ spaces and interpolation.
The formula holds for every \(z>0\): for \(0<\alpha<1\) when the
support of the first state is contained in that of the reference
state, and for \(\alpha>1\) whenever the divergence is finite.
When the lower-order support condition fails, we identify the exact
fixed-ray support-boundary term.
The representation yields a monotone escort
profile and a convex order potential.  We use these to reformulate
one-shot testing converses, exact sandwiched strong-converse exponents, and
work-extraction reliability as signed-area or level-crossing statements, and discuss a restricted two-parameter pair-conversion rate.
}

\newpage

\setcounter{tocdepth}{2}
\tableofcontents

\section{Introduction}
\label{sec:introduction}

Relative entropy is simultaneously a measure of distinguishability, a
thermodynamic potential, and a basic geometric quantity on the state
space.  For density matrices $\rho$ and $\sigma$, it is
\begin{equation}
  D(\rho\Vert\sigma)
  =\Tr\rho(\log\rho-\log\sigma),
\end{equation}
with the usual support convention.  Araki's relative entropy gives its
intrinsic extension to normal states on an arbitrary von Neumann algebra
\cite{Araki:1976zv}.  This extension is crucial, for example in quantum field theory, when the observable algebra is Type~III and no density matrix or trace exists
on the algebra. The sandwiched R\'enyi divergences
\cite{Muller-Lennert:2013liu,Wilde:2013bdg} and the two-parameter
$\alpha$--$z$ family \cite{Audenaert:2015npv} refine relative entropy by reweighting with respect to the reference state.  The
$\alpha$--$z$ family contains the Petz line $z=1$ and the sandwiched line
$z=\alpha$.  Its extension to normal functionals on von Neumann algebras
is naturally formulated in Haagerup noncommutative $L^p$ spaces
\cite{Kato:2023aro,Kato:2023hlj,Hiai:2024qve}. These divergences have also been used to formulate energy conditions in quantum field theory \cite{Lashkari:2018nsl,Moosa:2020jwt,Roy:2022yzm,Kibe:2026wsg,Kibe:2026bcn}.

Our starting point is the refined sandwiched divergence of
Bao, Moosa, and Shehzad \cite{Bao:2019aol}.  In finite dimensions,
it was shown in \cite{Bao:2019aol} to equal the relative entropy
of a sandwiched escort state.  Integrating this differential identity
shows that the sandwiched divergence itself is an average of ordinary
relative entropies.  The same calculation suggests the correct
two-parameter extension: 
\begin{equation}
  D_{\alpha,z}(\rho\Vert\sigma)
  =\frac{\alpha}{\alpha-1}
   \int_1^\alpha
   \frac{D(\rho_{\beta}^{(c)}\Vert\sigma)}{\beta^2}\,\dd\beta,
  \qquad c=\frac z\alpha,
  \label{eq:intro-fd-formula}
\end{equation}
where $\rho_{\beta}^{(c)}$, introduced below, denotes the fixed-ray escort state 
on a characteristic ray
\begin{equation}
  z=c\alpha,\qquad c>0.
  \label{eq:intro-ray}
\end{equation}
For $0<\alpha<1$ the oriented integral is a positive average over
$[\alpha,1]$.  At $c=1$, equivalently $z=\alpha$, this specializes
exactly to the sandwiched divergence. This representation is different from the integral
representations of quantum divergences developed from hockey-stick or
layer-cake decompositions \cite{Hirche:2023caq,Liu:2025npj}.
Those formulas integrate over a likelihood threshold and reconstruct a
divergence from binary-testing data.  Here the integration variable is the
R\'enyi order itself, and the integrand is the Araki relative entropy of a
canonical escort state.  The two approaches therefore have different
integrands, geometries, and domains of application.

The principal result of this paper is that
\eqref{eq:intro-fd-formula} remains valid for normal states on an
arbitrary von Neumann algebra.  For normal states $\psi,\omega$, the integral formula requires
$s(\psi)\leq s(\omega)$.\footnote{We sometimes say that \(\psi\) is \emph{support-nested} in
\(\omega\), or simply that the pair \((\psi,\omega)\) has nested
supports, when
\(s(\psi)\leq s(\omega) \). 
Accordingly, ``nonnested'' means that this support
inclusion fails.} 
For $\alpha>1$, finiteness of the
$\alpha$--$z$ divergence already enforces this nestedness condition.  
Subject to these hypotheses, the
formula holds for every $\alpha,z>0$, $\alpha\ne1$, including parameters
outside the data-processing region.  For $0<\alpha<1$ and nonnested
supports, the escort average result is modified by an
$L^c$ support-boundary term. On the sandwiched line this reduces to the
support mass $\psi(s(\omega))$. 

The finite-dimensional differentiation argument does not extend
directly to general von Neumann algebras, where the corresponding
intermediate operators
are generally unbounded and live in order-dependent Haagerup
\(L^p\)-spaces. Our proof instead anchors the fixed ray
at one Haagerup element, transports it through a compatible weighted
$L^p$ scale, and derives an exact quotient identity.  A lower-order
$\alpha$--$z$ limit identifies the left derivative with Araki relative
entropy.  Quasi-Banach interpolation and endpoint norm continuity then
justify the fundamental theorem of calculus.  The part of this argument
with exponents below one uses the all-$p$ interpolation theorem of
Gu, Yin, and Zhang \cite{Gu:2019jfc}.

Our integral representation is useful because it turns an order-dependent
R\'enyi quantity into a probability average of ordinary relative
entropies.  Intrinsically, the escort relative entropy is nondecreasing
along every fixed ray and is the slope of a convex potential.  Operational
variational formulas consequently become signed areas, with the optimizing
order determined by an escort level crossing.  We state the implications of our representation for exact strong-converse and
cutoff-rate formulas under the hypotheses of the corresponding operational
theorems, and for work and pair-conversion applications
in the finite-dimensional settings in which they are established.

The rest of this paper is organized as follows.  Section~\ref{sec:finite-dimensional}
derives the matrix formulas.  Section~\ref{sec:preliminaries} introduces
the Haagerup $L^p$-space framework and the algebraic divergences.  All general theorem
statements and their consequences appear in
Section~\ref{sec:results-and-consequences}.  The technical proofs are deferred: Section~\ref{sec:proof-lower-srd} gives a direct proof of
the lower-order\footnote{We generally refer to the range $\alpha\in(0,1)$ as the lower order range of either sandwiched or $\alpha$--$z$ divergence, and the range $\alpha>1$ as the upper order range.} sandwiched formula in its data-processing range, and
Section~\ref{sec:proof-alpha-z} proves the full fixed-ray theorem.  Thus the
latter proof also supplies an alternative sandwiched proof and extends its
analytic identity to $0<\alpha<1/2$. Finally, we summarize our results and discuss future directions in Section~\ref{sec:discussion}.

\section{Finite-dimensional integral representations}
\label{sec:finite-dimensional}

Throughout this section, $\cH$ is a finite dimensional Hilbert space, and $\rho$ and
$\sigma$ are density matrices on $\cH$.  We first take $\sigma>0$ and,
when differentiating, $\rho>0$.  Singular states follow by approximation
and support compression, as explained after the calculation.

\subsection{The fixed-ray \texorpdfstring{$\alpha$--$z$}{alpha-z}
representation}

For $\alpha,z>0$, $\alpha\ne1$, define the {\az} moment and divergence respectively as
\begin{align}
  Q_{\alpha,z}(\rho\|\sigma)
  &\defeq
  \Tr\!\left(
    \sigma^{\frac{1-\alpha}{2z}}
    \rho^{\frac\alpha z}
    \sigma^{\frac{1-\alpha}{2z}}
  \right)^z,
  \label{eq:fd-alpha-z-moment}\\
  D_{\alpha,z}(\rho\Vert\sigma)
  &\defeq \frac{1}{\alpha-1}
  \log Q_{\alpha,z}(\rho\|\sigma).
  \label{eq:fd-alpha-z-divergence}
\end{align}
Consider a fixed ray in $(\alpha,z)$-parameter space with slope $c>0$, i.e., the ray $(\beta,c\beta)$.  For $\beta>0$, set
\begin{align}
  X_{\beta}^{(c)}
  &\defeq
  \sigma^{\frac{1-\beta}{2c\beta}}
  \rho^{1/c}
  \sigma^{\frac{1-\beta}{2c\beta}},
  \label{eq:fd-fixed-ray-factor}\\
  Q_{\beta}^{(c)}(\rho\|\sigma)
  &\defeq
  \Tr\!\left[(X_{\beta}^{(c)})^{c\beta}\right]
  =Q_{\beta,c\beta}(\rho\|\sigma),
  \label{eq:fd-fixed-ray-moment}\\
  \rho_{\beta}^{(c)}
  &\defeq
  \frac{(X_{\beta}^{(c)})^{c\beta}}
       {Q_{\beta}^{(c)}(\rho\|\sigma)}.
  \label{eq:fd-fixed-ray-escort}
\end{align}
The state $\rho_{\beta}^{(c)}$ will be called the \emph{fixed-ray
escort state}, and the family
$\beta\mapsto\rho_{\beta}^{(c)}$ the \emph{fixed-ray escort trajectory}.
Notice that $Q_1^{(c)}=1$ and
$\rho_1^{(c)}=\rho$.

Put $m_c(\beta)=Q_{\beta}^{(c)}(\rho\|\sigma)$ and
$g_c(\beta)=\log m_c(\beta)$.  Functional calculus gives
\begin{equation}
  \frac{\dd X_{\beta}^{(c)}}{\dd\beta}
  =-\frac{1}{2c\beta^2}
  \left[(\log\sigma)X_{\beta}^{(c)}
       +X_{\beta}^{(c)}(\log\sigma)\right].
  \label{eq:fd-factor-derivative}
\end{equation}
Using tracial differentiation and cyclicity,
\begin{equation}
  m_c'(\beta)
  =c\Tr\!\left[(X_{\beta}^{(c)})^{c\beta}
                  \log X_{\beta}^{(c)}\right]
   -\frac1\beta\Tr\!\left[(X_{\beta}^{(c)})^{c\beta}
                  \log\sigma\right].
  \label{eq:fd-moment-derivative}
\end{equation}
On the other hand,
\begin{equation}
  \log\rho_{\beta}^{(c)}
  =c\beta\log X_{\beta}^{(c)}-g_c(\beta),
\end{equation}
and therefore
\begin{equation}
  D(\rho_{\beta}^{(c)}\Vert\sigma)
  =\beta g_c'(\beta)-g_c(\beta).
  \label{eq:fd-refined-identity}
\end{equation}
Equivalently,
\begin{equation}
  \beta^2\frac{\dd}{\dd\beta}
  \left[
    \frac{\beta-1}{\beta}
    D_{\beta,c\beta}(\rho\Vert\sigma)
  \right]
  =D(\rho_{\beta}^{(c)}\Vert\sigma).
  \label{eq:fd-ray-refined-identity}
\end{equation}
Since the quantity in square brackets vanishes at $\beta=1$,
integration gives the desired formula.

We now extend the preceding notation to singular density matrices.  Put
$P=s(\sigma)$.  Positive powers of $\sigma$ are understood by the usual
functional calculus, with $0^u=0$ for $u>0$.  Whenever a negative power
of $\sigma$ occurs, it denotes the inverse power of the faithful
restriction $\sigma_P\defeq\sigma\vert_{P\cH}$ on $P\cH$, extended by
zero on $P^\perp\cH$.  Thus, for $0<\alpha<1$, the expression
\eqref{eq:fd-alpha-z-moment} is defined for arbitrary $\rho$ and
$\sigma$, with the convention that
$D_{\alpha,z}(\rho\Vert\sigma)=+\infty$ if
$Q_{\alpha,z}(\rho\|\sigma)=0$.  For $\alpha>1$ the generalized-inverse reading of $\sigma^{(1-\alpha)/2z}$ is used only
when $s(\rho)\le s(\sigma)$; without support inclusion it would return a finite value of
the wrong sign, or $-\infty$. We therefore set
\[
  Q_{\alpha,z}(\rho\|\sigma)
  =D_{\alpha,z}(\rho\Vert\sigma)
  =+\infty
  \qquad\text{if }s(\rho)\not\leq s(\sigma).
\]
If $s(\rho)\leq s(\sigma)$, all the quantities may equivalently be
computed in the corner $P\mathcal B(\cH)P\cong\mathcal B(P\cH)$, where
$\sigma_P$ is faithful.  Matrices and states obtained in this corner
will always be regarded as matrices on $\cH$ by extension by zero on
$P^\perp\cH$.  In particular, under the support inclusion,
$Q_{\beta}^{(c)}(\rho\|\sigma)$ is finite and strictly positive for
every $\beta>0$, and the escort state
$\rho_\beta^{(c)}$ is well defined.

\begin{proposition}[Finite-dimensional fixed-ray representation]
\label{prop:fd-fixed-ray}
Let $\rho,\sigma$ be density matrices with
$s(\rho)\leq s(\sigma)$.  Let $\alpha,z>0$, $\alpha\ne1$, and put
$c=z/\alpha$.
Then
\begin{equation}
  D_{\alpha,z}(\rho\Vert\sigma)
  =\frac{\alpha}{\alpha-1}
   \int_1^\alpha
   \frac{D(\rho_{\beta}^{(c)}\Vert\sigma)}{\beta^2}
   \,\dd\beta .
  \label{eq:fd-fixed-ray-integral}
\end{equation}
The integral is oriented.  Thus, for $0<\alpha<1$,
\begin{equation}
  D_{\alpha,z}(\rho\Vert\sigma)
  =\frac{\alpha}{1-\alpha}
   \int_\alpha^1
   \frac{D(\rho_{\beta}^{(c)}\Vert\sigma)}{\beta^2}
   \,\dd\beta .
  \label{eq:fd-fixed-ray-lower}
\end{equation}
\end{proposition}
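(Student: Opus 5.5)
The plan is to first work in the corner $P\mathcal B(\cH)P$ with $P=s(\sigma)$, where $\sigma_P$ is faithful. Under $s(\rho)\le s(\sigma)$, every quantity in \eqref{eq:fd-alpha-z-moment}--\eqref{eq:fd-fixed-ray-escort} is computed there, and $D(\rho_\beta^{(c)}\Vert\sigma)=D(\rho_\beta^{(c)}\Vert\sigma_P)$. So it suffices to treat $\sigma>0$.

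\textbf{Step 1: the case $\rho>0$.} Here $X_\beta^{(c)}>0$ depends real-analytically on $\beta>0$. Hence $m_c$ is smooth and positive, and $g_c=\log m_c$ is smooth. The computation \eqref{eq:fd-factor-derivative}--\eqref{eq:fd-moment-derivative} is justified by the standard trace rule $\frac{\dd}{\dd\beta}\Tr f(X_\beta)=\Tr f'(X_\beta)\dot X_\beta$ together with cyclicity.

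To get \eqref{eq:fd-refined-identity}, expand
\[
D(\rho_\beta^{(c)}\Vert\sigma)=\Tr\rho_\beta^{(c)}\bigl(c\beta\log X_\beta^{(c)}-g_c(\beta)-\log\sigma\bigr).
\]
Multiplying \eqref{eq:fd-moment-derivative} by $\beta/m_c$ shows this equals $\beta g_c'-g_c$. Since $\frac{\beta-1}{\beta}D_{\beta,c\beta}=g_c/\beta$, we have $\beta^2(g_c/\beta)'=\beta g_c'-g_c$, which is \eqref{eq:fd-ray-refined-identity}. Now use $g_c(1)=\log\Tr\rho=0$ and integrate from $1$ to $\alpha$ by the fundamental theorem of calculus. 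This gives $\frac{\alpha-1}{\alpha}D_{\alpha,z}=\int_1^\alpha\beta^{-2}D(\rho_\beta^{(c)}\Vert\sigma)\dd\beta$, which is \eqref{eq:fd-fixed-ray-integral}. Reversing orientation gives \eqref{eq:fd-fixed-ray-lower}.

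\textbf{Step 2: singular $\rho$ (with $\sigma>0$).} Set $\rho_\epsilon=(1-\epsilon)\rho+\epsilon\one/d$ and apply Step 1 to $\rho_\epsilon$.
\begin{itemize}
\item \emph{Left-hand side.} The map $\rho\mapsto\rho^{1/c}$ is continuous on positive matrices, and so is $\rho\mapsto\Tr(\sigma^{a}\rho^{1/c}\sigma^{a})^{c\alpha}$. Also $Q_{\alpha,z}(\rho\|\sigma)>0$, since $\sigma^a\rho^{1/c}\sigma^a\neq0$. Hence $D_{\alpha,z}(\rho_\epsilon\Vert\sigma)\to D_{\alpha,z}(\rho\Vert\sigma)$.
\item \emph{Integrand.} For each $\beta$, $\rho_{\epsilon,\beta}^{(c)}\to\rho_\beta^{(c)}$, and $D(\cdot\Vert\sigma)$ is continuous on all states because $\sigma>0$.
\item \emph{Passing the limit under the integral.} This needs a uniform bound on the compact $\beta$-interval. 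Since $\sigma>0$, we have $0\le D(\tau\Vert\sigma)\le-\log\lambda_{\min}(\sigma)$ for every state $\tau$. Dominated convergence then applies, as $\beta^{-2}$ is bounded on $[\min(1,\alpha),\max(1,\alpha)]$.
\end{itemize}

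\textbf{Main obstacle.} The only real subtlety is Step 1's differentiation: differentiating $\Tr (X_\beta)^{c\beta}$ when both the base and the exponent vary. I would write $(X_\beta)^{c\beta}=\exp(c\beta\log X_\beta)$ and apply the Fréchet derivative of $\exp$ under the trace. Cyclicity gives $\Tr e^{Y}\dot Y$ with $Y=c\beta\log X_\beta$. Then, again under the trace, $\Tr X^{c\beta}\,\frac{\dd}{\dd\beta}\log X_\beta=\Tr X^{c\beta-1}\dot X_\beta$. This holds because $X^{c\beta}$ commutes with $X$, so the integral representation of $\mathrm D\log$ collapses. Substituting \eqref{eq:fd-factor-derivative} and using cyclicity yields exactly the $-\frac1\beta\Tr X^{c\beta}\log\sigma$ term.

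The boundedness of $D(\cdot\Vert\sigma)$ makes Step 2 routine. Everything happens in the corner, so no support-boundary term appears.
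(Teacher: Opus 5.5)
Your proposal is correct and follows essentially the same route as the paper. You reduce to the corner $P\mathcal B(\cH)P$, prove the faithful case by differentiating $g_c$ to obtain $D(\rho_\beta^{(c)}\Vert\sigma)=\beta g_c'-g_c$, then regularize with $(1-\varepsilon)\rho+\varepsilon P/\operatorname{rank}P$ and pass to the limit by dominated convergence using the uniform bound $-\log\lambda_{\min}(\sigma_P)$. Your Duhamel argument and the collapse of the derivative of $\log$ under the trace just make explicit what the paper calls tracial differentiation and cyclicity.
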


\begin{proof}
The calculation preceding the proposition proves the result when both
$\rho$ and $\sigma$ are faithful.  For general $\sigma$, put
$P=s(\sigma)$.  Since $s(\rho)\leq P$, we have $\rho=P\rho P$, and all
the moments, divergences, and escort states in the proposition agree
with the corresponding quantities computed in the corner
$\mathcal B(P\cH)$.  We may therefore replace $\cH$ by $P\cH$ and
$\sigma$ by its faithful restriction
$\sigma_P=\sigma\vert_{P\cH}$.

If $\rho$ is faithful on $P\cH$, the preceding calculation applies
directly.  Otherwise, define
\begin{equation}
  \rho_\varepsilon
  \defeq
  (1-\varepsilon)\rho
  +\varepsilon\frac{P}{\operatorname{rank}P},
  \qquad 0<\varepsilon<1.
  \label{eq:fd-faithful-regularization}
\end{equation}
Then $\rho_\varepsilon$ is a faithful density matrix on $P\cH$, and
hence
\begin{equation}
  D_{\alpha,z}(\rho_\varepsilon\Vert\sigma_P)
  =
  \frac{\alpha}{\alpha-1}
  \int_1^\alpha
  \frac{
    D((\rho_\varepsilon)_\beta^{(c)}\Vert\sigma_P)}
       {\beta^2}\,\dd\beta .
  \label{eq:fd-regularized-ray-integral}
\end{equation}

As $\varepsilon\downarrow0$, we have
$\rho_\varepsilon\to\rho$ in norm.  Continuity of the matrix power map
on the positive cone gives
\[
  \rho_\varepsilon^{1/c}\longrightarrow\rho^{1/c}
\]
in norm.  Consequently, for every fixed $\beta$ in the closed interval
with endpoints $1$ and $\alpha$,
\[
  X_\beta^{(c)}[\rho_\varepsilon]
  \longrightarrow
  X_\beta^{(c)}[\rho]
\]
and
\[
  \bigl(X_\beta^{(c)}[\rho_\varepsilon]\bigr)^{c\beta}
  \longrightarrow
  \bigl(X_\beta^{(c)}[\rho]\bigr)^{c\beta}
\]
in norm.  Taking traces therefore gives
\[
  Q_\beta^{(c)}(\rho_\varepsilon\|\sigma_P)
  \longrightarrow
  Q_\beta^{(c)}(\rho\|\sigma_P).
\]
The limiting moment is strictly positive: in the faithful corner the
powers of $\sigma_P$ are invertible, while $\rho^{1/c}\neq0$.  It
follows that
\[
  (\rho_\varepsilon)_\beta^{(c)}
  \longrightarrow
  \rho_\beta^{(c)}
\]
in norm for every such $\beta$.  In particular, at $\beta=\alpha$,
\[
  Q_{\alpha,z}(\rho_\varepsilon\|\sigma_P)
  \longrightarrow
  Q_{\alpha,z}(\rho\|\sigma_P)>0,
\]
and hence
\begin{equation}
  D_{\alpha,z}(\rho_\varepsilon\Vert\sigma_P)
  \longrightarrow
  D_{\alpha,z}(\rho\Vert\sigma_P).
  \label{eq:fd-regularized-divergence-limit}
\end{equation}

Since $\sigma_P$ is faithful and $P\cH$ is finite dimensional, the map
\[
  \eta\longmapsto D(\eta\Vert\sigma_P)
\]
is continuous on the compact state space of $\mathcal B(P\cH)$.
Therefore
\[
  D((\rho_\varepsilon)_\beta^{(c)}\Vert\sigma_P)
  \longrightarrow
  D(\rho_\beta^{(c)}\Vert\sigma_P)
\]
for every $\beta$.  Moreover, writing
$\lambda_{\min}(\sigma_P)>0$ for the smallest eigenvalue of
$\sigma_P$, every state $\eta$ on $P\cH$ satisfies
\begin{align}
  D(\eta\Vert\sigma_P)
  &=\Tr(\eta\log\eta)-\Tr(\eta\log\sigma_P) \notag\\
  &\leq-\log\lambda_{\min}(\sigma_P).
  \label{eq:fd-relative-entropy-uniform-bound}
\end{align}
Since $\beta^{-2}$ is bounded on the closed interval with endpoints
$1$ and $\alpha$, this gives an integrable bound independent of
$\varepsilon$.  Dominated convergence may therefore be applied to
\eqref{eq:fd-regularized-ray-integral}.  Combining it with
\eqref{eq:fd-regularized-divergence-limit} proves
\eqref{eq:fd-fixed-ray-integral}.  The lower-order form
\eqref{eq:fd-fixed-ray-lower} follows by reversing the orientation of
the integral.
\end{proof}

Denote the sandwiched R\'enyi divergence by $D_\alpha(\rho\|\sigma)$. At $c=1$, or equivalently $z=\alpha$, we write
\begin{align}
  x_\beta(\rho\|\sigma)
  &\defeq 
  \sigma^{\frac{1-\beta}{2\beta}}
  \rho
  \sigma^{\frac{1-\beta}{2\beta}},\\
  Q_\beta(\rho\|\sigma)
  &\defeq 
  \Tr x_\beta(\rho\|\sigma)^\beta,\\
  D_\beta(\rho\Vert\sigma)
  &\defeq 
  D_{\beta,\beta}(\rho\Vert\sigma),\\
  \rho_\beta^s
  &\defeq 
  \frac{x_\beta(\rho\|\sigma)^\beta}
       {Q_\beta(\rho\|\sigma)}.
\end{align}
Equation~\eqref{eq:fd-ray-refined-identity} becomes the refined
sandwiched identity of \cite{Bao:2019aol},
\begin{equation}
  \widetilde D_\beta(\rho\Vert\sigma)
  \defeq \beta^2\frac{\dd}{\dd\beta}
    \left[\frac{\beta-1}{\beta}
          D_\beta(\rho\Vert\sigma)\right]
  =D(\rho_\beta^s\Vert\sigma),
  \label{eq:fd-refined-srd}
\end{equation}
and Proposition~\ref{prop:fd-fixed-ray} gives
\begin{equation}
  D_\alpha(\rho\Vert\sigma)
  =\frac{\alpha}{\alpha-1}
   \int_1^\alpha
   \frac{D(\rho_\beta^s\Vert\sigma)}{\beta^2}\,\dd\beta .
  \label{eq:fd-srd-integral}
\end{equation}
This is an exact specialization at $z=\alpha$, not a limit in $z$.

\subsection{The lower-order support boundary}
\label{subsec:fd-support-boundary}

For $0<\a<1$, finiteness of {\az} divergence does not necessitate the support inclusion $s(\r)\leq s(\s)$.  The missing
support is then visible as an endpoint jump.  The finite-dimensional
formula in Proposition~\ref{prop:fd-boundary-correction} also clarifies why the general theorem in
Section~\ref{sec:results-and-consequences} assumes nested supports.

\begin{lemma}[Compression cost]\label{lem:compression}
Let $B\ge0$ be a positive semidefinite matrix on a finite-dimensional Hilbert space
$\mathcal H$, let $P\in\mathcal B(\mathcal H)$ be an orthogonal projection, and let $c>0$.
Then
\begin{equation}\label{eq:compression}
  0\ \le\ \operatorname{Tr}\big[(PBP)^{c}\big]\ \le\ \operatorname{Tr}\big[B^{c}\big],
\end{equation}
with equality on the left if and only if $s(B)\le 1-P$, and on the right if and only if
$s(B)\le P$.
\end{lemma}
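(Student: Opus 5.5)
The plan is to reduce the comparison to a Loewner-order inequality between two matrices with the same spectral data, and then apply Weyl monotonicity of eigenvalues. Write $PBP=(PB^{1/2})(B^{1/2}P)=T^*T$ with $T\defeq B^{1/2}P$. The matrices $T^*T$ and $TT^*=B^{1/2}PB^{1/2}$ have the same nonzero eigenvalues with multiplicities. Since $0^c=0$ for $c>0$, it follows that
\begin{equation*}
  \operatorname{Tr}\big[(PBP)^c\big]=\operatorname{Tr}\big[(B^{1/2}PB^{1/2})^c\big].
\end{equation*}
Because $0\le P\le 1$, congruence by $B^{1/2}$ gives
\begin{equation*}
  0\le B^{1/2}PB^{1/2}\le B .
\end{equation*}
Weyl's monotonicity theorem then yields $\lambda_j(B^{1/2}PB^{1/2})\le\lambda_j(B)$ for the decreasingly ordered eigenvalues. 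Since $t\mapsto t^c$ is nondecreasing on $[0,\infty)$, summing $\lambda_j(\cdot)^c$ gives the right-hand inequality in \eqref{eq:compression}. The left-hand inequality is immediate, because $(PBP)^c\ge0$.

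For the left equality case, $\operatorname{Tr}[(PBP)^c]=0$ holds iff $PBP=0$, since a positive matrix with $\operatorname{Tr}A^c=0$ vanishes. This is in turn equivalent to $\|B^{1/2}P\|^2=\|PBP\|=0$, that is, $B^{1/2}P=0$. That says $P$ projects into $\ker B$, i.e.\ $PB=BP=0$, which is $s(B)\le 1-P$. The converse direction is clear.

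The right equality case is the main point needing care. If $s(B)\le P$, then $PB=BP=B$, so $PBP=B$ and equality holds. Conversely, suppose $\operatorname{Tr}[(PBP)^c]=\operatorname{Tr}[B^c]$. Each term satisfies $\lambda_j(B^{1/2}PB^{1/2})^c\le\lambda_j(B)^c$, and $t\mapsto t^c$ is strictly increasing, so equality of the sums forces $\lambda_j(B^{1/2}PB^{1/2})=\lambda_j(B)$ for every $j$. Summing over $j$ gives
\begin{equation*}
  \operatorname{Tr}\big[B^{1/2}(1-P)B^{1/2}\big]=0 .
\end{equation*}
The matrix $B^{1/2}(1-P)B^{1/2}=\big((1-P)B^{1/2}\big)^*\big((1-P)B^{1/2}\big)$ is positive with zero trace, hence it is zero. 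Therefore $(1-P)B^{1/2}=0$, so $(1-P)B=0$, which means $s(B)\le P$.

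I expect no genuine obstacle. The only subtle points are using Weyl's inequality for all eigenvalues rather than just the trace, which matters because $t^c$ need not be operator monotone for $c>1$, and using the strict monotonicity of $t^c$ to upgrade trace equality to eigenvalue equality.
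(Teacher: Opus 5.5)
Your proof is correct and has the same structure as the paper's. Both establish the eigenvalue domination $\lambda_j(PBP)\le\lambda_j(B)$, sum the increasing map $t\mapsto t^c$, use strict monotonicity to force eigenvalue equality, and close both equality cases with a positivity-plus-zero-trace argument. The only difference is how you obtain the domination. You use the isospectrality of $T^*T$ and $TT^*$ together with Weyl monotonicity for the Loewner inequality $B^{1/2}PB^{1/2}\le B$, whereas the paper applies Courant--Fischer max--min directly over subspaces of $P\mathcal H$. Your route makes the right-hand equality case slightly more transparent, since $B-B^{1/2}PB^{1/2}=B^{1/2}(1-P)B^{1/2}$ is manifestly positive.
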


\begin{proof}
Order eigenvalues of the compression being computed on $P\mathcal H$ decreasingly,
and extended by zeros. Denote the eigenvalues by $\lambda_i(PBP)$. For $v\in P\mathcal H$ one has $\langle v,PBPv\rangle=\langle v,Bv\rangle$,
so the Courant--Fischer-Weyl principle in the max-min form \cite[Corollary III.1.2]{Bhatia:1997}, applied over subspaces of $P\mathcal H$ rather than of $\mathcal H$, gives
\begin{equation}\label{eq:interlace}
  \lambda_{j}(PBP)\le\lambda_{j}(B)\qquad\text{for every }j.
\end{equation}
Since $t\mapsto t^{c}$ is increasing on $[0,\infty)$ and vanishes at $0$, summing
\eqref{eq:interlace} proves the right-hand inequality in \eqref{eq:compression}; the
left-hand one is trivial. (Only monotonicity is used, so the same argument gives
$\operatorname{Tr}f(PBP)\le\operatorname{Tr}f(B)$ for every increasing $f$ with $f(0)=0$.)

Suppose the right-hand equality holds. Each summand on the left is dominated by the
corresponding summand on the right, so all of them agree, and strict monotonicity of
$t\mapsto t^{c}$ upgrades this to equality in \eqref{eq:interlace} for every $j$. Summing
gives $\operatorname{Tr}[PBP]=\operatorname{Tr}[B]$, that is $\operatorname{Tr}[(1-P)B]=0$,
and $B\ge0$ forces $B^{1/2}(1-P)=0$, i.e.\ $s(B)\le P$. Conversely $s(B)\le P$ gives
$PBP=B$.

Finally, $\operatorname{Tr}[(PBP)^{c}]=0$ holds if and only if $(PBP)^{c}=0$, hence if and
only if $PBP=0$, hence if and only if $B^{1/2}P=0$, i.e.\ $s(B)\le 1-P$.
\end{proof}

\begin{proposition}[Finite-dimensional boundary correction]
\label{prop:fd-boundary-correction}
Let $0<\alpha<1$, $c>0$, and let $P=s(\sigma)$.  Define
\begin{equation}
  q_{1-,c}
  \defeq \Tr\!\left[(P\rho^{1/c}P)^c\right].
  \label{eq:fd-boundary-overlap}
\end{equation}
If $q_{1-,c}=0$, then
$D_{\alpha,c\alpha}(\rho\Vert\sigma)=+\infty$.  If $q_{1-,c}>0$, put
\begin{equation}
  \widehat\rho_c
  \defeq \frac{(P\rho^{1/c}P)^c}{q_{1-,c}}
  \quad\text{on }P\cH.
  \label{eq:fd-boundary-seed}
\end{equation}
Then, for every $0<\beta<1$, the fixed-ray escort states of $(\rho,\sigma)$ agree with those
of $(\widehat\rho_c,\sigma)$, and
\begin{equation}
  D_{\alpha,c\alpha}(\rho\Vert\sigma)
  =\frac{\alpha}{1-\alpha}
   \int_\alpha^1
   \frac{D((\widehat\rho_c)_\beta^{(c)}\Vert\sigma)}{\beta^2}
   \,\dd\beta
   -\frac{\alpha}{1-\alpha}\log q_{1-,c}.
  \label{eq:fd-alpha-c-alpha}
\end{equation}
For $c=1$, $q_{1-,1}=\Tr(P\rho)$. 

Moreover
\begin{equation}\label{eq:qbound}
  0\le q_{1-,c}\le1, 
\end{equation}
with
\begin{equation}\label{eq:endpoint-bounds}
    q_{1-,c}=1\iff s(\rho)\le s(\sigma), \qquad q_{1-,c}=0\iff s(\rho)\perp s(\sigma),
\end{equation}
so the boundary term in \eqref{eq:fd-alpha-c-alpha} is nonnegative and vanishes precisely under support inclusion.

Finally, the boundary correction arises as the limit
\begin{equation}\label{eq:endpointjump}
  \lim_{\beta\uparrow1}Q^{(c)}_{\beta}(\rho\|\sigma)=q_{1-,c}.
\end{equation}
\end{proposition}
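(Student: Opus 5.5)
The key observation is that for $0<\beta<1$ the exponent $\frac{1-\beta}{2c\beta}$ is positive. Hence $\sigma^{\frac{1-\beta}{2c\beta}}=\sigma^{\frac{1-\beta}{2c\beta}}P=P\sigma^{\frac{1-\beta}{2c\beta}}$, and the fixed-ray factor satisfies
\[
  X_\beta^{(c)}[\rho]
  =\sigma_P^{\frac{1-\beta}{2c\beta}}\,(P\rho^{1/c}P)\,\sigma_P^{\frac{1-\beta}{2c\beta}} .
\]
The seed enters only through $B\defeq P\rho^{1/c}P$, and $B=q_{1-,c}\,\widehat\rho_c^{\,1/c}$ by definition of $\widehat\rho_c$ (using $(B^c)^{1/c}=B$ when $q_{1-,c}>0$). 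Therefore
\[
  X_\beta^{(c)}[\rho]=q_{1-,c}\,X_\beta^{(c)}[\widehat\rho_c],
  \qquad
  Q_\beta^{(c)}(\rho\|\sigma)=q_{1-,c}^{\,c\beta}\,Q_\beta^{(c)}(\widehat\rho_c\|\sigma),
\]
and the scalar cancels in the escort normalization. This gives $\rho^{(c)}_\beta=(\widehat\rho_c)^{(c)}_\beta$ for all $0<\beta<1$. If $q_{1-,c}=0$ then $B=0$, so $Q_{\alpha,c\alpha}=0$ and the divergence is $+\infty$ by the stated convention.

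For the formula, I apply Proposition~\ref{prop:fd-fixed-ray} to the pair $(\widehat\rho_c,\sigma)$. This is legitimate because $\widehat\rho_c$ is a density matrix supported in $P$. It gives $D_{\alpha,c\alpha}(\widehat\rho_c\Vert\sigma)$ as the escort average, and the escort states there are the same as those of $\rho$ by the first step. Taking logarithms in the moment relation at $\beta=\alpha$,
\[
  D_{\alpha,c\alpha}(\rho\Vert\sigma)
  =\frac{1}{\alpha-1}\bigl(c\alpha\log q_{1-,c}+\log Q_{\alpha,c\alpha}(\widehat\rho_c\|\sigma)\bigr).
\]
The factor needs checking: this produces $-\frac{c\alpha}{1-\alpha}\log q_{1-,c}$, which matches the stated term only when the normalization is read appropriately. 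So I would re-examine the definition of $q_{1-,c}$ against the scaling. Most likely the intended scaling is $B=q^{1/c}\widehat\rho_c^{1/c}$, i.e.\ $\widehat\rho_c=B^c/q$ with $B^c$ having trace $q$. In that case $X$ scales by $q^{1/c}$, $Q_\beta$ scales by $q^{\beta}$, and the term is exactly $-\frac{\alpha}{1-\alpha}\log q_{1-,c}$. This is the bookkeeping step I would check carefully; it is the only real obstacle. The case $c=1$ is immediate: $\Tr(P\rho P)=\Tr(P\rho)$.

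For the bounds and equality cases, I apply Lemma~\ref{lem:compression} with the positive matrix $\rho^{1/c}$ and exponent $c$. Since $\Tr(\rho^{1/c})^c=\Tr\rho=1$, this gives $0\le q_{1-,c}\le 1$. Equality on the right holds iff $s(\rho^{1/c})=s(\rho)\le P$, and equality on the left holds iff $s(\rho)\le 1-P$, i.e.\ $s(\rho)\perp s(\sigma)$. Consequently $-\log q_{1-,c}\ge0$, with zero exactly under support inclusion.

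Finally, for the endpoint limit I use the scaling relation $Q_\beta^{(c)}(\rho\|\sigma)=q^{\beta}\,Q_\beta^{(c)}(\widehat\rho_c\|\sigma)$. Since $\widehat\rho_c$ is supported in $P$ with $\sigma_P$ invertible, $\beta\mapsto X_\beta^{(c)}[\widehat\rho_c]$ is norm continuous up to $\beta=1$. Continuity of powers and trace then gives $Q^{(c)}_\beta(\widehat\rho_c\|\sigma)\to Q_1^{(c)}=\Tr\widehat\rho_c=1$, so $Q_\beta^{(c)}(\rho\|\sigma)\to q_{1-,c}$ as $\beta\uparrow1$. This identifies the boundary correction as the jump between this limit and $Q_1^{(c)}(\rho\|\sigma)=1$.
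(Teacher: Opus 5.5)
Your argument is the paper's own: restrict the fixed-ray product to $P\cH$, scale out $q_{1-,c}$ to get $Q^{(c)}_\beta(\rho\|\sigma)=q_{1-,c}^{\beta}\,Q^{(c)}_\beta(\widehat\rho_c\|\sigma)$, apply Proposition~\ref{prop:fd-fixed-ray} to the supported pair $(\widehat\rho_c,\sigma)$, use Lemma~\ref{lem:compression} with $B=\rho^{1/c}$, and pass to $\beta\uparrow1$ by norm continuity. The only difference is at the endpoint: the paper takes the limit $(X^{(c)}_\beta)^{c\beta}\to(P\rho^{1/c}P)^c$ directly, which also covers $q_{1-,c}=0$, whereas your factor-out route needs the trivial remark that $X^{(c)}_\beta=0$ for all $\beta<1$ in that case. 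Your first scaling display is an algebra slip, not an ambiguity in the statement: the definition $\widehat\rho_c=(P\rho^{1/c}P)^c/q_{1-,c}$ gives $P\rho^{1/c}P=q_{1-,c}^{1/c}\,\widehat\rho_c^{1/c}$ directly, so discard the $q_{1-,c}^{c\beta}$ version and keep the corrected $q_{1-,c}^{\beta}$ scaling, which yields exactly the term $-\frac{\alpha}{1-\alpha}\log q_{1-,c}$.
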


\begin{proof}
The powers of $\sigma$ restrict the fixed-ray product to $P\cH$, so
\begin{equation}
  P\rho^{1/c}P=q_{1-,c}^{1/c}\widehat\rho_c^{1/c}.
\end{equation}
Consequently,
\begin{equation}
  Q_{\beta,c\beta}(\rho\|\sigma)
  =q_{1-,c}^{\beta}
   Q_{\beta,c\beta}(\widehat\rho_c\|\sigma),
  \qquad 0<\beta<1.
  \label{eq:fd-boundary-scaling}
\end{equation}
The scalar cancels in the normalized escort.  Taking logarithms at
$\beta=\alpha$ and applying Proposition~\ref{prop:fd-fixed-ray} to the
supported pair $(\widehat\rho_c,\sigma)$ proves
\eqref{eq:fd-alpha-c-alpha}.  
If $q_{1-,c}=0$, the lower-order moment
vanishes, and the convention $\log0=-\infty$ gives the asserted value.

Applying Lemma~\ref{lem:compression} with $B=\rho^{1/c}$, $P=s(\sigma)$ and using $\Tr(B^c)=\Tr(\rho)=1$ proves \eqref{eq:qbound} and \eqref{eq:endpoint-bounds}.

Finally, to prove the endpoint jump \eqref{eq:endpointjump} define
\[
    t_\beta\defeq\frac{1-\beta}{2c \beta}.
\]
As $t_\beta\longrightarrow0$, $\sigma_P^{t_\beta}\longrightarrow P $ in norm. Since, $c\beta \longrightarrow c>0$, joint norm continuity of $(A,t)\longmapsto A^t$ for positive finite dimensional matrices $A$ and exponents bounded away from zero gives
\[
   \left( X_{\beta}^{(c)}\right)^{c\beta}\longrightarrow (P \rho^{1/c} P)^c.
\]
Taking traces proves
\[
    \lim_{\beta\uparrow1} Q_\beta^{(c)}(\rho\|\sigma)= \Tr((P \rho^{1/c} P)^c)=q_{1-,c}. \qedhere
\]
\end{proof}

\begin{example}[Why the boundary term cannot be omitted]
\label{ex:two-point-boundary}
Let $\rho=(r,1-r)$ and $\sigma=(1,0)$ on $\mathbb C^2$, where
$0<r<1$.  On the sandwiched ray $c=1$, every interior escort with
$0<\beta<1$ equals $\sigma$.  The escort integral therefore vanishes, but
\begin{equation}
  D_\alpha(\rho\Vert\sigma)
  =-\frac{\alpha}{1-\alpha}\log r,
  \qquad 0<\alpha<1.
\end{equation}
This is precisely the support-boundary term with
$r=\rho(s(\sigma))$.
\end{example}

\section{Operator-algebraic preliminaries}
\label{sec:preliminaries}

We prove the integral formulas for general von Neumann algebras in the next section. Here, we collect only the material needed to state the general results.  Standard
references for Haagerup and Kosaki spaces include
\cite{Terp:1981lp,KOSAKI198429}, and the particular $\alpha$--$z$ construction used
here follows \cite{Kato:2023aro,Kato:2023hlj,Hiai:2024qve}. We also refer to \cite{Berta:2016vnw,Jencova:2016tqz,Jencova:2017txf} for sandwiched R\'enyi divergences.

\subsection{Haagerup \texorpdfstring{$L^p$}{Lp} spaces, densities, and support corners}

Let $M$ be a von Neumann algebra and choose a faithful normal semifinite
weight.  Its continuous core $\cN$ carries a canonical semifinite trace
$\tau$ and a dual action $(\theta_s)_{s\in\mathbb R}$.  For $0<p<\infty$,
the Haagerup space $L^p(M)$ consists of the $\tau$-measurable operators
affiliated with $\cN$ that satisfy
\begin{equation}
  \theta_s(x)=\e^{-s/p}x,
  \qquad s\in\mathbb R.
  \label{eq:haagerup-homogeneity}
\end{equation}
The construction is independent, up to canonical isometry, of the initial
weight.  For $p\geq1$ it is a Banach space; for $0<p<1$ it is a complete
quasi-Banach space. The standard $L^p(M)$ (quasi-)norm is denoted by
$\|\cdot\|_p$. Multiplication is understood as closed multiplication
of measurable operators and obeys the generalized H\"older inequality.

There is a canonical isometric identification
$M_*\cong L^1(M)$.  In particular, every
$\varphi\in M_*^+$ has a unique density
$h_\varphi\in L^1(M)_+$.
Moreover, there is a canonical functional $\tr$ on $L^1(M)$ such that
\begin{equation}
  \tr(h_\varphi)=\varphi(\one),
  \label{eq:density-normalization}
\end{equation}
and
\begin{equation}
    \tr(|h_{\varphi}|)=\tr(h_{|\varphi|})=|\varphi|(\one)=\|\varphi\|.
\end{equation}
The support of $h_\varphi$ corresponds to the support projection
$s(\varphi)\in M$.  For a projection $e\in M$, there is a canonical
identification
\begin{equation}
  L^p(eMe)\cong eL^p(M)e.
  \label{eq:corner-identification}
\end{equation}
In particular, after replacing $M$ by
$eMe$, $e=s(\omega)$, a normal state $\omega$ becomes faithful.  Since a
faithful normal state exists on this corner, no global
$\sigma$-finiteness hypothesis on $M$ is needed in our main results.

We write $D(\psi\Vert\omega)$ for Araki relative entropy of normal states,
with value in $[0,+\infty]$.  We extend the first argument to a nonzero
normal positive functional $\nu=q\varphi$, where $q=\nu(\one)>0$ and
$\varphi$ is a state, by the standard scaling law
\begin{equation}
  D(q\varphi\Vert\omega)
  \defeq qD(\varphi\Vert\omega)+q\log q.
  \label{eq:araki-relative-entropy-scaling}
\end{equation}
Its normalized relative entropy is therefore
\begin{equation}
  D_1(\nu\Vert\omega)
  \defeq \frac{D(\nu\Vert\omega)}{\nu(\one)}
  =D(\varphi\Vert\omega)+\log q.
  \label{eq:relative-entropy-scaling}
\end{equation}
Here and below the second argument is a state.

\subsection{Algebraic \texorpdfstring{$\alpha$--$z$}{alpha-z} and
sandwiched R\'enyi divergences}

Let $\psi$ and $\omega$ be normal states, and put $e=s(\omega)$.  For
$0<\alpha<1$ and $z>0$, define the closed positive product
\begin{equation}
  x_{\alpha,z}(\psi\|\omega)
  \defeq \overline{
    h_\omega^{\frac{1-\alpha}{2z}}
    h_\psi^{\frac\alpha z}
    h_\omega^{\frac{1-\alpha}{2z}}
  }\in eL^z(M)_+e
  \label{eq:lower-alpha-z-factor}
\end{equation}
and
\begin{equation}
  Q_{\alpha,z}(\psi\|\omega)
  \defeq \tr\bigl(x_{\alpha,z}(\psi\|\omega)^z\bigr).
  \label{eq:lower-alpha-z-moment}
\end{equation}
In more detail, set
\begin{equation}
    A = h_\omega^{\frac{1-\alpha}{2z}},
    \qquad
    C = h_\psi^{\frac\alpha{2z}},
\end{equation}
and define
\begin{equation}
    T_{\alpha,z}
    \defeq \overline{CA},
    \qquad
    x_{\alpha,z}
    =T_{\alpha,z}^*T_{\alpha,z}.
\end{equation}
Equivalently, in the associative $*$-algebra of $\tau$-measurable operators, $x_{\alpha,z}$ is the positive self-adjoint product
\begin{equation}
    x_{\alpha,z}
    =T_{\alpha,z}^*T_{\alpha,z}
    =\overline{AC^2A}
    =\overline{
    h_\omega^{\frac{1-\alpha}{2z}}
    h_\psi^{\frac\alpha z}
    h_\omega^{\frac{1-\alpha}{2z}}
  }.
\end{equation}
For $\alpha>1$, the moment $Q_{\alpha,z}(\psi\|\omega)$ is finite when there is a positive factor \cite{Kato:2023aro, Kato:2023hlj}
\begin{equation}
  x_{\alpha,z}(\psi\|\omega)\in eL^z(M)_+e
  \label{eq:upper-alpha-z-corner}
\end{equation}
satisfying
\begin{equation}
  h_\psi^{\alpha/z}
  =h_\omega^{\frac{\alpha-1}{2z}}
   x_{\alpha,z}(\psi\|\omega)
   h_\omega^{\frac{\alpha-1}{2z}}.
  \label{eq:upper-alpha-z-factor}
\end{equation}
The factor in the support corner is unique.  In this case
\begin{equation}
  Q_{\alpha,z}(\psi\|\omega)
  \defeq \tr\bigl(x_{\alpha,z}(\psi\|\omega)^z\bigr);
  \label{eq:upper-alpha-z-moment}
\end{equation}
otherwise it is $+\infty$.  The {\az} divergence is
\begin{equation}
  D_{\alpha,z}(\psi\Vert\omega)
  \defeq \frac{1}{\alpha-1}
  \log Q_{\alpha,z}(\psi\|\omega).
  \label{eq:algebraic-alpha-z-divergence}
\end{equation}

The same factor and moment definitions apply to a nonzero normal positive
first argument $\nu$ after replacing $h_\psi$ by $h_\nu$.  In particular,
for $q>0$ and a state $\varphi$,
\begin{equation}
  Q_{\alpha,z}(q\varphi\|\omega)
  =q^\alpha Q_{\alpha,z}(\varphi\|\omega).
  \label{eq:alpha-z-moment-homogeneity}
\end{equation}
For such an argument we employ the normalized convention
\begin{equation}
  D_{\alpha,z}(\nu\Vert\omega)
  \defeq \frac{1}{\alpha-1}
  \log\frac{Q_{\alpha,z}(\nu\|\omega)}{\nu(\one)}.
  \label{eq:normalized-alpha-z-positive}
\end{equation}
For every fixed $z>0$, its lower-order limit is
\begin{equation}
  \lim_{\alpha\uparrow1}D_{\alpha,z}(\nu\Vert\omega)
  =D_1(\nu\Vert\omega).
  \label{eq:lower-alpha-z-limit}
\end{equation}
The upper-order limit $\alpha\downarrow1$ is known under the additional
conditions $z>1/2$ and
$D_{\alpha_0,z}(\nu\Vert\omega)<\infty$ for some
$\alpha_0\in(1,2z]$ \cite[Theorems~6.7 and~6.9]{Hiai:2024qve}.

The sandwiched quantities are the specialization
\begin{equation}
  Q_\alpha(\psi\|\omega)
  \defeq Q_{\alpha,\alpha}(\psi\|\omega),
  \qquad
  D_\alpha(\psi\Vert\omega)
  \defeq D_{\alpha,\alpha}(\psi\Vert\omega).
  \label{eq:srd-specialization}
\end{equation}
For $0<\alpha<1$ this means
\begin{equation}
  x_\alpha^s(\psi\|\omega)
  \defeq \overline{
  h_\omega^{\frac{1-\alpha}{2\alpha}}
  h_\psi
  h_\omega^{\frac{1-\alpha}{2\alpha}}},
  \qquad
  Q_\alpha(\psi\|\omega)
  =\tr\bigl(x_\alpha^s(\psi\|\omega)^\alpha\bigr).
  \label{eq:lower-srd-factor}
\end{equation}
For $\alpha>1$ with finite moment, let
$x_\alpha^s(\psi\|\omega)\in eL^\alpha(M)_+e$ be the upper factor
satisfying
\begin{equation}
  h_\psi
  =h_\omega^{\frac{\alpha-1}{2\alpha}}
   x_\alpha^s(\psi\|\omega)
   h_\omega^{\frac{\alpha-1}{2\alpha}},
  \qquad
  Q_\alpha(\psi\|\omega)
  =\tr\bigl(x_\alpha^s(\psi\|\omega)^\alpha\bigr).
  \label{eq:upper-srd-factor}
\end{equation}
At the endpoint we set
\begin{equation}
  x_1^s(\psi\|\omega)\defeq h_\psi.
  \label{eq:srd-factor-at-one}
\end{equation}

The full data-processing region for $D_{\alpha,z}$ is \cite{Zhang:2018roy, Hiai:2024qve}
\begin{equation}
  \begin{cases}
  0<\alpha<1,
  &z\geq\max\{\alpha,1-\alpha\},\\[0.4ex]
  \alpha>1,
  &\max\{\alpha/2,\alpha-1\}\leq z\leq\alpha.
  \end{cases}
  \label{eq:dpi-region}
\end{equation}
This restriction will be imposed only for
consequences that use data processing, not for the integral theorem.

\subsection{Fixed-ray factors and escort trajectories}
\label{subsec:algebraic-ray-escorts}

Fix $c>0$ and keep $e=s(\omega)$.  For $0<\beta<1$, define
\begin{equation}
  x_\beta^{(c)}(\psi\|\omega)
  \defeq \overline{
  h_\omega^{\frac{1-\beta}{2c\beta}}
  h_\psi^{1/c}
  h_\omega^{\frac{1-\beta}{2c\beta}}}
  \in eL^{c\beta}(M)_+e.
  \label{eq:lower-fixed-ray-factor}
\end{equation}
At $\beta=1$ set
$x_1^{(c)}=h_\psi^{1/c}$ and
$Q_1^{(c)}(\psi\|\omega)=Q_{1,c}(\psi\|\omega)\defeq 1$.  If
$s(\psi)\nleq e$, this is a formal endpoint outside the corner; it belongs
to $eL^c(M)e$ precisely for a supported pair.  This is the endpoint jump
measured by the boundary terms below.  
For $\beta>1$, whenever it exists, let
$x_\beta^{(c)}\in eL^{c\beta}(M)_+e$ be the unique positive factor such that
\begin{equation}
  h_\psi^{1/c}
  =h_\omega^{\frac{\beta-1}{2c\beta}}
   x_\beta^{(c)}
   h_\omega^{\frac{\beta-1}{2c\beta}}.
  \label{eq:upper-fixed-ray-factor}
\end{equation}
In both cases put
\begin{equation}
  Q_\beta^{(c)}(\psi\|\omega)
  \defeq \tr\bigl((x_\beta^{(c)})^{c\beta}\bigr)
  =Q_{\beta,c\beta}(\psi\|\omega),
  \label{eq:algebraic-ray-moment}
\end{equation}
with $ Q_\beta^{(c)}(\psi\|\omega)=+\infty$ when the factor $x_\beta^{(c)}$ does not exist.
Whenever this number is finite and nonzero, the fixed-ray escort state $\psi_\beta^{(c)}$ is defined in terms of its associated $L^1$ density
\begin{equation}
  h_{\psi_\beta^{(c)}}
  \defeq \frac{(x_\beta^{(c)})^{c\beta}}
          {Q_\beta^{(c)}(\psi\|\omega)} \ .
  \label{eq:algebraic-ray-escort}
\end{equation}
At $c=1$ these are precisely the sandwiched factors and escorts.  We set
$\psi_\beta^s\defeq \psi_\beta^{(1)}$ and have
\begin{equation}
  x_\beta^{(1)}=x_\beta^s,
  \qquad
  Q_\beta^{(1)}=Q_\beta,
  \qquad \psi_\beta^{(1)}=\psi_\beta^s.
  \label{eq:sandwiched-ray-specialization}
\end{equation}

\section{Main results and consequences}
\label{sec:results-and-consequences}

We now state the integral representations for arbitrary von Neumann
algebras and collect their consequences.  Their proofs are postponed to
Sections~\ref{sec:proof-lower-srd} and \ref{sec:proof-alpha-z}.

For $\alpha>0$, $\alpha\ne1$, define
\begin{equation}
  I_\alpha
  \defeq \begin{cases}
     [1,\alpha],&\alpha>1,\\
     [\alpha,1],&0<\alpha<1,
     \end{cases}
  \label{eq:order-interval}
\end{equation}
and the probability measure
\begin{equation}
  \dd\mu_\alpha(\beta)
  \defeq \begin{cases}
  \displaystyle
  \frac{\alpha}{\alpha-1}\frac{\dd\beta}{\beta^2},
  &\alpha>1,\\[1.2ex]
  \displaystyle
  \frac{\alpha}{1-\alpha}\frac{\dd\beta}{\beta^2},
  &0<\alpha<1.
  \end{cases}
  \label{eq:order-probability-measure}
\end{equation}

For clarity, Table~\ref{tab:scope} separates the analytic theorem from
the narrower parameter ranges imposed by data processing or by an imported
operational result.

\begin{table}[ht]
\centering
\small
\begin{tabular}{L{0.23\textwidth}L{0.29\textwidth}L{0.31\textwidth}}
\hline
Result & Parameters & Algebra;  additional hypotheses \\
\hline
Fixed-ray integral, lower
& $0<\alpha<1$, $z>0$
& Arbitrary; $s(\psi)\leq s(\omega)$ \\
&&\\
Fixed-ray integral, upper
& $\alpha>1$, $z>0$
& Arbitrary; $Q_{\alpha,z}(\psi\|\omega)<\infty$ \\
&&\\
Boundary-corrected integral
& $0<\alpha<1$, $z>0$
& Arbitrary; no support inclusion \\
&&\\
One-shot testing
& upper part of \eqref{eq:dpi-region}
& Arbitrary \\
&&\\
Strong converse
& sandwiched orders $\alpha>1$
& Hypotheses of the cited operational theorems \\
&&\\
Work reliability
& sandwiched orders $0<\alpha<1$
& finite dimensional; thermal operations and the battery model of
  \cite{Watanabe:2026pzw} \\
&&\\
Pair conversion
& $0<\alpha<1$, $z>\max\{\alpha,1-\alpha\}$
& finite dimensional; restricted class $\mathcal F$ and source conditions \\
\hline
\end{tabular}
\caption{Parameter ranges and hypotheses. ``Arbitrary'' here means an arbitrary
von Neumann algebra.}
\label{tab:scope}
\end{table}

\subsection{Integral theorems and the support boundary}

\begin{theorem}[Fixed-ray $\alpha$--$z$ integral representation]
\label{thm:fixed-ray-integral}
Let $M$ be an arbitrary von Neumann algebra and let $\psi,\omega$ be
normal states.  Fix $\alpha,z>0$, $\alpha\ne1$, and put $c=z/\alpha$.
Assume, branchwise, that
\begin{equation}
  \begin{cases}
  s(\psi)\leq s(\omega),&0<\alpha<1,\\
  Q_{\alpha,z}(\psi\|\omega)<\infty,&\alpha>1.
  \end{cases}
  \label{eq:main-branchwise-hypotheses}
\end{equation}
In the upper branch, finiteness already implies support inclusion.  Then,
for every $\beta\in I_\alpha$, the fixed-ray factor $x_\beta^{(c)}$ exists
and its moment is finite and nonzero, so that
\eqref{eq:algebraic-ray-escort} defines a normal state.  Moreover,
\begin{equation}
  D_{\alpha,z}(\psi\Vert\omega)
  =\frac{\alpha}{\alpha-1}
   \int_1^\alpha
   \frac{D(\psi_\beta^{(c)}\Vert\omega)}{\beta^2}
   \,\dd\beta
  =\int_{I_\alpha}
   D(\psi_\beta^{(c)}\Vert\omega)\,
   \dd\mu_\alpha(\beta).
  \label{eq:main-fixed-ray-integral}
\end{equation}
The first integral is oriented.  Both integrals are Lebesgue integrals
determined by the open interval between $1$ and $\alpha$; consequently, an
infinite relative-entropy value at an endpoint does not affect them.
\end{theorem}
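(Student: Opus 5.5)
We follow the finite-dimensional strategy of Proposition~\ref{prop:fd-fixed-ray}: show that $g_c(\beta)\defeq\log Q_\beta^{(c)}(\psi\|\omega)$ is finite on $I_\alpha$, that $\beta\mapsto (\beta-1)\beta^{-1}D_{\beta,c\beta}=g_c(\beta)/\beta$ is absolutely continuous there, and that its derivative equals $\beta^{-2}D(\psi_\beta^{(c)}\Vert\omega)$ almost everywhere. Integrating from $1$ (where $g_c(1)=0$) to $\alpha$ then gives \eqref{eq:main-fixed-ray-integral}. First we reduce to the corner $eMe$, $e=s(\omega)$, using \eqref{eq:corner-identification}. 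This is legitimate in the lower branch by the hypothesis $s(\psi)\le e$, and in the upper branch because \eqref{eq:upper-alpha-z-factor} forces $s(\psi)\le e$. After the reduction $\omega$ is faithful.

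\textbf{Existence and positivity along the ray.} For $\beta\in(0,1)$, the factor \eqref{eq:lower-fixed-ray-factor} lies in $L^{c\beta}$ by H\"older's inequality. Its moment is nonzero because $h_\omega$ is nonsingular in the corner and $h_\psi\neq0$. For $\beta\in(1,\alpha]$ we must show that finiteness at $\alpha$ propagates to every intermediate order. The plan is to anchor at the single element $x_\alpha^{(c)}\in L^{c\alpha}$ and to write
\[
h_\psi^{1/c}=h_\omega^{\frac{\beta-1}{2c\beta}}\Bigl(h_\omega^{\frac{\alpha-\beta}{2c\alpha\beta}}x_\alpha^{(c)}h_\omega^{\frac{\alpha-\beta}{2c\alpha\beta}}\Bigr)h_\omega^{\frac{\beta-1}{2c\beta}},
\]
so that the bracket is a candidate for $x_\beta^{(c)}$. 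Its membership in $L^{c\beta}$ follows from H\"older's inequality, since $\tfrac{1}{c\alpha}+\tfrac{\alpha-\beta}{c\alpha\beta}=\tfrac1{c\beta}$. Uniqueness of the corner factor then identifies it with $x_\beta^{(c)}$. The same device, anchored at $h_\psi^{1/c}$, produces the lower-branch factors. It also shows that $\beta\mapsto x_\beta^{(c)}$ is the image of one fixed element under the analytic family of weights $h_\omega^{it}$-type multipliers. This is the weighted $L^p$ scale on which interpolation will act.

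\textbf{Derivative identity.} We next establish an exact quotient identity that replaces differentiation. For $\beta,\gamma$ in the interval, we express $Q_\gamma^{(c)}$ through $x_\beta^{(c)}$ as an $\alpha$--$z$ type moment of the escort $\psi_\beta^{(c)}$ relative to $\omega$ at a new order $r=r(\beta,\gamma)$ with $r\to1$ as $\gamma\to\beta$. Concretely, $(x_\beta^{(c)})^{c\beta}/Q_\beta=h_{\psi_\beta^{(c)}}$, and conjugating $x_\beta$ by the power $h_\omega^{(\beta-\gamma)/(2c\beta\gamma)}$ gives $x_\gamma$. Reading this through the ray parametrization, we aim for
\[
\frac{g_c(\gamma)}{\gamma}-\frac{g_c(\beta)}{\beta}=\frac{r-1}{\gamma\,r}\,D_{r,c'}\bigl(\psi_\beta^{(c)}\Vert\omega\bigr)\quad\text{(up to exact parameter bookkeeping).}
\]
For $\gamma\uparrow\beta$ the order $r$ lies below one. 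The lower-order limit \eqref{eq:lower-alpha-z-limit}, which holds for every fixed $z$, then identifies the left derivative of $g_c(\beta)/\beta$ as $\beta^{-2}D(\psi_\beta^{(c)}\Vert\omega)$. This explains why only the lower-order limit is needed, avoiding the restrictive upper-limit hypotheses.

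\textbf{From derivative to integral.} A left derivative alone does not give the fundamental theorem of calculus. We therefore show that $\beta\mapsto g_c(\beta)/\beta$ is convex, or at least continuous, on the closed interval, with finite values in the interior. Convexity in $1/\beta$ of $\log\|x\|$ along the scale would follow from three-lines/Riesz--Thorin-type interpolation for Haagerup $L^p$. For exponents $c\beta<1$ we invoke the quasi-Banach all-$p$ interpolation of Gu--Yin--Zhang \cite{Gu:2019jfc}. A convex function is absolutely continuous on compact subintervals of the open interval, and its left derivative agrees with the a.e.\ derivative there. Since the integrand is nonnegative, monotone convergence handles the approach to the endpoints, provided $g_c$ is continuous at $\beta=1$ and at $\beta=\alpha$. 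At $1$ this is endpoint norm continuity $x_\beta^{(c)}\to h_\psi^{1/c}$ in the relevant quasi-norm, which uses support nestedness; without it the jump $q_{1-,c}$ appears. At $\alpha$ it comes from the anchored representation. Infinite values of the integrand at endpoints are harmless because the integrals are Lebesgue integrals. The main obstacle is this last step: the unboundedness of the intermediate operators and the order-dependent spaces make both the convexity/interpolation argument below $p=1$ and the endpoint continuity delicate. I would first try to prove log-convexity of $\beta\mapsto\|x_\beta^{(c)}\|_{c\beta}^{c\beta}$ in the variable $1/\beta$ directly via the analytic family $h_\omega^{w}x_\alpha h_\omega^{w}$.
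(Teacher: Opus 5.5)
Your outline matches the paper's: corner reduction, anchoring at $h_\psi^{1/c}$ or at $x_{\alpha,z}(\psi\|\omega)$, the downward weighted scale identified with the fixed-ray factors by uniqueness, an exact quotient identity, interpolation-based convexity, and the convex fundamental theorem of calculus. Two steps, however, are not secured.

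\textbf{The limit in the derivative identity.} With $d=\frac{\beta-\gamma}{2c\beta\gamma}$ one has $x_\gamma^{(c)}=Q_\beta^{1/(c\beta)}\,h_\omega^{d}\,h_{\psi_\beta^{(c)}}^{1/(c\beta)}\,h_\omega^{d}$. Matching this with an $\alpha$--$z$ factor forces order $r=\gamma/\beta$ \emph{and} second parameter $c\gamma$. The exact identity is
\[
  \frac{g_c(\gamma)}{\gamma}-\frac{g_c(\beta)}{\beta}
  =\frac{r-1}{\gamma}\,
   D_{\gamma/\beta,\,c\gamma}\bigl(\psi_\beta^{(c)}\Vert\omega\bigr),
\]
without your extra factor $1/r$. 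So as $\gamma\uparrow\beta$ the pair $(r,z)$ runs along the moving characteristic $z=c\beta r$. The fixed-$z$ limit \eqref{eq:lower-alpha-z-limit} therefore does not apply, since nothing gives uniformity in $z$. The missing ingredient is monotonicity of $z\mapsto D_{r,z}$ for $0<r<1$. For $\gamma$ near $\beta$ it gives $D_{r,c\beta/2}\le D_{r,c\gamma}\le D_{r,c\beta}$. Both outer terms have fixed second parameter, so the fixed-$z$ endpoint theorem squeezes the middle one to $D(\psi_\beta^{(c)}\Vert\omega)$. This is the paper's argument.

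\textbf{Continuity at the anchor.} You flag this but leave it open, and the route you propose cannot close it. Since $Y_p$ and $Y_r$ lie in different Haagerup spaces, what is needed is $\|Y_p\|_p\to\|Y_r\|_r$ as $p\uparrow r$. Contractivity and interpolation on $(0,r]$ only give $\limsup_{p\uparrow r}\|Y_p\|_p\le\|Y_r\|_r$. A convex function may jump upward at a closed endpoint, and such a jump is exactly the mechanism behind the boundary term in Proposition~\ref{prop:general-fixed-ray-boundary}.

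Note also that the delicate endpoint is always the anchor. In the lower branch it is $\beta=1$. In the upper branch it is $\beta=\alpha$, while $\beta=1$ is interior and continuity there is automatic. So in the upper branch, ``at $\alpha$ it comes from the anchored representation'' is the unproved claim itself.

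The paper supplies the lower bound by approximating the anchor in $L^r$ by core elements $i_r^\omega(a_n)=h_\omega^{1/(2r)}a_nh_\omega^{1/(2r)}$, with $a_n$ entire for $\sigma^\omega$. These sit on the whole scale, including above $r$. Interpolating between $p<r$ and a fixed $s>r$ then gives $\|i^\omega_r(a)\|_r\le\|i^\omega_p(a)\|_p^{1-\theta(p)}\|i^\omega_s(a)\|_s^{\theta(p)}$ with $\theta(p)\to0$. Contractivity of the downward maps makes the error $\|Y_p-i^\omega_p(a_n)\|_p\le\|Y_r-i^\omega_r(a_n)\|_r$ uniform in $p\le r$. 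The $p$-triangle inequality handles the quasi-norm range. The same density argument is what transfers the Gu--Yin--Zhang inequality from the core to an arbitrary anchor.

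Finally, interpolation gives convexity of $\log\|x^{(c)}_\beta\|_{c\beta}$ in $1/\beta$, equivalently of $g_c$ in $\beta$. The log-convexity of $\|x^{(c)}_\beta\|^{c\beta}_{c\beta}$ in $1/\beta$ proposed in your last sentence is a different statement, and it fails already for commuting pairs.
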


The theorem is stated on the original algebra, but its construction may be
performed after passing to the support corner
$eMe$, $e=s(\omega)$.  There $\omega$ is faithful and the corner is
$\sigma$-finite.  Thus neither faithfulness on the original algebra nor
global $\sigma$-finiteness is an additional hypothesis.

There is also a sharp lower-order formula without support inclusion.  
\begin{proposition}[General lower-order fixed-ray boundary]
\label{prop:general-fixed-ray-boundary}
Let $M$ be a von Neumann algebra, let $\psi,\omega$ be normal states, let
$0<\alpha<1$, and let $c>0$.  Put
$e=s(\omega)$, $\omega_e\defeq \omega\vert_{eMe}$, and, under the canonical
identification $L^c(eMe)\cong eL^c(M)e$, define
\begin{equation}
  y_c\defeq \overline{e h_\psi^{1/c}e}\in L^c(eMe)_+,
  \qquad
  q_c\defeq \tr_{eMe}(y_c^c).
  \label{eq:general-boundary-anchor}
\end{equation}
If $q_c=0$, then
$D_{\alpha,c\alpha}(\psi\Vert\omega)=+\infty$.  If $q_c>0$, define a
normal state $\widehat\psi_c$ on $eMe$ by
\begin{equation}
  h_{\widehat\psi_c}\defeq q_c^{-1}y_c^c.
  \label{eq:general-boundary-seed}
\end{equation}
Then, for every $0<\beta<1$, the fixed-ray escort of $(\psi,\omega)$ is
the canonical extension to $M$ of the fixed-ray escort of
$(\widehat\psi_c,\omega_e)$, and
\begin{equation}
  D_{\alpha,c\alpha}(\psi\Vert\omega)
  =\frac{\alpha}{1-\alpha}
   \int_\alpha^1
   \frac{D((\widehat\psi_c)_\beta^{(c)}\Vert\omega_e)}{\beta^2}
   \,\dd\beta
   -\frac{\alpha}{1-\alpha}\log q_c.
  \label{eq:general-fixed-ray-boundary}
\end{equation}
Moreover,
\begin{equation}\label{eq:boundary-cost-bound}
    0\leq q_c \leq 1.
\end{equation}
\end{proposition}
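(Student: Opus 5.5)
The plan mirrors the finite-dimensional proof of Proposition~\ref{prop:fd-boundary-correction}, transported to Haagerup $L^p$ spaces, and then invokes Theorem~\ref{thm:fixed-ray-integral} for the supported pair $(\widehat\psi_c,\omega_e)$ on the corner $eMe$.

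\emph{Step 1: factorization through the corner.} Fix $0<\beta<1$ and put $t_\beta=(1-\beta)/(2c\beta)>0$. Since $h_\omega^{t_\beta}=e\,h_\omega^{t_\beta}=h_\omega^{t_\beta}e$, we have $h_\omega^{t_\beta}=h_\omega^{t_\beta}e$ and $e\,h_\omega^{t_\beta}$ as closed products of measurable operators. Hence, by associativity in the $*$-algebra of $\tau$-measurable operators,
\[
  x_\beta^{(c)}(\psi\|\omega)
  =\overline{h_\omega^{t_\beta}\,(e h_\psi^{1/c}e)\,h_\omega^{t_\beta}}
  =\overline{h_{\omega}^{t_\beta}\,y_c\,h_\omega^{t_\beta}} .
\]
If $q_c>0$, write $y_c=q_c^{1/c}\,h_{\widehat\psi_c}^{1/c}$; this holds because $y_c\ge0$ lies in $L^c(eMe)_+$, so $y_c=(y_c^c)^{1/c}$. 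Via the identification \eqref{eq:corner-identification}, and using $h_{\omega_e}=h_\omega$ in $eL^1(M)e$, this gives
\[
  x_\beta^{(c)}(\psi\|\omega)=q_c^{1/c}\,x_\beta^{(c)}(\widehat\psi_c\|\omega_e).
\]
Raising to the power $c\beta$ and taking $\tr$ yields $Q_\beta^{(c)}(\psi\|\omega)=q_c^{\beta}\,Q_\beta^{(c)}(\widehat\psi_c\|\omega_e)$. The scalar cancels in \eqref{eq:algebraic-ray-escort}, so the escorts coincide. They are extended from $eMe$ to $M$ by zero, since their densities lie in $eL^1(M)e$.

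\emph{Step 2: the divergence formula.} Taking $\beta=\alpha$ gives
\[
  D_{\alpha,c\alpha}(\psi\Vert\omega)=D_{\alpha,c\alpha}(\widehat\psi_c\Vert\omega_e)-\frac{\alpha}{1-\alpha}\log q_c .
\]
The state $\widehat\psi_c$ is a normal state on $eMe$ with $s(\widehat\psi_c)\le e=s(\omega_e)$, and $\omega_e$ is faithful. Theorem~\ref{thm:fixed-ray-integral}, lower branch, applied on $eMe$ therefore gives \eqref{eq:general-fixed-ray-boundary}. If instead $q_c=0$, then $y_c^c=0$ forces $y_c=0$, so $x_\alpha^{(c)}=0$, $Q=0$, and $D=+\infty$ by the $\log 0$ convention.

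\emph{Step 3: the bound $0\le q_c\le1$.} This is the main obstacle, because the eigenvalue interlacing of Lemma~\ref{lem:compression} is unavailable. I would prove the general compression inequality $\tr((eBe)^c)\le\tr(B^c)$ for $B=h_\psi^{1/c}\in L^c(M)_+$, using generalized singular values $\mu_t$ in the core $(\cN,\tau)$. The key fact is $\mu_t(eBe)\le\|e\|\,\mu_t(B)\,\|e\|=\mu_t(B)$, which is Fack--Kosaki. Because $\|x\|_c^c=\int_0^\infty\mu_t(x)^c\,dt$ for measurable $x$ (equivalently $\|x\|_c^c=\tr(|x|^c)$ in Haagerup's picture), integrating the singular-value bound gives $q_c=\|eBe\|_c^c\le\|B\|_c^c=\tr(h_\psi)=\psi(\one)=1$.

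For exponents $c<1$ the same argument works, since $\|\cdot\|_c$ is still given by the $\mu_t$ integral. A simpler alternative is the identity $eBe=(B^{1/2}e)^*(B^{1/2}e)$, which reduces the claim to $\mu(eB^{1/2})\le\mu(B^{1/2})$. The lower bound $q_c\ge0$ is trivial.
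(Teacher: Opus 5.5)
Your Steps 1 and 2 are the paper's argument. You insert $e$ next to $h_\psi^{1/c}$ using $h_\omega^{t}=eh_\omega^{t}=h_\omega^{t}e$, and rescale $y_c=q_c^{1/c}h_{\widehat\psi_c}^{1/c}$ to get $Q_{\beta,c\beta}(\psi\|\omega)=q_c^\beta Q_{\beta,c\beta}(\widehat\psi_c\|\omega_e)$. You then cancel the scalar in the escort, evaluate at $\beta=\alpha$, and apply Theorem~\ref{thm:fixed-ray-integral} on the corner. The case $q_c=0$ is handled the same way, though you should say that $\tr(y_c^c)=0$ gives $y_c^c=0$ by faithfulness of $\tr$ on $L^1_+$.

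Step 3, however, rests on a false formula. In Haagerup's picture the generalized singular values are taken with respect to the trace $\tau$ on the core $\mathcal N$. For $0\neq x\in L^c(M)$, the homogeneity $\theta_s(x)=\e^{-s/c}x$ together with $\tau\circ\theta_s=\e^{-s}\tau$ forces
\[
  \mu_t(x)=t^{-1/c}\|x\|_c,\qquad t>0,
\]
so $\int_0^\infty\mu_t(x)^c\,\dd t=+\infty$. The identity $\|x\|_c^c=\int_0^\infty\mu_t(x)^c\,\dd t$ describes the semifinite space $L^c(\mathcal N,\tau)$, which meets $L^c(M)$ only in $0$. It is not equivalent to $\|x\|_c^c=\tr(|x|^c)$, because $\tr\neq\tau$. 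As written, integrating $\mu_t(eBe)\le\mu_t(B)$ only yields $\infty\le\infty$, for every $c>0$.

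The repair is immediate. Your Fack--Kosaki bound at the single point $t=1$, combined with the displayed relation, gives $\|eBe\|_c\le\|B\|_c$ for all $c>0$. Equivalently, and this is what the paper does, use the bounded-factor H\"older inequality $\|axb\|_c\le\|a\|_\infty\|x\|_c\|b\|_\infty$, valid for all $c>0$. Then compression by $e$ is a contraction on $L^c(M)$, and $q_c=\|eh_\psi^{1/c}e\|_c^c\le\|h_\psi^{1/c}\|_c^c=\tr(h_\psi)=1$. So this step is a one-line estimate, not the main obstacle.
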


\begin{corollary}[Sandwiched integral representation]
\label{cor:srd-integral}
Under the hypotheses of Theorem~\ref{thm:fixed-ray-integral}, set
$z=\alpha$, so that $c=1$.  Then
\begin{equation}
  D_\alpha(\psi\Vert\omega)
  =\frac{\alpha}{\alpha-1}
   \int_1^\alpha
   \frac{D(\psi_\beta^s\Vert\omega)}{\beta^2}\,\dd\beta .
  \label{eq:main-srd-integral}
\end{equation}
Analytically, this holds for every $0<\alpha<1$ under
$s(\psi)\leq s(\omega)$, and for every
$\alpha>1$ for which $Q_\alpha(\psi\|\omega)<\infty$.
\end{corollary}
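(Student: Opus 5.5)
The plan is to obtain the corollary as the specialization $c=1$ of Theorem~\ref{thm:fixed-ray-integral}. No new analysis is needed, only bookkeeping that the objects on both sides agree.

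First I will check that, with $z=\alpha$ and hence $c=z/\alpha=1$, the left-hand side of \eqref{eq:main-fixed-ray-integral} is $D_{\alpha,\alpha}(\psi\Vert\omega)$. By \eqref{eq:srd-specialization} this is $D_\alpha(\psi\Vert\omega)$.

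Second, I will identify the integrand. At $c=1$ the fixed-ray factor \eqref{eq:lower-fixed-ray-factor} reads $\overline{h_\omega^{(1-\beta)/2\beta}h_\psi h_\omega^{(1-\beta)/2\beta}}$ for $0<\beta<1$, which is exactly $x_\beta^s$ in \eqref{eq:lower-srd-factor}. For $\beta>1$, the defining relation \eqref{eq:upper-fixed-ray-factor} becomes \eqref{eq:upper-srd-factor}. The upper factor is unique in the corner $eL^\beta(M)_+e$, so the two factors coincide. At $\beta=1$ both are $h_\psi$ by \eqref{eq:srd-factor-at-one}. Hence $Q^{(1)}_\beta=Q_\beta$ and $\psi^{(1)}_\beta=\psi^s_\beta$ for every $\beta\in I_\alpha$, as recorded in \eqref{eq:sandwiched-ray-specialization}. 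The integrand $D(\psi_\beta^{(1)}\Vert\omega)/\beta^2$ is therefore $D(\psi_\beta^s\Vert\omega)/\beta^2$.

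Third, I will match the hypotheses. The branchwise conditions \eqref{eq:main-branchwise-hypotheses} with $z=\alpha$ are $s(\psi)\le s(\omega)$ for $0<\alpha<1$, and $Q_{\alpha,\alpha}(\psi\|\omega)=Q_\alpha(\psi\|\omega)<\infty$ for $\alpha>1$. These are exactly the conditions in the final sentence of the corollary.

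Theorem~\ref{thm:fixed-ray-integral} imposes no data-processing restriction and holds for every $z>0$. The identity therefore holds for all such $\alpha$, including $0<\alpha<1/2$, where the sandwiched divergence lies outside the data-processing region \eqref{eq:dpi-region}. The only point needing care is the uniqueness of the upper factor at $c=1$, which is used to identify $x^{(1)}_\beta$ with $x^s_\beta$. This is supplied by the uniqueness statement after \eqref{eq:upper-alpha-z-factor}, so I expect no genuine obstacle.
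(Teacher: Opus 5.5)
Your proposal is correct and is essentially the paper's own argument: the paper obtains the corollary by setting $c=1$ in Theorem~\ref{thm:fixed-ray-integral}, using the identifications of factors, moments and escorts recorded in \eqref{eq:sandwiched-ray-specialization}. The paper also gives an independent direct proof for $1/2\le\alpha<1$ in Section~\ref{sec:proof-lower-srd}, which avoids the quasi-Banach interpolation input; this is supplementary and not needed for your route.
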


For lower orders, the nested-support assumption in
Corollary~\ref{cor:srd-integral} is essential.  The general statement for
normal states is the following.

\begin{corollary}[Support-boundary correction for lower-order SRD]
\label{cor:srd-support-boundary}
Let $M$ be a von Neumann algebra, let $\psi,\omega$ be normal states, and
let $0<\alpha<1$.  Put
\begin{equation}
  e\defeq s(\omega),\qquad q\defeq \psi(e).
  \label{eq:srd-overlap}
\end{equation}
If $q=0$, then
$Q_\alpha(\psi\|\omega)=0$ and
$D_\alpha(\psi\Vert\omega)=+\infty$.  If $q>0$, let
$\omega_e=\omega\vert_{eMe}$ and define the normal state
$\widehat\psi$ on $eMe$ by
\begin{equation}
  \widehat\psi(x)\defeq q^{-1}\psi(x),
  \qquad x\in eMe.
  \label{eq:normalized-compression}
\end{equation}
Then
\begin{equation}
  D_\alpha(\psi\Vert\omega)
  =\frac{\alpha}{1-\alpha}
   \int_\alpha^1
   \frac{D(\widehat\psi_\beta^s\Vert\omega_e)}{\beta^2}
   \,\dd\beta
   -\frac{\alpha}{1-\alpha}\log q.
  \label{eq:srd-support-boundary}
\end{equation}
For every $0<\beta<1$, the escort constructed from $(\psi,\omega)$ is the
canonical extension of $\widehat\psi_\beta^s$, namely
$x\mapsto\widehat\psi_\beta^s(exe)$.  The boundary term vanishes exactly
when $s(\psi)\leq s(\omega)$.
\end{corollary}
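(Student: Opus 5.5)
The natural route is to obtain Corollary~\ref{cor:srd-support-boundary} as the $c=1$ specialization of Proposition~\ref{prop:general-fixed-ray-boundary}. The work then reduces to identifying the anchor $y_1$, the constant $q_1$, and the seed $\widehat\psi_1$ with the objects in the corollary.

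At $c=1$ the anchor is $y_1=\overline{e h_\psi e}\in L^1(eMe)_+$. I would first check that, under the identification $L^1(eMe)\cong eL^1(M)e$ of \eqref{eq:corner-identification}, this compressed density is the density of the restricted functional $\psi|_{eMe}$. The argument is that for $x\in eMe$ one has $\tr(x\,e h_\psi e)=\tr(exe\,h_\psi)=\psi(exe)=\psi(x)$, using the trace property of $\tr$ on $L^1$ paired with $M$. It follows that
\[
  q_1=\tr_{eMe}(y_1)=\psi(e)=q.
\]
When $q>0$, the seed $\widehat\psi_1$ has density $q^{-1}y_1$, which is exactly the functional $\widehat\psi$ of \eqref{eq:normalized-compression}. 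Also $(\widehat\psi)^{(1)}_\beta=\widehat\psi^s_\beta$ by \eqref{eq:sandwiched-ray-specialization}. Formula \eqref{eq:srd-support-boundary} is then literally \eqref{eq:general-fixed-ray-boundary} with $c=1$.

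For the case $q=0$, Proposition~\ref{prop:general-fixed-ray-boundary} already gives $D_\alpha=+\infty$. To see that the moment itself vanishes, I would argue directly. If $\psi(e)=0$, then $h_\psi^{1/2}e=0$, because $\tr(e h_\psi e)=\|h_\psi^{1/2}e\|_2^2$. Since $h_\omega^{(1-\alpha)/2\alpha}=e\,h_\omega^{(1-\alpha)/2\alpha}$, the factor $T=\overline{h_\psi^{1/2}h_\omega^{(1-\alpha)/2\alpha}}$ vanishes. Hence $x^s_\alpha=T^*T=0$ and $Q_\alpha=0$.

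Next comes the escort statement. The proposition asserts that the escort of $(\psi,\omega)$ is the canonical extension of $\widehat\psi^s_\beta$. I would make "canonical extension" explicit: a density $h\in eL^1(M)e$ defines the functional $x\mapsto\tr(xh)=\tr(exe\,h)$. So the extension is indeed $x\mapsto\widehat\psi^s_\beta(exe)$.

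The final claim is that the boundary term vanishes exactly under support inclusion. Since $-\frac{\alpha}{1-\alpha}\log q=0$ if and only if $q=1$, it suffices to note that $\psi(e)=1$ for a state if and only if $\psi(1-e)=0$, i.e.\ if and only if $s(\psi)\le e$. Unlike the $L^c$ case of Lemma~\ref{lem:compression}, this requires no compression inequality.

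The only step where care is needed is the identification of $\overline{e h_\psi e}$ with the density of $\psi|_{eMe}$ inside $L^1(eMe)$. I expect this to follow from the standard construction of the corner isomorphism: the density of the restriction is the compression of the density. If needed, I would verify it by duality against $eMe$, as sketched above, together with uniqueness of densities.
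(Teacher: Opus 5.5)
Your proposal is correct and matches the paper's argument. The paper identifies $h_{\widehat\psi}=q^{-1}eh_\psi e$ by the same duality-plus-uniqueness computation, derives $Q_\beta(\psi\|\omega)=q^\beta Q_\beta(\widehat\psi\|\omega_e)$, and explicitly records the corollary as the $c=1$ case of Proposition~\ref{prop:general-fixed-ray-boundary}. Your $q=0$ argument via $\tr(eh_\psi e)=\|h_\psi^{1/2}e\|_2^2$ is an equivalent substitute for the paper's appeal to faithfulness of $\tr$ on $L^1(M)_+$; the paper's direct version in Section~\ref{sec:proof-lower-srd} additionally avoids the quasi-Banach interpolation input when $1/2\le\alpha<1$.
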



The proofs of the results stated above are collected in Section~\ref{sec:proof-alpha-z}. Section~\ref{subsec:compatible-scale} constructs a compatible weighted Haagerup scale and establishes the anchored entropy identity that provides the analytic input. Section~\ref{subsec:fixed-ray-proof} identifies the fixed-ray factors and escort states with the corresponding anchored path and proves Theorem~\ref{thm:fixed-ray-integral}; Corollary~\ref{cor:srd-integral} then follows by setting \(c=1\). Finally, Section~\ref{subsec:lower-order-proof} proves the bound on \(q_c\), the escort identification, and the boundary formula of Proposition~\ref{prop:general-fixed-ray-boundary}, including the case \(q_c=0\), with Corollary~\ref{cor:srd-support-boundary} following as its specialization to \(c=1\).

\subsection{Refinement, monotonicity, and convex order geometry}

For a support-nested pair of states $\psi,\omega$ with $s(\psi)\leq s(\omega)$, and fixed $c>0$, the map
$\beta\mapsto\psi_\beta^{(c)}$ is the \emph{fixed-ray escort trajectory}.
Use the endpoint convention $Q_{1,c}=Q_1^{(c)}=1$ from
Section~\ref{subsec:algebraic-ray-escorts}, and define
\begin{equation}
  \begin{gathered}
  \mathcal J_c
  \defeq\{\beta>0:0<Q_{\beta,c\beta}(\psi\|\omega)<\infty\},\\
  R_c(\beta)\defeq D(\psi_\beta^{(c)}\Vert\omega),
  \qquad
  F_c(\beta)\defeq\frac1\beta
       \log Q_{\beta,c\beta}(\psi\|\omega),
  \qquad \beta\in\mathcal J_c.
  \end{gathered}
  \label{eq:escort-profile}
\end{equation}
For $c=1$, $R_1(\beta)$ is the object referred to as refined SRD in \cite{Bao:2019aol}. We will sometimes refer to $R_c(\beta)$ as \textit{fixed-ray refined divergence}. 

Supportedness $s(\psi)\leq s(\omega)$ gives $(0,1]\subseteq\mathcal J_c$ using Theorem~\ref{thm:fixed-ray-integral}.  If
$\beta_2>1$ belongs to $\mathcal J_c$, Theorem~\ref{thm:fixed-ray-integral}
applied with endpoint $\beta_2$ shows that
$[1,\beta_2]\subseteq\mathcal J_c$.  Hence $\mathcal J_c$ is an interval.
The proof of Theorem~\ref{thm:fixed-ray-integral} in
Section~\ref{subsec:fixed-ray-proof} also yields the following raywise
monotonicity and derivative identity. Let $\frac{\dd^-}{\dd\beta}$ denote the left derivative with respect to $\beta$. 

\begin{corollary}[Raywise refinement and monotonicity]
\label{cor:raywise-monotonicity}
The function $R_c$ is nondecreasing on $\mathcal J_c$.  At every interior
point of $\mathcal J_c$,
\begin{equation}
  \beta^2\frac{\dd^-}{\dd\beta}F_c(\beta)
  =R_c(\beta).
  \label{eq:raywise-refined-identity}
\end{equation}
At $\beta=1$ the same identity holds with the left derivative at the
endpoint, even when $\mathcal J_c=(0,1]$; if
$D(\psi\Vert\omega)=+\infty$, this derivative is understood in the
extended-real sense.  For $\beta\neq1$,
\[
  F_c(\beta)=\frac{\beta-1}{\beta}
  D_{\beta,c\beta}(\psi\Vert\omega),
  \qquad F_c(1)=0.
\]
In finite dimensions these functions are smooth on the interior of
$\mathcal J_c$, and the left derivative there may be replaced by the
ordinary derivative.
\end{corollary}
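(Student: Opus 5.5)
The plan is to derive Corollary~\ref{cor:raywise-monotonicity} from the integral representation of Theorem~\ref{thm:fixed-ray-integral}. The monotonicity will come from convexity of $\beta\mapsto\log Q_{\beta,c\beta}$ in a suitable reparametrization.

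\textbf{Step 1: reduce to the faithful corner.} First pass to $eMe$ with $e=s(\omega)$, where $\omega$ is faithful. For any $\beta_0\in\mathcal J_c$, apply Theorem~\ref{thm:fixed-ray-integral} with $\alpha=\beta_0$ and $z=c\beta_0$. Multiplying by $(\beta_0-1)/\beta_0$ turns this into
\[
  F_c(\beta_0)=\int_1^{\beta_0}\frac{R_c(\beta)}{\beta^2}\,\dd\beta,
\]
valid for all $\beta_0\in\mathcal J_c$, including $\beta_0<1$ with the orientation reversed. So $F_c$ is an indefinite integral of $R_c(\beta)/\beta^2$ anchored at $F_c(1)=0$. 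By Lebesgue differentiation, $\beta^2F_c'=R_c$ holds almost everywhere. Upgrading this to every point and to left derivatives requires knowing that $R_c$ is monotone and left-continuous.

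\textbf{Step 2: monotonicity via convexity.} Write $g_c(\beta)=\log Q_{\beta,c\beta}$ and change variable to $u=1/\beta$, setting $G(u)=u\,g_c(1/u)=F_c(1/u)$. A direct computation gives
\[
  \frac{\dd G}{\dd u}=g_c-\beta g_c'=-R_c,
\]
so $R_c$ nondecreasing in $\beta$ is equivalent to $-R_c$ nondecreasing in $u$. That is, it suffices to show $G$ is convex in $u$.

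Now $G(u)=u\log\|x_{1/u}^{(c)}\|_{c/u}^{c/u}=c\log\|x^{(c)}_{1/u}\|_{c/u}$. The anchored compatible weighted Haagerup scale of Section~\ref{subsec:compatible-scale} presents $x^{(c)}_\beta$ as a single element transported through a complex interpolation scale in which $1/p=u/c$ varies linearly. The interpolation (three-lines, or the Gu--Yin--Zhang quasi-Banach version) then gives log-convexity of the norm in the parameter $1/p$. Hence $G$ is convex, $G'$ is nondecreasing, and $R_c=-G'$ is nondecreasing in $\beta$.

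\textbf{Step 3: pointwise identity and the endpoint.} Convexity of $G$ gives one-sided derivatives at every interior point. A left derivative in $\beta$ corresponds to a right derivative in $u$. To identify the one-sided derivative of the integral with $R_c(\beta)$ itself, rather than a limit of $R_c$, use the anchored entropy identity from the proof of the theorem, which equates the relevant one-sided difference quotient with Araki relative entropy through the lower-order $\alpha$--$z$ limit \eqref{eq:lower-alpha-z-limit}. This identification also gives the needed left-continuity of $R_c$, via lower semicontinuity of Araki relative entropy combined with monotonicity.

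At $\beta=1$, the left derivative of $F_c$ equals $\lim_{\beta\uparrow1}D_{\beta,c\beta}$, which by \eqref{eq:lower-alpha-z-limit} is $D(\psi\Vert\omega)=R_c(1)$, possibly $+\infty$. The formula for $F_c$ at $\beta\ne1$ is just the definition. In finite dimensions, $g_c$ is smooth by the functional calculus in \eqref{eq:fd-moment-derivative}, so the left derivative can be replaced by the ordinary one.

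\textbf{Main obstacle.} The key difficulty is Step 2: justifying log-convexity of $\|x^{(c)}_{1/u}\|_{c/u}$ in $u$ for exponents below one. This depends on the compatible-scale construction and on quasi-Banach interpolation. A secondary point is that only the left derivative is asserted, because right-continuity of $R_c$ may fail.
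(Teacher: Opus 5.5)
Your proposal is correct and is essentially the paper's argument. Log-convexity of the anchored weighted norm (Lemma~\ref{lem:compatible-weighted-scale}) makes the one-sided slopes of your $G(u)=F_c(1/u)$ monotone, and the anchored entropy identity (Proposition~\ref{prop:anchored-entropy-identity}) identifies $\beta^2\frac{\dd^-}{\dd\beta}F_c(\beta)=-G'_+(1/\beta)$ with $R_c(\beta)$ at every point, so your Step~1 detour through the integral theorem and Lebesgue differentiation is not needed.

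Two small repairs are needed. First, at $\beta=1$ the limit $\lim_{\beta\uparrow1}D_{\beta,c\beta}(\psi\Vert\omega)$ is taken with $z=c\beta$ varying, so the fixed-$z$ limit \eqref{eq:lower-alpha-z-limit} alone does not apply; you must squeeze $D_{\beta,c/2}\le D_{\beta,c\beta}\le D_{\beta,c}$ using monotonicity in $z$, as the paper does inside the anchored identity at $p=r=c$. Second, left-continuity of $R_c$ follows from right-continuity of $G'_+$, not from lower semicontinuity of Araki relative entropy, since that route would need continuity of the escort path, which is nowhere established.
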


Thus the finite-dimensional refined divergence of \cite{Bao:2019aol} extends intrinsically as a
one-sided slope along the escort trajectory.  Since $\mu_\alpha$ is a probability measure, for every $\alpha\in \mathcal J_c\backslash\{1\} $, monotonicity
also gives the endpoint bounds
\begin{align}
  D(\psi\Vert\omega)
  &\leq D_{\alpha,c\alpha}(\psi\Vert\omega)
  \leq D(\psi_\alpha^{(c)}\Vert\omega),
  &&\alpha>1,
  \label{eq:upper-endpoint-bounds}\\
  D(\psi_\alpha^{(c)}\Vert\omega)
  &\leq D_{\alpha,c\alpha}(\psi\Vert\omega)
  \leq D(\psi\Vert\omega),
  &&0<\alpha<1,
  \label{eq:lower-endpoint-bounds}
\end{align}
in the extended sense.

It is useful to reparametrize the order by
\begin{equation}
  \kappa\defeq 1-\frac1\alpha,
  \qquad
  \alpha=\frac1{1-\kappa},
  \qquad -\infty<\kappa<1.
  \label{eq:kappa-parameter}
\end{equation}
Define the effective finite-moment domain
\begin{equation}
  \mathcal K_c
  \defeq \left\{\kappa<1:
  0<Q_{\frac1{1-\kappa},\frac{c}{1-\kappa}}
       (\psi\|\omega)<\infty\right\}
  =\left\{\kappa<1:\frac1{1-\kappa}\in\mathcal J_c\right\}.
  \label{eq:ray-effective-domain}
\end{equation}
Since $\kappa\mapsto1/(1-\kappa)$ is increasing and $\mathcal J_c$ is an
interval, $\mathcal K_c$ is an interval.  For a support-nested pair of states,
$(-\infty,0]\subseteq\mathcal K_c$.
Define the fixed-ray potential on
$\mathcal K_c$ by
\begin{equation}
  \Phi_c(\kappa)
  \defeq \kappa
  D_{\frac1{1-\kappa},\frac{c}{1-\kappa}}
       (\psi\Vert\omega),
  \qquad \Phi_c(0)\defeq 0,
  \label{eq:ray-potential}
\end{equation}
and extend it by $+\infty$ on $(-\infty,1)\setminus\mathcal K_c$ if an
extended-real-valued convex function is desired.
Changing variables $u=1-1/\beta$ in
\eqref{eq:main-fixed-ray-integral} gives
\begin{equation}
  \Phi_c(\kappa)
  =\int_0^\kappa
   R_c\!\left(\frac1{1-u}\right)\,\dd u,
  \label{eq:ray-potential-integral}
\end{equation}
for every $\kappa\in\mathcal K_c$.  As an integral of the nondecreasing function $u\mapsto R_c\left(\frac1{1-u} \right)$, $\Phi_c$ is convex on
its effective domain.  For the formal convex transform
\begin{equation}
  E_c(r)
  \defeq 
  \sup_{\kappa\in\mathcal K_c\cap[0,1)} \{\kappa r-\Phi_c(\kappa)\}
  =
  \max\!\left\{
    0,\,
    \sup_{\substack{\alpha>1\\
    0<Q_{\alpha,c\alpha}(\psi\Vert\omega)<\infty}}
    \frac{\alpha-1}{\alpha}
    \bigl(r-D_{\alpha,c\alpha}(\psi\Vert\omega)\bigr)
  \right\}.
  \label{eq:formal-convex-transform}
\end{equation}
The inclusion of $\kappa=0$ ensures that $E_c(r)\geq0$, even when no
upper-order moment is finite.  If an optimizer $\kappa_r>0$ lies in the
interior of the effective domain, and $\alpha_r=1/(1-\kappa_r)$, define
$R_c(\alpha_r\pm)$ as the corresponding one-sided limits within
$\operatorname{int}\mathcal J_c$.  The subgradient optimality condition is
\begin{equation}
  R_c(\alpha_r-)
  \leq r\leq R_c(\alpha_r+).
  \label{eq:nonsmooth-level-crossing}
\end{equation}
At a continuity point this is the level-crossing equation
\begin{equation}
  D(\psi_{\alpha_r}^{(c)}\Vert\omega)=r.
  \label{eq:smooth-level-crossing}
\end{equation}
For the sandwiched ray \(c=1\), the transform \(E_1\) is identified,
under the hypotheses of the relevant operational theorems, with the exact
strong-converse exponent of asymmetric binary hypothesis testing; see
Section~\ref{subsec:binary-testing}.
For general \(c\), \(E_c\) remains an intrinsic convex transform unless a
separate operational identification is available.

\subsection{Stability and the escort average}

Assume the branchwise hypotheses of
Theorem~\ref{thm:fixed-ray-integral}: for \(0<\alpha<1\), assume
\(s(\psi)\leq s(\omega)\), whereas for \(\alpha>1\), assume
\(Q_{\alpha,z}(\psi\|\omega)<\infty\).  Put \(c=z/\alpha\) and write
\(\|\cdot\|_1=\|\cdot\|_{M_*}\) for the predual norm.

Pinsker's inequality for Araki relative entropy \cite[Theorem~3.1]{Hiai1981SufficiencyKC} gives, pointwise,
\begin{equation}
  R_c(\beta)
  =
  D(\psi_\beta^{(c)}\Vert\omega)
  \geq
  \frac12
  \|\psi_\beta^{(c)}-\omega\|_1^2,
  \qquad \beta\in I_\alpha .
  \label{eq:escort-pinsker-pointwise}
\end{equation}
We emphasize that this statement requires neither finite dimensionality nor any
measurability assumption on the state-valued escort path. The result in \cite{Hiai1981SufficiencyKC} assumes a cyclic and separating vector. This directly covers our setting if $\omega$ is faithful normal. For arbitrary $M$, the inequality follows by restricting to the $\sigma$-finite corner $eMe$ with $e=s(\psi_\beta^{(c)}+\omega)$, where the relative entropy and predual distance are unchanged.

To formulate the averaged inequality without imposing measurability, define the upper Lebesgue integral of a nonnegative function
\(f\) by
\begin{equation}
  \int_{I_\alpha}^{*} f\,\dd\mu_\alpha
  :=
  \inf\left\{
    \int_{I_\alpha}g\,\dd\mu_\alpha:
    g\text{ is Borel measurable and }f\leq g
  \right\}.
  \label{eq:upper-lebesgue-integral}
\end{equation}
The proof of Corollary~\ref{cor:raywise-monotonicity} in Section \ref{sec:proof-alpha-z} shows that $R_c$ is
Borel measurable.
Using this together with
\(\|\psi_\beta^{(c)}-\omega\|_1^2\leq2R_c(\beta)\),
Theorem~\ref{thm:fixed-ray-integral} implies
\begin{equation}
  D_{\alpha,z}(\psi\Vert\omega)
  \geq
  \frac12
  \int_{I_\alpha}^{*}
    \|\psi_\beta^{(c)}-\omega\|_1^2
    \,\dd\mu_\alpha(\beta).
  \label{eq:escort-pinsker-upper-integral}
\end{equation}
Thus \eqref{eq:escort-pinsker-upper-integral} holds on an arbitrary von
Neumann algebra, without any data-processing, faithfulness,
\(\sigma\)-finiteness, or finite-dimensionality assumption beyond the
hypotheses of the fixed-ray integral theorem.

If the scalar function
\[
  \beta\longmapsto
  \|\psi_\beta^{(c)}-\omega\|_1^2
\]
is \(\mu_\alpha\)-measurable, then the upper integral is the ordinary
Lebesgue integral and
\begin{equation}
  D_{\alpha,z}(\psi\Vert\omega)
  \geq
  \frac12
  \int_{I_\alpha}
    \|\psi_\beta^{(c)}-\omega\|_1^2
    \,\dd\mu_\alpha(\beta).
  \label{eq:escort-pinsker-ordinary}
\end{equation}
Finite dimensionality is one sufficient condition for this scalar
measurability, since the escort path is then trace-norm continuous, but it
is not necessary.

For $\varepsilon>0$, let
\begin{equation}
  f(\beta)= \|\psi_\beta^{(c)}-\omega\|_1^2, \quad A_\varepsilon  \defeq \{f\geq \varepsilon^2\},
\end{equation}
and let $\mu_\alpha^*$ be the outer measure induced by $\mu_\alpha$.\footnote{If $\mu$ is a measure on $(X,\Sigma)$, its induced outer measure is the function $\mu^*:2^X\to[0,\infty]$ defined by $\mu^*(A)=\inf\{\mu(B):B\in\Sigma,A\subseteq B\}$, where $2^X$ denotes the collection of all subsets of $X$.}
Then for every Borel $g\geq f$,
\begin{equation}
    \mu_\alpha^*(A_\varepsilon)\leq \mu_\alpha\{g\geq \varepsilon^2\}\leq \frac{1}{\varepsilon^2} \int g \, d\mu_\alpha.
\end{equation}
Taking the infimum over such $g$ and using \eqref{eq:escort-pinsker-upper-integral} gives, for every \(\varepsilon>0\),
\begin{equation}
  \mu_\alpha^*\!\left(
    \left\{\beta\in I_\alpha:
      \|\psi_\beta^{(c)}-\omega\|_1\geq\varepsilon
    \right\}
  \right)
  \leq
  \min\left\{
    1,\frac{2D_{\alpha,z}(\psi\Vert\omega)}{\varepsilon^2}
  \right\},
  \label{eq:escort-concentration}
\end{equation}
with $z=c\alpha$.

\subsubsection{Finite-dimensional escort average}
Suppose now that \(M=\mathcal B(\mathcal H)\) with
\(\dim\mathcal H<\infty\), and identify \(\psi\) and \(\omega\) with their
density matrices \(\rho\) and \(\sigma\).  Under the hypotheses above, the
map
\[
  \beta\longmapsto\rho_\beta^{(c)}
\]
is continuous in trace norm on \(I_\alpha\), and hence is Bochner integrable
with respect to \(\mu_\alpha\).  Since \(\mu_\alpha\) is a probability
measure, its barycenter
\begin{equation}
  \overline\rho_{\alpha,c}
  :=
  \int_{I_\alpha}\rho_\beta^{(c)}\,
  \dd\mu_\alpha(\beta)
  \label{eq:escort-barycenter}
\end{equation}
is again a density matrix.  The generalized Donald identity for continuous ensembles
\cite[Lemma~4, Eq.~(16)]{Holevo:2004ots}, together with
Theorem~\ref{thm:fixed-ray-integral}, gives the exact decomposition
\begin{equation}
  D_{\alpha,z}(\rho\Vert\sigma)
  =
  D(\overline\rho_{\alpha,c}\Vert\sigma)
  +
  \int_{I_\alpha}
    D(\rho_\beta^{(c)}\Vert\overline\rho_{\alpha,c})\,
    \dd\mu_\alpha(\beta).
  \label{eq:donald-holevo-escort}
\end{equation}
This is easily seen by adding and subtracting
$\Tr(\rho_\beta^{(c)}\log\overline\rho_{\alpha,c})$ in the integrand and
using the barycenter definition. If $\overline\rho_{\alpha,c}$ is singular, let $v\in \ker \overline\rho_{\alpha,c}$. Then the continuous non-negative function
\begin{equation}
    \beta\longmapsto \langle{v, \rho_\beta^{(c)} v}\rangle
\end{equation}
has zero $\mu_\alpha$ integral. Since $\mu_\alpha$ has full support, the above function vanishes identically. Hence
\begin{equation}
    s(\rho_\beta^{(c)} ) \leq s(\overline\rho_{\alpha,c}),
\end{equation}
for every $\beta\in I_\alpha$. This support inclusion ensures that the add-and-subtract
calculation involving $\log\overline\rho_{\alpha,c}$ is legitimate, with
the logarithm taken on its support.

The second term in \eqref{eq:donald-holevo-escort} is the Holevo information of the
\(\mu_\alpha\)-weighted escort ensemble.  Denote it by
\begin{equation}
  \chi_{\alpha,c}
  :=
  \int_{I_\alpha}
    D(\rho_\beta^{(c)}
      \Vert\overline\rho_{\alpha,c})\,
    \dd\mu_\alpha(\beta).
  \label{eq:escort-holevo-information}
\end{equation}
Applying Pinsker's inequality pointwise to
\((\rho_\beta^{(c)},\overline\rho_{\alpha,c})\) and integrating gives
\begin{equation}
  \chi_{\alpha,c}
  \geq
  \frac12
  \int_{I_\alpha}
    \bigl\|
      \rho_\beta^{(c)}-\overline\rho_{\alpha,c}
    \bigr\|_1^2\,
    \dd\mu_\alpha(\beta).
  \label{eq:escort-holevo-pinsker}
\end{equation}
Equivalently, the Donald decomposition admits the stability refinement
\begin{equation}
  D_{\alpha,z}(\rho\Vert\sigma)
  \geq
  D(\overline\rho_{\alpha,c}\Vert\sigma)
  +
  \frac12
  \int_{I_\alpha}
    \bigl\|
      \rho_\beta^{(c)}-\overline\rho_{\alpha,c}
    \bigr\|_1^2\,
    \dd\mu_\alpha(\beta).
  \label{eq:donald-holevo-stability}
\end{equation}
Thus the Holevo term quantitatively controls the mean-square trace-norm
dispersion of the escort states around their barycenter.  It vanishes
precisely when
\(\rho_\beta^{(c)}=\overline\rho_{\alpha,c}\) for
\(\mu_\alpha\)-almost every \(\beta\).  By trace-norm continuity and the
full support of \(\mu_\alpha\) on \(I_\alpha\), this is equivalent to the
escort trajectory being constant on \(I_\alpha\).

\subsubsection{Algebraic rigidity of the escort Holevo term}
The almost-everywhere part of the preceding vanishing criterion does not
depend on finite dimensionality.  Let \(M\) be an arbitrary von Neumann
algebra and suppose that the escort path
\[
  \beta\longmapsto\psi_\beta^{(c)}\in M_*
\]
is strongly measurable as an \(M_*\)-valued map.  Since
\(\|\psi_\beta^{(c)}\|_{M_*}=1\), the path is Bochner integrable, and
\begin{equation}
  \overline\psi_{\alpha,c}
  :=
  \int_{I_\alpha}
    \psi_\beta^{(c)}\,\dd\mu_\alpha(\beta)
  \in M_*
  \label{eq:algebraic-escort-barycenter}
\end{equation}
is a normal state.  Moreover, the map
\[
  \varphi\longmapsto
  D(\varphi\Vert\overline\psi_{\alpha,c})
\]
is lower semicontinuous, and hence Borel measurable, for the
predual-norm topology.  Combined with the assumed strong
measurability of
\(\beta\mapsto\psi_\beta^{(c)}\), this shows that
\[
  \beta\longmapsto
  D\!\left(
    \psi_\beta^{(c)}
    \middle\Vert
    \overline\psi_{\alpha,c}
  \right)
\]
is measurable with respect to the \(\mu_\alpha\)-completion of
\(\mathcal B(I_\alpha)\), the Borel $\sigma$-algebra of $I_\alpha$.  Hence the extended nonnegative quantity
\begin{equation}
  \chi_{\alpha,c}
  :=
  \int_{I_\alpha}
    D\!\left(
      \psi_\beta^{(c)}
      \middle\Vert
      \overline\psi_{\alpha,c}
    \right)
    \,\dd\mu_\alpha(\beta)
  \label{eq:algebraic-escort-holevo}
\end{equation}
is well defined.  Pinsker's inequality for normal states on \(M\) \cite{Hiai1981SufficiencyKC} gives
\begin{equation}
  \chi_{\alpha,c}
  \geq
  \frac12
  \int_{I_\alpha}
    \bigl\|
      \psi_\beta^{(c)}-\overline\psi_{\alpha,c}
    \bigr\|_{M_*}^{\,2}
    \,\dd\mu_\alpha(\beta).
  \label{eq:algebraic-escort-holevo-pinsker}
\end{equation}
Consequently,
\begin{equation}
    \chi_{\alpha,c}=0
    \quad\Longleftrightarrow\quad
    \psi_\beta^{(c)}
    =
    \overline\psi_{\alpha,c}
    \quad\text{for \(\mu_\alpha\)-almost every
    \(\beta\in I_\alpha\)}.
  \label{eq:algebraic-escort-holevo-rigidity}
\end{equation}
Thus neither finite dimensionality nor continuity in the R\'enyi order is
needed for this almost-everywhere rigidity statement.  For this
equivalence alone, strong measurability may be replaced by the weaker
assumptions that the escort family admits a normal \(M_*\)-valued barycenter
and that the displayed relative-entropy profile is measurable.

This statement concerns the rigidity of the Holevo quantity whenever it is
defined.  Identifying \(\chi_{\alpha,c}\) as the exact remainder in a Donald
decomposition on an arbitrary von Neumann algebra would additionally require
a continuous-ensemble Donald identity in that setting.  For
\(0<\alpha<1\) without support inclusion and with \(q_c>0\), the same conclusion applies to the
compressed escort family; the separate support-boundary term is not part of
\(\chi_{\alpha,c}\).

\subsection{Binary testing and generalized cutoff rates}
\label{subsec:binary-testing}
Let $0\leq T\leq\one$ be a binary test on $M$ for distinguishing two
normal states $\psi,\omega$, and put
\begin{equation}
  p\defeq \psi(T),\qquad q\defeq \omega(T).
\end{equation}
Consider the normal unital completely-positive measurement channel
\begin{equation}
    \mathcal{M}_T:\mathbb{C}^2\to M,\qquad \mathcal{M}_T(a,b)=aT+b(\one-T).
\end{equation}
For $\alpha>1$ and $(\alpha,z)$ in the upper data-processing region of
\eqref{eq:dpi-region}, monotonicity under $\mathcal{M}_T$ and the classical binary
formula give, whenever $p,q>0$,
\begin{align}
  D_{\alpha,z}(\psi\Vert\omega)
  &\geq D_\alpha\bigl((p,1-p)\Vert(q,1-q)\bigr)\notag\\
  &\geq\frac{\alpha}{\alpha-1}\log p-\log q.
  \label{eq:one-shot-testing-derivation}
\end{align}
The second inequality follows by retaining the first nonnegative summand
in the classical R\'enyi moment.  For a nontrivial finite bound, assume
$D_{\alpha,z}(\psi\Vert\omega)<\infty$.  If $p=0$, the conclusion below
is automatic, while if $p>0$, measurement monotonicity forces $q>0$.
Rearranging gives
\begin{equation}
  -\log p
  \geq\frac{\alpha-1}{\alpha}
  \left[-\log q-D_{\alpha,z}(\psi\Vert\omega)\right].
  \label{eq:one-shot-testing-bound}
\end{equation}
This is valid on an arbitrary von Neumann algebra whenever $D_{\alpha,z}(\psi\Vert\omega)<\infty$,
$\alpha>1$ and $(\alpha,z)$ lies in the upper data-processing region
of \eqref{eq:dpi-region}. This bound states that the success
exponent is bounded below by a line of slope
\((\alpha-1)/\alpha\) in the type-II exponent, with horizontal
intercept \(D_{\alpha,z}(\psi\Vert\omega)\).  
The same one-copy measurement inequality underlies the proof of \cite[Lemma~IV.7]{Mosonyi:2014wsq}.

For tensor powers, let
\(T_n\in M^{\bar\otimes n}\), \(0\leq T_n\leq\one\), and set
\[
  p_n\defeq\psi^{\otimes n}(T_n),
  \qquad
  q_n\defeq\omega^{\otimes n}(T_n).
\]
We use the standard tensor-product functoriality of Haagerup spaces.
Under the canonical realization associated with the spatial tensor product,
\begin{equation}
  h_{\psi^{\otimes n}}=h_\psi^{\otimes n},
  \qquad
  h_{\omega^{\otimes n}}=h_\omega^{\otimes n},
  \qquad
  s(\omega^{\otimes n})=s(\omega)^{\otimes n},
  \label{eq:tensor-haagerup-densities}
\end{equation}
and powers and closed products of positive elementary tensors are taken
componentwise.

From \cite[Proposition 10]{Kato:2023hlj}, one has
\begin{equation}
  D_{\alpha,z}
  \bigl(\psi^{\otimes n}\Vert\omega^{\otimes n}\bigr)
  =
  nD_{\alpha,z}(\psi\Vert\omega).
  \label{eq:tensor-alpha-z-additivity}
\end{equation}
Hence, if
$q_n\leq\e^{-nr}$ for every sufficiently large $n$, then
\begin{equation}
  \liminf_{n\to\infty}-\frac1n\log p_n
  \geq\frac{\alpha-1}{\alpha}
       \left[r-D_{\alpha,z}(\psi\Vert\omega)\right].
  \label{eq:iid-testing-converse}
\end{equation}
On a fixed ray this lower bound is the signed area
\begin{equation}
  \frac{\alpha-1}{\alpha}
  \left[r-D_{\alpha,c\alpha}(\psi\Vert\omega)\right]
  =\int_1^\alpha
   \frac{r-R_c(\beta)}{\beta^2}\,\dd\beta.
  \label{eq:generic-testing-area}
\end{equation}
For an upper-order fixed ray, the DPI condition is
\begin{equation}
  \max\!\left\{\frac12,1-\frac1\alpha\right\}
  \leq c\leq1.
  \label{eq:fixed-ray-dpi}
\end{equation}

The exact strong-converse exponent of asymmetric binary hypothesis testing
selects the sandwiched ray.  Define
\begin{equation}
  p_n^\star(r)
  \defeq \sup\left\{
  \psi^{\otimes n}(T_n):0\leq T_n\leq\one,\ 
  \omega^{\otimes n}(T_n)\leq\e^{-nr}
  \right\}.
  \label{eq:optimal-testing-success}
\end{equation}
Whenever the corresponding operational theorem applies, set
\begin{equation}
  \scal(r;\psi\Vert\omega)
  \defeq -\lim_{n\to\infty}\frac1n\log p_n^\star(r).
\end{equation}
For $r>D(\psi\Vert\omega)$, the known strong-converse formula for $\alpha>1$ is
\begin{equation}
  \scal(r;\psi\Vert\omega)
  =\sup_{\alpha\in\mathcal I_+}
  \frac{\alpha-1}{\alpha}
  \bigl(r-D_\alpha(\psi\Vert\omega)\bigr).
  \label{eq:known-strong-converse}
\end{equation}
Here
\begin{equation}
  \mathcal I_+
  \defeq \{\alpha>1:Q_\alpha(\psi\|\omega)<\infty\}.
  \label{eq:finite-upper-srd-domain}
\end{equation}
Equivalently, one may optimize over all $\alpha>1$ using extended values,
since an infinite divergence contributes $-\infty$. 
\eqref{eq:known-strong-converse} is known to hold in finite dimensions \cite{Mosonyi:2014wsq}, on an injective von Neumann algebra when
$D_{\alpha_0}(\psi\Vert\omega)<\infty$ for some $\alpha_0>1$
\cite{Hiai:2021slp}, and on an arbitrary von Neumann algebra under the
bounded-domination hypothesis $a\psi\leq\omega$ for some  $a>0$ \cite{Junge:2025kuz}.

Combining \eqref{eq:known-strong-converse} with
Corollary~\ref{cor:srd-integral} gives the exact area formula
\begin{equation}
  \scal(r;\psi\Vert\omega)
  =\sup_{\alpha\in\mathcal I_+}
   \int_1^\alpha
   \frac{r-D(\psi_\beta^s\Vert\omega)}{\beta^2}\,\dd\beta .
  \label{eq:strong-converse-area}
\end{equation}
The profile
\[
  R_1(\beta)\defeq D(\psi_\beta^s\Vert\omega)
\]
is nondecreasing, so the right limit
\[
  R_1(1+)\defeq\lim_{\beta\downarrow1}R_1(\beta)
\]
exists in $[0,+\infty]$.  In each of the regimes considered above, the
relevant hypotheses ensure that
\[
  D_{\alpha_0}(\psi\Vert\omega)<+\infty
\]
for some $\alpha_0>1$: this is assumed explicitly in the injective case,
follows from bounded domination in the arbitrary-algebra case, and, in
finite dimensions, follows from $D(\psi\Vert\omega)<+\infty$.  The
right-endpoint theorem for the sandwiched R\'enyi divergence therefore
applies and gives
\begin{equation}
  \lim_{\alpha\downarrow1}
  D_\alpha(\psi\Vert\omega)
  =
  D(\psi\Vert\omega).
  \label{eq:upper-srd-endpoint-operational}
\end{equation}
See \cite{Berta:2016vnw,Jencova:2016tqz,Jencova:2017txf}.

On the other hand, for $1<\alpha\leq\alpha_0$,
Corollary~\ref{cor:srd-integral} and
\[
  \frac{\alpha}{\alpha-1}
  \int_1^\alpha\frac{\dd\beta}{\beta^2}=1
\]
show that $D_\alpha(\psi\Vert\omega)$ is a weighted average of
$R_1$ over $(1,\alpha)$.  Monotonicity of $R_1$ therefore gives
\[
  R_1(1+)
  \leq
  D_\alpha(\psi\Vert\omega)
  \leq
  R_1(\alpha).
\]
Letting $\alpha\downarrow1$ and using the definition of $R_1(1+)$ yields
\[
  \lim_{\alpha\downarrow1}
  D_\alpha(\psi\Vert\omega)
  =
  R_1(1+).
\]
Together with \eqref{eq:upper-srd-endpoint-operational}, this proves
\[
  \lim_{\beta\downarrow1}R_1(\beta)
  =
  D(\psi\Vert\omega).
\]

Hence, if $r>D(\psi\Vert\omega)$, the integrand in
\eqref{eq:strong-converse-area} is positive for all $\beta>1$ sufficiently
close to one.  It need not ever become negative, and the supremum may be
attained only at the boundary of $\mathcal I_+$, including in the limit
$\alpha\to\infty$.  If it is attained at a finite interior order
$\alpha_r$, the subgradient condition is
\[
  R_1(\alpha_r-)\leq r\leq R_1(\alpha_r+).
\]
At a continuity point this becomes
\begin{equation}
  r=R_1(\alpha_r)=D(\psi_{\alpha_r}^s\Vert\omega).
  \label{eq:strong-converse-level-crossing}
\end{equation}
We emphasize that the imported operational theorems
\cite{Mosonyi:2014wsq,Hiai:2021slp,Junge:2025kuz} supply the exponent, and the
new content of \eqref{eq:strong-converse-area} is its escort-area and
level-crossing form.

For $0<\kappa<1$, define the generalized cutoff rate by
\begin{equation}
  C_\kappa(\psi\Vert\omega)
  \defeq \inf\left\{r_0\in\mathbb R:
  \scal(r;\psi\Vert\omega)\geq\kappa(r-r_0)
  \text{ for every }r>0\right\}.
  \label{eq:cutoff-rate-definition}
\end{equation}
Under the bounded-domination hypothesis,
\cite{Junge:2025kuz} proves that these generalized cutoff rates
satisfy
$C_\kappa=D_{1/(1-\kappa)}(\psi\Vert\omega)$ for $0<\kappa<1$.
Consequently,
\begin{equation}
  C_\kappa(\psi\Vert\omega)
  =\frac1\kappa\int_0^\kappa
   D\!\left(\psi_{1/(1-u)}^s\Vert\omega\right)\,\dd u.
  \label{eq:cutoff-rate-average}
\end{equation}
This gives an exact operational meaning to the average of the upper-order
sandwiched escort profile, beyond optimization over the order.

\subsection{Finite-dimensional work and pair conversion}

We next record two applications whose established operational theorems are
finite-dimensional.  Let
\begin{equation}
  \tau_{\beta_{\rm th}}
  \defeq \frac{\e^{-\beta_{\rm th}H}}
          {\Tr\e^{-\beta_{\rm th}H}}
  \label{eq:gibbs-state}
\end{equation}
be a faithful Gibbs state at inverse temperature $\beta_{\rm th}>0$.

For the work-extraction application, we use the battery model of
\cite{Watanabe:2026pzw}.  For $m>1$, let $X_m$ be a two-level battery with
basis $\{|0\rangle,|1\rangle\}$, Hamiltonian gap
$\beta_{\rm th}^{-1}\log(m-1)$, and thermal state
\[
  \mu_m
  =\frac{m-1}{m}|0\rangle\!\langle0|
   +\frac1m|1\rangle\!\langle1|.
\]
The transition from $\rho\otimes\mu_m$ to the target pure battery state
$|1\rangle\!\langle1|$ is assigned work
$W=\beta_{\rm th}^{-1}\log m$. Define the corresponding
dimensionless work parameter by
\[
  w\defeq\beta_{\rm th}W=\log m,
  \qquad
  W=\beta_{\rm th}^{-1}w.
\]
Thus the rate $r$ below is the dimensionless work per copy.
Write
$H_{X_m}=\beta_{\rm th}^{-1}\log(m-1)|1\rangle\!\langle1|$.  For $w>0$,
let ${\rm TO}(H_w^{\rm in}\to H_w^{\rm out})$ denote the thermal operations\footnote{As in~\cite{Watanabe:2026pzw}, the class of thermal operations is
understood to include limits of ordinary thermal operations.}
with
\[
  H_w^{\rm in}=H\otimes\one+\one\otimes H_{X_{\e^w}},
  \qquad H_w^{\rm out}=H_{X_{\e^w}}.
\]
Define the optimal fidelity error by
\[
  \mathcal E_{\rm TO}(\rho;w)
  \defeq 1-\sup_{\Lambda\in{\rm TO}(H_w^{\rm in}\to H_w^{\rm out})}
  \langle1|\Lambda(\rho\otimes\mu_{\e^w})|1\rangle.
\]
Here \cite{Watanabe:2026pzw} uses squared fidelity, which reduces to the displayed
matrix element because the target is pure.  \cite[Lemma~S.15]{Watanabe:2026pzw} proves equivalence with a sharp work-storage model
for exact extraction, but whether this equivalence persists with nonzero
fidelity error is not known.  We therefore retain the thermal-battery
model throughout.

For $n$ copies, the system Hamiltonian is the additive Hamiltonian
$H^{\times n}\defeq\sum_{j=1}^n
\one^{\otimes(j-1)}\otimes H\otimes\one^{\otimes(n-j)}$.
When the argument of $\mathcal E_{\rm TO}$ is $\rho^{\otimes n}$, its
definition uses this system Hamiltonian in $H_w^{\rm in}$.
For $r>0$, define the reliability exponent
\begin{equation}
  B_{\rm TO}(\rho;r)
  \defeq \sup_{\substack{(w_n)\subset(0,\infty)\\
                         \liminf_{n\to\infty}w_n/n\geq r}}
  \liminf_{n\to\infty}
  \left[-\frac1n\log
  \mathcal E_{\rm TO}(\rho^{\otimes n};w_n)\right].
  \label{eq:work-reliability-definition}
\end{equation}
with the convention $-\log0=+\infty$.
\cite[Theorem~2]{Watanabe:2026pzw} gives
\begin{equation}
  B_{\rm TO}(\rho;r)
  =\sup_{0<\alpha<1}
   \frac{\alpha-1}{\alpha}
   \left(r-D_\alpha(\rho\Vert\tau_{\beta_{\rm th}})\right).
  \label{eq:work-reliability-renyi}
\end{equation}
The supremum includes $0<\alpha<1/2$, illustrating why the analytic range
of Corollary~\ref{cor:srd-integral} is useful even outside the SRD
data-processing range.  Substitution gives
\begin{equation}
  B_{\rm TO}(\rho;r)
  =\sup_{0<\alpha<1}
   \int_\alpha^1
   \frac{D(\rho_\beta^s\Vert\tau_{\beta_{\rm th}})-r}{\beta^2}
   \,\dd\beta .
  \label{eq:work-reliability-area}
\end{equation}
In finite dimensions the escort profile is continuous, hence a maximizer
at a finite interior order $\alpha_r$ satisfies
$D(\rho_{\alpha_r}^s\Vert\tau_{\beta_{\rm th}})=r$.
For Gibbs-preserving operations the reliability exponent is instead
governed by the Petz divergences
\(\overline D_\alpha=D_{\alpha,1}\).  Since the Petz line \(z=1\)
does not lie on a single ray \(z=c\alpha\), our fixed-ray
representation does not turn this optimization into a level crossing
against one escort profile.

There is also an operational setting in which $\alpha$ and $z$ vary
independently.  Let $\mathcal F$ be the finite-dimensional class of pairs
\begin{align}
  \rho&=\sum_i p_i\lvert i\rangle\!\langle i\rvert
      \otimes\lvert a_i\rangle\!\langle a_i\rvert, \qquad
  \sigma=\sum_i q_i\lvert i\rangle\!\langle i\rvert
      \otimes\lvert b_i\rangle\!\langle b_i\rvert
  \label{eq:flat-pair-class}
\end{align}
where $(p_i)_i$ and $(q_i)_i$ are probability distributions,
$|a_i\rangle$ is normalized when $p_i>0$ and is the zero vector when
$p_i=0$, and similarly $|b_i\rangle$ is normalized when $q_i>0$ and is
zero when $q_i=0$.  Membership in $\mathcal F$ includes the
\emph{some-overlap} condition: $\langle a_i,b_i\rangle\neq0$ for at least
one index $i$.  A source pair is \emph{nonparallel} when
$|\langle a_i,b_i\rangle|<1$ for every $i$.  In addition, the source pair
below is assumed to satisfy the condition in \cite[eq.~(14)]{Verhagen:2025anx}:
for some $i$, both vectors are nonzero and orthogonal.

Write $(\rho,\sigma)\succeq(\rho',\sigma')$ if a single channel maps
$\rho$ to $\rho'$ and $\sigma$ to $\sigma'$.  Put
$\mathsf P=(\rho,\sigma)$ and $\mathsf P'=(\rho',\sigma')$.  A sequence
$(m_n)_{n\geq1}\subset\mathbb N$ is \emph{achievable} if
\begin{equation}
  \mathsf P^{\otimes n}\succeq(\mathsf P')^{\otimes m_n}
  \quad\text{for every sufficiently large }n.
  \label{eq:achievable-pair-conversion-sequence}
\end{equation}
The exact large-sample conversion rate is
\begin{equation}
  r(\mathsf P\to\mathsf P')
  \defeq \sup\left\{
  \liminf_{n\to\infty}\frac{m_n}{n}:
  (m_n)\ \text{is achievable}\right\}.
  \label{eq:exact-large-sample-rate-definition}
\end{equation}
For source and target pairs in $\mathcal F$, with the source nonparallel
and satisfying the condition in \cite[eq.~(14)]{Verhagen:2025anx}, 
\cite[Theorem~5]{Verhagen:2025anx} gives
\begin{equation}
  r\bigl((\rho,\sigma)\to(\rho',\sigma')\bigr)
  =\inf_{\substack{0<\alpha<1\\
                    z>\max\{\alpha,1-\alpha\}}}
  \frac{D_{\alpha,z}(\rho\Vert\sigma)}
       {D_{\alpha,z}(\rho'\Vert\sigma')}.
  \label{eq:pair-conversion-rate}
\end{equation}
Here a quotient with zero denominator is interpreted as $+\infty$, since
the corresponding monotone places no restriction on the conversion rate.
Condition~\cite[eq.~(14)]{Verhagen:2025anx} makes every source numerator strictly positive, so no
$0/0$ ambiguity occurs.  
Indeed, under Condition~\cite[eq.~(14)]{Verhagen:2025anx} there exists an $i_0$ such that $p_{i_0},q_{i_0}>0$ with zero overlap of the vectors, so
\[
\begin{aligned}
  Q_{\alpha,z}(\rho\Vert\sigma)
  &=
  \sum_i p_i^\alpha q_i^{1-\alpha}
  |\langle a_i,b_i\rangle|^{2z} \\
  &\leq
  \sum_i p_i^\alpha q_i^{1-\alpha}
  -p_{i_0}^\alpha q_{i_0}^{1-\alpha} \\
  &<
  \sum_i p_i^\alpha q_i^{1-\alpha}
  \leq 1.
\end{aligned}
\]
The some-overlap condition gives $0< Q_{\alpha,z}(\rho\|\sigma).$ 
Hence 
\begin{equation}
    0<D_{\alpha,z}(\rho\|\sigma)<\infty.
\end{equation}
The conclusion is asserted here only under the hypotheses of the cited
theorem, and \eqref{eq:pair-conversion-rate} is not a conversion formula for
arbitrary state pairs.

Since pairs in \(\mathcal F\) need not have nested supports, applying our
escort representation to the divergences in
\eqref{eq:pair-conversion-rate} requires the finite-dimensional boundary
correction of Proposition~\ref{prop:fd-boundary-correction}.  If $P=s(\sigma)$ and
$c=z/\alpha$, direct compression gives
\[
  P\rho^{1/c}P
  =\sum_{i:q_i>0}p_i^{1/c}|\langle a_i,b_i\rangle|^2
    |i\rangle\!\langle i|\otimes|b_i\rangle\!\langle b_i|.
\]
Consequently, the boundary datum is
\begin{equation}
  q_{1-,c}
  =\sum_{i:q_i>0}p_i\lvert\langle a_i,b_i\rangle\rvert^{2c},
  \qquad c=\frac z\alpha.
  \label{eq:cq-boundary-datum}
\end{equation}
This number is strictly positive for every pair in $\mathcal F$, by the
some-overlap condition.  Thus \eqref{eq:pair-conversion-rate} is organized
by a family of escort profiles, one for each $c$, together with their
support-overlap costs; the boundary term vanishes whenever the relevant
source support is contained in the reference support.

\section{Lower-order sandwiched divergence}
\label{sec:proof-lower-srd}

We now prove the lower-order sandwiched integral formula directly in the
data-processing range $1/2\leq\alpha<1$.  This argument is worth retaining
even though the fixed-ray proof in Section~\ref{sec:proof-alpha-z}
subsumes it, since it uses only the Banach range of Araki--Masuda interpolation
\cite{Berta:2016vnw,Jencova:2017txf} and therefore does not depend
on quasi-Banach interpolation below $L^1$.

\subsection{Direct proof of integral representation}

For normal states $\psi, \omega$, assume first that $s(\psi)\leq s(\omega)$ and pass to
$s(\omega)Ms(\omega)$, so that $\omega$ is faithful and the restrictions of $\psi,\omega$ remain states.  For
$1/2\leq t<1$ let $A_t$ be the closed product in
\eqref{eq:lower-srd-factor}, that is,
\begin{equation}
    A_t\defeq x_t^s(\psi\|\omega),
\end{equation}
and set
\begin{equation}
  g(t)\defeq \log Q_t(\psi\|\omega)
       =\log\tr(A_t^t),
  \qquad g(1)\defeq 0.
  \label{eq:lower-proof-g}
\end{equation}

\begin{lemma}[Convexity of the lower log moment]
\label{lem:lower-log-convexity}
The function $g$ is finite and convex on $[1/2,1)$ and extends
continuously to $t=1$ with $g(1)=0$.
\end{lemma}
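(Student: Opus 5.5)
We will prove Lemma~\ref{lem:lower-log-convexity} by writing the lower sandwiched moment as a norm in Araki--Masuda (Kosaki) interpolation spaces and then applying a three-lines argument. Throughout we work in the corner $s(\omega)Ms(\omega)$, where $\omega$ is faithful. For $1/2\le t<1$ put $p=p(t)$ with
\[
\frac1{p}=\frac{1-t}{t}\in(0,1].
\]
We use the weighted map $x\mapsto h_\omega^{1/(2p)}x\,h_\omega^{1/(2p)}$, which lands in $L^t(M)$ when applied to the appropriate Hilbert--Schmidt-type factor. Since $\tr(A_t^t)=\|A_t\|_t^t$ and $A_t=T^*T$ with $T=\overline{h_\psi^{1/2}h_\omega^{(1-t)/(2t)}}\in L^{2t}(M)$, we obtain
\[
Q_t(\psi\|\omega)=\|T\|_{2t}^{2t}.
\]

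The first step is finiteness and strict positivity. By the generalized H\"older inequality, $h_\psi^{1/2}\in L^2$ and $h_\omega^{(1-t)/(2t)}\in L^{2t/(1-t)}$, so $T\in L^{2t}$ and $Q_t<\infty$. Positivity follows because $s(\psi)\le s(\omega)$ and $h_\omega$ is faithful on the corner, so the product $T$ is nonzero. Hence $g$ is finite on $[1/2,1)$.

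The second step is convexity, which is the main point. We reparametrize by $s=1/(2t)\in(1/2,1]$, or equivalently by $\theta=(1-t)/t\in(0,1]$, and identify $T$ with the Araki--Masuda symmetric $L^{p}(M,\omega)$ norm of $h_\psi^{1/2}$, using the Berta--Scholz--Tomamichel / Jen\v{c}ov\'a description
\[
\|h_\psi\|_{L^t(M,\omega)\text{-type}}=Q_t^{1/t}.
\]
Concretely, Jen\v{c}ov\'a shows that for $1/2\le t<1$,
\[
Q_t(\psi\|\omega)^{1/t}=\bigl\|h_\omega^{-\frac{1-t}{2t}}\cdots\bigr\|,
\]
or, more usefully, that $Q_t^{1/t}$ is the norm of $h_\psi$ in the complex interpolation space $C_{\theta}=[L^\infty\text{-weighted},L^1]_\theta$ with $1/t=1-\theta\cdot$const, taken between $h_\omega^{1/2}Mh_\omega^{1/2}$ and $L^1$. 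The Riesz--Thorin/Stein three-lines inequality for complex interpolation then gives that $\theta\mapsto\log\|h_\psi\|_{C_\theta}$ is convex. Hence $u\mapsto\log Q_t^{1/t}=g(t)/t$ is convex in the interpolation variable $u=1/t$, which is affine in $\theta$. Convexity of $g$ in $t$ then follows from the perspective identity: if $\phi(u)$ is convex, then $t\,\phi(1/t)$ is convex in $t>0$, and here $g(t)=t\,\phi(1/t)$.

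I expect the main obstacle to be matching exponents and normalizations precisely. We must verify that the sandwiched $Q_t$ in the paper's Haagerup form coincides with the interpolation norm for exactly the range $1/2\le t<1$, which is where the Banach-range Araki--Masuda spaces apply. We must also check that the interpolation parameter is affine in $1/t$, so that the perspective step is valid. If the direct interpolation identification proves awkward, the fallback is to cite the variational formula
\[
Q_t^{1/t}=\inf_{y}\bigl(\cdots\bigr)
\]
from \cite{Jencova:2017txf} for $t<1$. Log-convexity would then follow as an infimum of log-affine families, though that route gives concavity and would need care.

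The final step is continuity at $t=1$, which is the identity $g(t)\to0$ as $t\uparrow1$. At $t=1$ the interpolation parameter reaches the $L^1$ endpoint, where the norm is $\tr h_\psi=1$. By \eqref{eq:lower-alpha-z-limit} with $z=\alpha$, the limit $D_t\to D$ holds in the extended sense, but we need only $Q_t\to1$. We will prove this by endpoint norm continuity: $h_\omega^{(1-t)/(2t)}\to s(\omega)$ strongly, and $s(\psi)\le s(\omega)$, so $T\to h_\psi^{1/2}$ in the relevant sense. Then $\|T\|_{2t}^{2t}\to\|h_\psi^{1/2}\|_2^2=1$. To get this convergence we will use the interpolation bound $Q_t\le1$ together with lower semicontinuity of the norm; convexity plus boundedness then guarantees that the limit exists.
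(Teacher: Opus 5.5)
\paragraph{Convexity.} Your convexity step has the right architecture: an Araki--Masuda norm, Riesz--Thorin interpolation, and then the perspective trick $g(t)=t\,\phi(1/t)$. Once the identification is corrected, it is the paper's argument. The identification you actually rely on is wrong, however. In Kosaki's symmetric couple $(h_\omega^{1/2}Mh_\omega^{1/2},L^1(M))$, the $C_{1/p}$-norm of $h_\psi$ is $\|x\|_p$ with $h_\psi=h_\omega^{\frac{p-1}{2p}}xh_\omega^{\frac{p-1}{2p}}$ and $p\geq1$. That is the \emph{upper} sandwiched moment $Q_p(\psi\|\omega)^{1/p}$, and no $\theta\in[0,1]$ produces an order $t<1$.

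What you need is the object you name first and then abandon. $Q_t^{1/(2t)}=\|T\|_{2t}$ is the Araki--Masuda norm of $\xi_\psi=h_\psi^{1/2}$ with index $2t\in[1,2]$, not your $p$. The Berta--Scholz--Tomamichel interpolation inequality makes $\log\|T\|_{2t}$ convex in $1/(2t)$. Your perspective step then gives convexity of $g$, which is exactly the paper's reparametrization $p=(1-\lambda)p_0+\lambda p_1$, $\theta=\lambda p_1/p$. As you suspected, your fallback via an infimum formula would give log-concavity instead.

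\paragraph{The endpoint $t=1$.} This is the genuine gap. Convexity and $g\leq0$ only show that $L=\lim_{t\uparrow1}g(t)$ exists in $(-\infty,0]$. The claim is $L=0$, i.e.\ $\liminf_{t\uparrow1}Q_t\geq1$, and convexity alone cannot exclude a jump there. Nothing in your sketch proves this lower bound:
\begin{itemize}
\item The operators $T=T_t$ live in different spaces $L^{2t}(M)$. Moreover, $h_\omega^{s}\to\one$ as $s\downarrow0$ fails even in $\tau$-measure, because spectral projections of $h_\omega$ below any level have infinite trace in the core. So there is no topology in which ``$T\to h_\psi^{1/2}$'' carries the norms along.
\item ``Lower semicontinuity of the norm'' across the varying $L^{2t}$ spaces is not an available fact.
\item Appealing to \eqref{eq:lower-alpha-z-limit} does not help. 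It is stated for fixed $z$. Even granting $D_t\to D(\psi\Vert\omega)$, the relation $\log Q_t=(t-1)D_t$ forces $Q_t\to1$ only when $D(\psi\Vert\omega)<\infty$, whereas the lemma must also cover $D=+\infty$.
\end{itemize}
This lower bound is exactly where $s(\psi)\leq s(\omega)$ must enter; without it the limit is $\psi(s(\omega))<1$ (compare Corollary~\ref{cor:srd-support-boundary}).

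The paper obtains the bound from Araki--Lieb--Thirring, $\overline Q_t\leq Q_t\leq1$. It then shows $\overline Q_t=\langle\xi_\psi,\Delta_{\omega,\psi}^{1-t}\xi_\psi\rangle\to1$ by dominated convergence, after Araki's support formula and support inclusion show that the spectral measure of $\Delta_{\omega,\psi}$ in $\xi_\psi$ has no atom at $0$.

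If you want to stay in $L^p$ language, a duality bound also closes the gap. For $a\in M$, take $Y=h_\omega^{1-\frac1{2t}}a\in L^{2t/(2t-1)}(M)$. Then $\tr(T_tY)=\tr(h_\psi^{1/2}h_\omega^{1/2}a)$, independently of $t$. Kosaki interpolation gives $\limsup_{t\uparrow1}\|Y\|_{2t/(2t-1)}\leq\|h_\omega^{1/2}a\|_2$. Hölder's inequality bounds $|\tr(T_tY)|$ by $\|T_t\|_{2t}\|Y\|_{2t/(2t-1)}$. Taking the supremum over $a$, and using density of $h_\omega^{1/2}M$ in $L^2(M)$ for faithful $\omega$, yields $\liminf_{t\uparrow1}\|T_t\|_{2t}\geq\|h_\psi^{1/2}\|_2=1$.
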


\begin{proof}
For $1\leq p\leq2$, put
\[
  a_p\defeq \frac1p-\frac12
\]
and define the closed Haagerup product
\begin{equation}
  B_p
  \defeq
  \overline{h_\omega^{a_p}h_\psi^{1/2}}
  \in L^p(M),
\end{equation}
together with
\begin{equation}
  N_p\defeq\|B_p\|_p.
  \label{eq:lower-weighted-norm}
\end{equation}
For $1\leq p<2$, if
\[
  q_p\defeq\frac{2p}{2-p},
\]
then $a_p=1/q_p$ and $1/p=1/q_p+1/2$.  Generalized
H\"older multiplication therefore gives
\[
  N_p
  \leq
  \|h_\omega^{a_p}\|_{q_p}
  \|h_\psi^{1/2}\|_2
  =
  \omega(\one)^{1/q_p}\psi(\one)^{1/2}
  =1.
\]
The same conclusion holds for $p=2$, since
$B_2=h_\psi^{1/2}$.

Under the canonical Haagerup standard-form identification, let
\[
  \xi_\psi=h_\psi^{1/2}\in L^2(M)
\]
denote the natural-cone vector representative of $\psi$.  The Haagerup
realization of the Araki--Masuda spaces
\cite{Berta:2016vnw,Jencova:2017txf,Kato:2023aro} gives
\begin{equation}
  N_p=\|\xi_\psi\|_{p,\omega}^{\rm AM},
  \qquad 1\leq p\leq2.
  \label{eq:lower-haagerup-am-identification}
\end{equation}
Since $\omega$ is faithful, $\|\cdot\|_{p,\omega}^{\rm AM}$ is a norm.
Moreover, $\psi\neq0$ implies $\xi_\psi\neq0$.  Consequently,
\[
  0<N_p\leq1,
  \qquad 1\leq p\leq2.
\]

Now let $p=2t$, where $1/2\leq t\leq1$.  Then
\[
  a_p=\frac{1-t}{2t},
\]
and associativity in the $*$-algebra of $\tau$-measurable operators
gives
\begin{equation}
  A_t
  =
  \overline{
    h_\omega^{a_p}h_\psi h_\omega^{a_p}}
  =
  B_pB_p^*.
  \label{eq:lower-factor-from-Bp}
\end{equation}
Let $B_p=v_p|B_p|$ be the polar decomposition of $B_p$.  Since
$p=2t$ and $|B_p|^p\in L^1(M)$, functional calculus and the tracial
property of $\tr$ for H\"older compatible elements gives
\begin{align}
  Q_t(\psi\|\omega)
  &=\tr(A_t^t) \notag\\
  &=\tr\bigl((B_pB_p^*)^{p/2}\bigr) \notag\\
  &=\tr\bigl(v_p|B_p|^p v_p^*\bigr) \notag\\
  &=\tr(|B_p|^p)
   =\|B_p\|_p^p,
\end{align}
using $v_p\in M$ and $v_p^*v_p=s(|B_p|)$. 
Thus
\begin{equation}
  Q_t(\psi\|\omega)=N_{2t}^{2t},
  \qquad \frac12\leq t\leq1.
  \label{eq:lower-moment-weighted-norm}
\end{equation}

Let $1\leq p_0,p_1\leq2$, $0\leq\lambda\leq1$, and set
\begin{equation}
  p=(1-\lambda)p_0+\lambda p_1,
  \qquad
  \theta=\frac{\lambda p_1}{p}.
\end{equation}
Then
\begin{equation}
  \frac1p=\frac{1-\theta}{p_0}
          +\frac{\theta}{p_1},
  \qquad
  p(1-\theta)=(1-\lambda)p_0,
  \qquad p\theta=\lambda p_1.
\end{equation}
The weighted Araki--Masuda interpolation inequality
\cite[Proposition~4]{Berta:2016vnw} yields
\begin{equation}
  N_p\leq N_{p_0}^{1-\theta}N_{p_1}^{\theta}.
\end{equation}
After raising this inequality to the power $p$, we obtain
\begin{equation}
  N_p^p
  \leq (N_{p_0}^{p_0})^{1-\lambda}
       (N_{p_1}^{p_1})^\lambda.
\end{equation}
Together with \eqref{eq:lower-moment-weighted-norm}, this proves convexity
of $g$.  Finiteness was already obtained from H\"older's inequality, and
$N_2=\|h_\psi^{1/2}\|_2=1$ gives the assigned value $g(1)=0$.

It remains to rule out a jump at $t=1$, since convexity alone does not do so.
Let $\overline Q_t(\psi\|\omega)$ denote the Petz moment.  The
Araki--Lieb--Thirring inequality in standard form
\cite[Theorem~12]{Berta:2016vnw} gives, for
$1/2\leq t<1$,
\begin{equation}
  \log\overline Q_t(\psi\|\omega)
  \leq g(t)\leq0.
  \label{eq:petz-sandwiched-endpoint-squeeze}
\end{equation}
If $\Delta_{\omega,\psi}$ is the relative modular operator, then
\begin{equation}
  \overline Q_t(\psi\|\omega)
  =\left\langle\xi_\psi,
  \Delta_{\omega,\psi}^{1-t}\xi_\psi\right\rangle.
  \label{eq:petz-standard-form-moment}
\end{equation}
Let $E_{\omega,\psi}$ denote the projection-valued spectral measure of
$\Delta_{\omega,\psi}$, and define
\[
  \mu_\psi(B)
  \defeq
  \left\langle\xi_\psi,
  E_{\omega,\psi}(B)\xi_\psi\right\rangle,
\]
with Borel $B\subseteq[0,\infty)$.
Denote by $J$ the modular conjugation of the Haagerup standard form.  The
support formula for the relative modular operator
\cite[Theorem~2.4(1)]{Araki:1977zsq} is
\begin{equation}
  \one-E_{\omega,\psi}(\{0\})
  =
  s(\omega)J s(\psi)J.
  \label{eq:relative-modular-support}
\end{equation}
Indeed, $s(\omega)$ is the $M$-support of $\xi_\omega$, while
$Js(\psi)J$ is the $M'$-support of $\xi_\psi$.  Since
$\xi_\psi$ belongs to the natural cone, $J\xi_\psi=\xi_\psi$, and
$s(\psi)\xi_\psi=\xi_\psi$.  Hence
\[
  Js(\psi)J\xi_\psi=\xi_\psi.
\]
It follows from \eqref{eq:relative-modular-support} that
\begin{align}
  E_{\omega,\psi}(\{0\})\xi_\psi
  &=
  \bigl(\one-s(\omega)Js(\psi)J\bigr)\xi_\psi \notag\\
  &=
  \bigl(\one-s(\omega)\bigr)\xi_\psi.
  \label{eq:relative-modular-kernel-vector}
\end{align}
Consequently,
\begin{align}
  \mu_\psi(\{0\})
  &=
  \left\|E_{\omega,\psi}(\{0\})\xi_\psi\right\|^2 \notag\\
  &=
  \|(\one-s(\omega))\xi_\psi\|^2 \notag\\
  &=
  \psi(\one-s(\omega))
   =1-\psi(s(\omega)).
  \label{eq:relative-modular-zero-mass}
\end{align}
At the beginning of the proof we passed to the support corner
$s(\omega)Ms(\omega)$, in which $s(\omega)=\one$.  Hence
\[
  \mu_\psi(\{0\})=0.
\]
Since $\psi$ is a state, $\mu_\psi([0,\infty))=\|\xi_\psi\|^2=1$;
therefore $\mu_\psi$ is a probability measure concentrated on
$(0,\infty)$.

Fix
$t_0\in[1/2,1)$.  For $t_0\leq t<1$ and $\lambda>0$,
\begin{equation}
  \lambda^{1-t}\leq1+\lambda^{1-t_0}.
\end{equation}
The right-hand side is integrable because
$\overline Q_{t_0}(\psi\|\omega)<\infty$ from \eqref{eq:petz-sandwiched-endpoint-squeeze}.  Dominated convergence in
\eqref{eq:petz-standard-form-moment} therefore yields
$\overline Q_t(\psi\|\omega)\to1$ as $t\uparrow1$.  The squeeze
\eqref{eq:petz-sandwiched-endpoint-squeeze} proves
$g(t)\to0=g(1)$ and supplies the required endpoint continuity, even when
$D(\psi\Vert\omega)=+\infty$.
\end{proof}

For $\beta\in(1/2,1)$ define the unnormalized escort functional
$\nu_\beta\in M_*^+$ by
\begin{equation}
  h_{\nu_\beta}\defeq A_\beta^\beta.
  \label{eq:lower-unnormalized-escort}
\end{equation}
Then
\begin{equation}
  \nu_\beta(\one)=Q_\beta(\psi\|\omega),
  \qquad
  \nu_\beta=Q_\beta(\psi\|\omega)\psi_\beta^s.
  \label{eq:lower-escort-scaling}
\end{equation}

\begin{lemma}[Moving characteristic]
\label{lem:lower-moving-characteristic}
For $\beta\in(1/2,1)$ and
$r\in(1/(2\beta),1)$,
\begin{align}
  Q_{r,r\beta}(\nu_\beta\|\omega)
  &=Q_{r\beta}(\psi\|\omega),
  \label{eq:lower-moving-moment}\\
  D_{r,r\beta}(\nu_\beta\Vert\omega)
  &=\frac{g(r\beta)-g(\beta)}{r-1}.
  \label{eq:lower-moving-divergence}
\end{align}
\end{lemma}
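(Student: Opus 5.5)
The plan is to compute the left-hand side of \eqref{eq:lower-moving-moment} straight from the definition \eqref{eq:lower-alpha-z-factor}, with first argument $\nu_\beta$, order $\alpha=r$ and $z=r\beta$. The point is that the lower fixed-ray factor of $\nu_\beta$ should collapse to the sandwiched factor $A_{r\beta}$ of $\psi$. We work throughout in the corner $s(\omega)Ms(\omega)$, where $\omega$ is faithful, as at the start of the section.

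First, the parameters are admissible. Since $\beta<1$ and $r<1$ we have $r\beta<1$, and $r>1/(2\beta)$ gives $r\beta>1/2$. Hence $r\beta\in(1/2,1)$, and $A_{r\beta}$ and $g(r\beta)$ are finite by Lemma~\ref{lem:lower-log-convexity}. Also $0<r<1$, so the lower-order definition \eqref{eq:lower-alpha-z-factor} applies to the pair $(\nu_\beta,\omega)$; the same definition is available for positive first arguments by the convention stated after \eqref{eq:algebraic-alpha-z-divergence}.

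Second, compute the factor. The exponent of $h_{\nu_\beta}$ in $x_{r,r\beta}(\nu_\beta\|\omega)$ is $\alpha/z=1/\beta$. Since $A_\beta\geq0$ is $\tau$-measurable, functional calculus gives
\[
  h_{\nu_\beta}^{1/\beta}=(A_\beta^\beta)^{1/\beta}=A_\beta
  =\overline{h_\omega^{\frac{1-\beta}{2\beta}}h_\psi h_\omega^{\frac{1-\beta}{2\beta}}}.
\]
The outer weight in $x_{r,r\beta}(\nu_\beta\|\omega)$ is $h_\omega^{(1-r)/(2r\beta)}$. The exponents add up to
\[
  \frac{1-r}{2r\beta}+\frac{1-\beta}{2\beta}=\frac{1-r\beta}{2r\beta}.
\]
So, provided the closed products reassociate, $x_{r,r\beta}(\nu_\beta\|\omega)=A_{r\beta}$.

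Third, take the trace: $Q_{r,r\beta}(\nu_\beta\|\omega)=\tr(A_{r\beta}^{r\beta})=Q_{r\beta}(\psi\|\omega)$. For \eqref{eq:lower-moving-divergence}, use the normalized convention \eqref{eq:normalized-alpha-z-positive} together with $\nu_\beta(\one)=Q_\beta(\psi\|\omega)$ from \eqref{eq:lower-escort-scaling}. This gives
\[
  D_{r,r\beta}(\nu_\beta\Vert\omega)=\frac{1}{r-1}\bigl(\log Q_{r\beta}-\log Q_\beta\bigr)=\frac{g(r\beta)-g(\beta)}{r-1}.
\]

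The main obstacle is justifying the reassociation of closed products of unbounded operators in the second step, in particular
\[
  h_\omega^{a}\,\overline{h_\omega^{b}h_\psi h_\omega^{b}}\,h_\omega^{a}=\overline{h_\omega^{a+b}h_\psi h_\omega^{a+b}}.
\]
I would handle this by working in the $*$-algebra of $\tau$-measurable operators affiliated with the core $\cN$. There, strong products are associative and closed, and $h_\omega^ah_\omega^b=h_\omega^{a+b}$ holds by the functional calculus of the single positive operator $h_\omega$. This is the same associativity already invoked in \eqref{eq:lower-factor-from-Bp}. Haagerup homogeneity \eqref{eq:haagerup-homogeneity} and the generalized Hölder inequality then place the product in $L^{r\beta}(M)_+$, so taking $\tr$ of its $(r\beta)$-th power is legitimate. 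The identity $(A_\beta^\beta)^{1/\beta}=A_\beta$ is routine spectral calculus for a positive measurable operator.
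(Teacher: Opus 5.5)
Your proposal is correct and follows the paper's proof. You identify $h_{\nu_\beta}^{1/\beta}=A_\beta$, use associativity of closed products in the $\tau$-measurable algebra together with $\frac{1-r}{2r\beta}+\frac{1-\beta}{2\beta}=\frac{1-r\beta}{2r\beta}$ to get $x_{r,r\beta}(\nu_\beta\|\omega)=A_{r\beta}$, and then apply the normalized convention with $\nu_\beta(\one)=\exp g(\beta)$. Your explicit check that $r\beta\in(1/2,1)$, so that $g(r\beta)$ is finite by Lemma~\ref{lem:lower-log-convexity}, is a small point the paper leaves implicit.
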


\begin{proof}
Since $r<1$, the lower branch of the $\alpha$--$z$ definition gives
\begin{align}
  Q_{r,r\beta}(\nu_\beta\|\omega)
  &=\tr\!\left[
  \left(
  h_\omega^{\frac{1-r}{2r\beta}}
  h_{\nu_\beta}^{1/\beta}
  h_\omega^{\frac{1-r}{2r\beta}}
  \right)^{r\beta}\right]\notag\\
  &=\tr\!\left[
  \left(
  h_\omega^{\frac{1-r}{2r\beta}}
  A_\beta
  h_\omega^{\frac{1-r}{2r\beta}}
  \right)^{r\beta}\right].
  \label{eq:lower-moving-computation}
\end{align}
All products displayed here are closed products.  Associativity in the $*$-algebra of $\tau$ measurable operators and
\begin{equation}
  \frac{1-r}{2r\beta}+\frac{1-\beta}{2\beta}
  =\frac{1-r\beta}{2r\beta},
\end{equation}
identify the operator in parentheses with $A_{r\beta}$.  This proves
\eqref{eq:lower-moving-moment}.  Since
$\nu_\beta(\one)=\exp g(\beta)$, the normalized convention
\eqref{eq:normalized-alpha-z-positive} gives
\eqref{eq:lower-moving-divergence}.
\end{proof}

\begin{lemma}[Endpoint of the moving characteristic]
\label{lem:lower-moving-endpoint}
For every nonzero $\nu\in M_*^+$ and every $\beta>0$,
\begin{equation}
  \lim_{r\uparrow1}D_{r,r\beta}(\nu\Vert\omega)
  =D_1(\nu\Vert\omega).
  \label{eq:lower-moving-endpoint}
\end{equation}
\end{lemma}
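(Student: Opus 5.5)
We reduce to the fixed-$z$ limit \eqref{eq:lower-alpha-z-limit}, using monotonicity of the lower-order $\alpha$--$z$ divergence in $z$. The difficulty is that along the moving characteristic both $\alpha=r$ and $z=r\beta$ vary. The known endpoint result only covers $\alpha\uparrow1$ at a \emph{fixed} $z$. So the plan is to trap $D_{r,r\beta}$ between two fixed-$z$ curves whose limits coincide.

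\emph{Step 1 (normalization).} Write $\nu=q\varphi$ with $q=\nu(\one)>0$ and $\varphi$ a normal state. By the homogeneity \eqref{eq:alpha-z-moment-homogeneity} and the normalized convention \eqref{eq:normalized-alpha-z-positive},
\[
D_{r,r\beta}(\nu\Vert\omega)=D_{r,r\beta}(\varphi\Vert\omega)+\log q .
\]
By \eqref{eq:relative-entropy-scaling}, $D_1(\nu\Vert\omega)=D(\varphi\Vert\omega)+\log q$. It therefore suffices to treat a normal state $\varphi$. No support hypothesis is imposed. If $s(\varphi)\not\le s(\omega)$, both sides should be $+\infty$, and the sandwich below handles this case uniformly in the extended reals.

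\emph{Step 2 (monotonicity in $z$).} For fixed $0<\alpha<1$, I will invoke that $z\mapsto Q_{\alpha,z}(\varphi\|\omega)$ is nondecreasing. Equivalently, $z\mapsto D_{\alpha,z}(\varphi\Vert\omega)$ is nonincreasing. In finite dimensions this is the Araki--Lieb--Thirring inequality
\[
\Tr\bigl(B^{1/2}AB^{1/2}\bigr)^{sq}\le\Tr\bigl(B^{s/2}A^sB^{s/2}\bigr)^{q},\qquad s\ge1 .
\]
On a general von Neumann algebra it is the Haagerup-$L^p$ version of that inequality established for the algebraic $\alpha$--$z$ divergences in \cite{Kato:2023aro,Hiai:2024qve}. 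This is the step I expect to be the main obstacle. I must check that the cited monotonicity is available for \emph{all} $z>0$ in the lower branch, without data-processing restrictions. If it is only available in a restricted range, I would instead derive it directly: write $Q_{\alpha,z}=\|\overline{h_\varphi^{\alpha/2z}h_\omega^{(1-\alpha)/2z}}\|_{2z}^{2z}$ and apply Kosaki's ALT inequality in $L^p(M)$.

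\emph{Step 3 (sandwich and limit).} Fix $r_0\in(0,1)$. For $r_0\le r<1$ we have $r_0\beta\le r\beta<\beta$, so Step 2 gives
\[
D_{r,\beta}(\varphi\Vert\omega)\le D_{r,r\beta}(\varphi\Vert\omega)\le D_{r,r_0\beta}(\varphi\Vert\omega).
\]
Let $r\uparrow1$. Both outer terms are fixed-$z$ curves (with $z=\beta$ and $z=r_0\beta$), so by \eqref{eq:lower-alpha-z-limit} each converges to $D_1(\varphi\Vert\omega)=D(\varphi\Vert\omega)$, including the value $+\infty$. The liminf and limsup of the middle term are therefore both squeezed to $D(\varphi\Vert\omega)$. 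Undoing the normalization of Step 1 yields \eqref{eq:lower-moving-endpoint}.
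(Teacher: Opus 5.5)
Your argument is the paper's: trap $D_{r,r\beta}$ between two fixed-$z$ curves using monotonicity in the second parameter, then apply the fixed-$z$ endpoint theorem to both outer curves. The paper uses the brackets $z=\beta/2$ and $z=\beta$ and works with $\nu$ directly, so your normalization step is unnecessary but harmless. The monotonicity you worry about in Step~2 is available for every $z>0$ in the lower branch, with no data-processing restriction; the paper cites \cite[Theorem~5.1]{Hiai:2024qve} for it.

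However, you have the direction of that monotonicity backwards. Apply the ALT inequality you quote with $z_1\le z_2$, $A=\rho^{\alpha/z_2}$, $B=\sigma^{(1-\alpha)/z_2}$, $s=z_2/z_1\ge1$ and $q=z_1$. The left side is $Q_{\alpha,z_2}$ and the right side is $Q_{\alpha,z_1}$, so $Q_{\alpha,z_2}\le Q_{\alpha,z_1}$.

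Thus for $0<\alpha<1$ the moment is nonincreasing in $z$. Because of the negative prefactor $1/(\alpha-1)$, $D_{\alpha,z}$ is therefore nondecreasing in $z$. This is consistent with sandwiched $\le$ Petz below order one. Your Step~3 display is false as written and should read $D_{r,r_0\beta}\le D_{r,r\beta}\le D_{r,\beta}$.

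The squeeze only uses that the middle term lies between two curves that both converge to $D(\varphi\Vert\omega)$. So once the inequalities are reversed, the conclusion goes through unchanged.
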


\begin{proof}
For $r$ sufficiently close to one,
$\beta/2<r\beta<\beta$.  For fixed $0<r<1$, the lower-order divergence
is nondecreasing in its second parameter
\cite[Theorem~5.1]{Hiai:2024qve}; hence
\begin{equation}
  D_{r,\beta/2}(\nu\Vert\omega)
  \leq D_{r,r\beta}(\nu\Vert\omega)
  \leq D_{r,\beta}(\nu\Vert\omega).
  \label{eq:lower-moving-squeeze}
\end{equation}
For each fixed $z>0$, both outer terms converge to normalized Araki
relative entropy as $r\uparrow1$
\cite[Theorem~6.7]{Hiai:2024qve}.  The result follows by the squeeze theorem.
\end{proof}

We can now prove the integral formula.  Combining
Lemmas~\ref{lem:lower-moving-characteristic} and
\ref{lem:lower-moving-endpoint} gives, for
$\beta\in(1/2,1)$,
\begin{equation}
  D_1(\nu_\beta\Vert\omega)
  =\lim_{r\uparrow1}
   \frac{g(r\beta)-g(\beta)}{r-1}
  =\beta g'_-(\beta).
  \label{eq:lower-left-derivative}
\end{equation}
Since a finite convex function has finite one-sided derivatives at interior points, $ D_1(\nu_\beta\Vert\omega)<\infty$.
By \eqref{eq:lower-escort-scaling} and the relative-entropy scaling law
\eqref{eq:relative-entropy-scaling},
\begin{equation}
  D_1(\nu_\beta\Vert\omega)
  =D(\psi_\beta^s\Vert\omega)+g(\beta).
  \label{eq:lower-relative-entropy-scaling}
\end{equation}
Thus
\begin{equation}
  D(\psi_\beta^s\Vert\omega)
  =\beta g'_-(\beta)-g(\beta).
  \label{eq:lower-one-sided-refined}
\end{equation}
In particular, at every differentiability point of $g$, this is the
algebraic version of \eqref{eq:fd-refined-srd}.

The left derivative of a convex function is Borel measurable.  Moreover,
Lemma~\ref{lem:lower-log-convexity} and the convex endpoint fundamental
theorem (see Lemma~\ref{lem:convex-endpoint-ftc}), applied to the reflected function $u\mapsto g(1-u)$, imply that
$g$ is absolutely continuous on $[\alpha,1]$ whenever
$1/2<\alpha<1$.  Hence, for almost every $\beta$,
\begin{equation}
  \frac{\dd}{\dd\beta}\left(\frac{g(\beta)}\beta\right)
  =\frac{\beta g'(\beta)-g(\beta)}{\beta^2}
  =\frac{D(\psi_\beta^s\Vert\omega)}{\beta^2}.
\end{equation}
Integrating from $\alpha$ to $1$, using $g(1)=0$ and
$g(\alpha)=(\alpha-1)D_\alpha(\psi\Vert\omega)$, proves
\eqref{eq:main-srd-integral} for $1/2<\alpha<1$.  To reach
$\alpha=1/2$, let $\alpha_n\downarrow1/2$.  The established continuity of
the Araki--Masuda divergence at its lower endpoint
\cite[Lemma~8]{Berta:2016vnw} gives
$D_{\alpha_n}(\psi\Vert\omega)\to D_{1/2}(\psi\Vert\omega)$, while
monotone convergence applies to the nonnegative integrand on the expanding
intervals $[\alpha_n,1]$.  Since
$\alpha_n/(1-\alpha_n)\to1$, the same identity follows at
$\alpha=1/2$.  The value of the integrand at $\beta=1/2$ or $1$ is
immaterial to the Lebesgue integral.

For $0<\alpha<1/2$, the same identity follows by setting $c=1$ in the
all-$p$ proof of Theorem~\ref{thm:fixed-ray-integral}, given in Section~\ref{sec:proof-alpha-z}.

\subsection{Proof of the sandwiched support-boundary formula}

We end this section by proving
Corollary~\ref{cor:srd-support-boundary}.
Let $e=s(\omega)$ and $q=\psi(e)$, and suppose first that $q>0$.
Under the canonical identification
\begin{equation}
  L^1(eMe)\cong eL^1(M)e,
  \label{eq:corner-L1-identification}
\end{equation}
the canonical $L^1$-functionals and the corresponding
$L^1$--$L^\infty$ pairings are compatible. 
Denote the canonical functionals
temporarily by $\tr_e$ and $\tr_M$. The compatibility states that for $h\in eL^1(M)e, x\in eMe$,
\begin{equation}
    \tr_e(hx) = \tr_M(hx).
\end{equation}

Since the support of $h_\omega$ is $e$, one has
\[
  e h_\omega e=h_\omega.
\]
Thus, for every $x\in eMe$,
\[
  \tr_e(h_\omega x)
  =
  \tr_M(h_\omega x)
  =
  \omega(x)
  =
  \omega_e(x).
\]
Uniqueness of Haagerup $L^1$-densities therefore gives
\[
  h_{\omega_e}=h_\omega.
\]

Similarly, for every $x\in eMe$, the tracial property of the canonical
functional for the $L^1$--$L^\infty$ pairing gives
\begin{align}
  \tr_e\!\left(q^{-1}(e h_\psi e)x\right)
  &=
  q^{-1}\tr_M(eh_\psi e x) \notag\\
  &=
  q^{-1}\tr_M(h_\psi x) \notag\\
  &=
  q^{-1}\psi(x)
  =
  \widehat\psi(x).
  \label{eq:corner-density-pairing}
\end{align}
Again by uniqueness of the Haagerup density,
\[
  h_{\widehat\psi}=q^{-1}e h_\psi e.
\]
Consequently, under the canonical corner identification,
\begin{equation}
  h_{\omega_e}=h_\omega,
  \qquad
  e h_\psi e=q h_{\widehat\psi}.
  \label{eq:corner-density-scaling}
\end{equation}
We henceforth suppress the subscripts on the canonical functional $\tr$.

For $0<\beta<1$, the powers of $h_\omega$ are supported on $e$, so
\begin{align}
  x_\beta^s(\psi\|\omega)
  &=\overline{
  h_\omega^{\frac{1-\beta}{2\beta}}
  e h_\psi e
  h_\omega^{\frac{1-\beta}{2\beta}}}\notag\\
  &=q x_\beta^s(\widehat\psi\|\omega_e).
  \label{eq:corner-srd-factor}
\end{align}
It follows that
\begin{equation}
  Q_\beta(\psi\|\omega)
  =q^\beta Q_\beta(\widehat\psi\|\omega_e),
  \label{eq:corner-srd-moment}
\end{equation}
while the factor $q^\beta$ cancels in the normalized escort.  At
$\beta=\alpha$,
\begin{equation}
  D_\alpha(\psi\Vert\omega)
  =D_\alpha(\widehat\psi\Vert\omega_e)
   -\frac{\alpha}{1-\alpha}\log q.
  \label{eq:corner-srd-divergence}
\end{equation}
Applying the supported formula to $(\widehat\psi,\omega_e)$ proves
\eqref{eq:srd-support-boundary}.  For $\alpha\geq1/2$ this uses the direct
argument above; for $0<\alpha<1/2$ it uses
Theorem~\ref{thm:fixed-ray-integral}.  
If $q=0$, then $eh_\psi e\in L^1(M)_+$ and
\[
  \tr(eh_\psi e)=\psi(e)=q=0.
\]
Faithfulness of $\tr$ on $L^1(M)_+$ therefore gives
$eh_\psi e=0$.  Hence the lower sandwiched factor and moment vanish,
and the divergence is $+\infty$.
Finally,
$q=1$ is equivalent to $s(\psi)\leq e$, proving the last assertion of 
Corollary~\ref{cor:srd-support-boundary}.

\section{Proof of the fixed-ray \texorpdfstring{$\alpha$--$z$}{alpha-z}
representation}
\label{sec:proof-alpha-z}

We now prove Theorem~\ref{thm:fixed-ray-integral}, Proposition~\ref{prop:general-fixed-ray-boundary}, and Corollaries~\ref{cor:srd-integral},~\ref{cor:srd-support-boundary},~\ref{cor:raywise-monotonicity}.  We prove Theorem~\ref{thm:fixed-ray-integral} by realizing the fixed-ray factors as a compatible weighted Haagerup path.  An exact quotient identity and a varying-$z$ endpoint limit identify the left derivative of its logarithmic norm with the Araki relative entropy of the normalized power, while convexity supplies the absolute continuity needed to integrate this identity. A branch-dependent choice of anchor then identifies the path with the fixed-ray factors, yielding the theorem and Corollary~\ref{cor:raywise-monotonicity}.  Proposition~\ref{prop:general-fixed-ray-boundary} then follows by passing to the support corner and tracking the resulting normalization through the moments and escort states.

\subsection{A compatible weighted Haagerup scale}
\label{subsec:compatible-scale}

The following lemma makes the quasi-Banach input precise. 
All identities between closed Haagerup products below use the standard
associativity and generalized H\"older theorems for measurable operators;
see \cite{Terp:1981lp,Kato:2023aro}.

\begin{lemma}[Compatible weighted scale]
\label{lem:compatible-weighted-scale}
Let $\omega$ be a faithful normal state, let $r>0$, and let
$0\ne Y_r\in L^r(M)_+$.  For $0<p\leq r$, set
\begin{equation}
  d(p,r)\defeq \frac12\left(\frac1p-\frac1r\right)
  \label{eq:weighted-exponent}
\end{equation}
and put
\[
  T_{p,r}\defeq
  \overline{Y_r^{1/2}h_\omega^{d(p,r)}}\in L^{2p}(M).
\]
Define
the positive closed product
\begin{equation}
  Y_p
  \defeq T_{p,r}^*T_{p,r}
  =\overline{h_\omega^{d(p,r)}Y_rh_\omega^{d(p,r)}}
  \in L^p(M)_+.
  \label{eq:weighted-path}
\end{equation}
Then $0\ne Y_p\in L^p(M)_+$ and:
\begin{enumerate}[label=\textup{(\roman*)}]
\item for $0<p\leq q\leq r$,
\begin{equation}
  \overline{h_\omega^{d(p,q)}Y_qh_\omega^{d(p,q)}}=Y_p,
  \qquad
  \|Y_p\|_p\leq\|Y_q\|_q;
  \label{eq:weighted-compatibility}
\end{equation}
\item if
\begin{equation*}
  \frac1{p_\theta}
  =\frac{1-\theta}{p_0}+\frac{\theta}{p_1},
  \qquad 0<p_0<p_1\leq r,\qquad 0<\theta<1,
\end{equation*}
then
\begin{equation}
  \|Y_{p_\theta}\|_{p_\theta}
  \leq\|Y_{p_0}\|_{p_0}^{1-\theta}
       \|Y_{p_1}\|_{p_1}^{\theta};
  \label{eq:weighted-log-convexity}
\end{equation}
\item the anchoring endpoint is continuous:
\begin{equation}
  \lim_{p\uparrow r}\|Y_p\|_p=\|Y_r\|_r.
  \label{eq:weighted-endpoint-continuity}
\end{equation}
Consequently,
\begin{equation}
  u\longmapsto B(u)\defeq \log\|Y_{1/u}\|_{1/u}
  \label{eq:weighted-convex-B}
\end{equation}
is finite and convex for $u>1/r$ and has a finite continuous extension to
$u=1/r$.
\end{enumerate}
\end{lemma}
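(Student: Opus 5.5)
The plan is to handle well-definedness first, then (i) by associativity and H\"older, then reduce (ii) to a single interpolation inequality, and finally treat (iii) by a squeeze.

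\textbf{Well-definedness.} Since $Y_r^{1/2}\in L^{2r}(M)$ and $h_\omega^{d(p,r)}\in L^{1/d(p,r)}(M)$ with
\[
  \frac1{2r}+d(p,r)=\frac1{2p},
\]
generalized H\"older places $T_{p,r}$ in $L^{2p}(M)$. Hence $Y_p=T_{p,r}^*T_{p,r}\in L^p(M)_+$. Because $\omega$ is faithful, $h_\omega$ has trivial kernel and dense range, and so does every positive power of it. If $T_{p,r}=0$, then $Y_r^{1/2}$ vanishes on the dense range of $h_\omega^{d(p,r)}$, hence $Y_r=0$, a contradiction. So $Y_p\neq0$.

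\textbf{Part (i).} For $p\le q\le r$ one has $d(p,q)+d(q,r)=d(p,r)$. Associativity of closed products of $\tau$-measurable operators, together with $h_\omega^{a}h_\omega^{b}=h_\omega^{a+b}$, gives the compatibility identity $\overline{h_\omega^{d(p,q)}Y_qh_\omega^{d(p,q)}}=Y_p$. For the norm inequality, I would use the polar-decomposition trick already employed in Lemma~\ref{lem:lower-log-convexity}:
\[
  \|Y_p\|_p=\|T_{p,q}\|_{2p}^2
  \le\bigl(\|Y_q^{1/2}\|_{2q}\,\|h_\omega^{d(p,q)}\|_{1/d(p,q)}\bigr)^2
  =\|Y_q\|_q\,\omega(\one)^{2d(p,q)}
  =\|Y_q\|_q .
\]

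\textbf{Part (ii).} By (i), I may re-anchor the whole family at $q=p_1$, that is, replace $Y_r$ by $Y_{p_1}$; this does not change any $Y_p$ with $p\le p_1$. So it suffices to treat $p_1=r$. Write $x=Y_r^{1/2}$. Then
\[
  \|Y_p\|_p=\bigl\|\,x\,h_\omega^{\frac1{2p}-\frac1{2r}}\bigr\|_{2p}^2 ,
\]
which is a Kosaki/Araki--Masuda-type weighted norm along the line $1/p$. I would consider the analytic family
\[
  F(w)=x\,h_\omega^{\frac{w}{2}\left(\frac1{p_0}-\frac1r\right)}
\]
on the strip $0\le\operatorname{Re}w\le1$. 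On the line $\operatorname{Re}w=1$ it gives $T_{p_0,r}$ up to a unitary factor $h_\omega^{it}$. On $\operatorname{Re}w=0$ it gives $x$ times a unitary. One checks that $w=1-\theta$ produces exactly $T_{p_\theta,r}$, with exponent $1/(2p_\theta)-1/(2r)$.

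For $2p\ge1$, Hadamard's three-lines argument in the Banach Haagerup spaces suffices. For $2p<1$, I would invoke the all-$p$ complex interpolation theorem of Gu--Yin--Zhang \cite{Gu:2019jfc}, identifying the weighted spaces $L^{2p}$ along this family as complex interpolation spaces. This yields
\[
  \|T_{p_\theta,r}\|_{2p_\theta}\le\|x\|_{2r}^{\theta}\,\|T_{p_0,r}\|_{2p_0}^{1-\theta},
\]
and squaring gives \eqref{eq:weighted-log-convexity}. I expect this step to be the main obstacle. The difficulty is verifying the hypotheses of the quasi-Banach theorem, namely boundedness and analyticity of $F$ in a suitable sense when $h_\omega^{w}$ is unbounded. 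I would first try to truncate with spectral projections of $h_\omega$ inside the core, and then remove the truncation by monotone limits.

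\textbf{Part (iii) and the consequence.} Part (i) gives $\limsup_{p\uparrow r}\|Y_p\|_p\le\|Y_r\|_r$. For the reverse inequality I would show $T_{p,r}\to Y_r^{1/2}$ in the measure topology of $\cN$ as $d(p,r)\to0$, using $h_\omega^{d}\to s(h_\omega)=\one$ strongly. Then generalized singular numbers converge at continuity points, $\mu_t(Y_p)\to\mu_t(Y_r)$, and Fatou's lemma applied to $\int\mu_t(\cdot)^p\,dt$, with the varying exponents handled by a.e.\ convergence, gives the liminf inequality. In the Haagerup setting this must be done for the trace $\tau$ on $\cN$, where the $L^p$ norm is read off from the distribution function at a fixed point. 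Finally, finiteness of $B$ follows from nonvanishing and (i). Convexity of $B$ in $u=1/p$ is exactly the log form of (ii). The continuous extension of $B$ to $u=1/r$ is (iii).
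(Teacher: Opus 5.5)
Your well-definedness argument and part (i) are fine and agree in substance with the paper: associativity of closed products plus H\"older gives both the compatibility identity and the contraction $\|Y_p\|_p\le\|Y_q\|_q$.

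\textbf{The gap in (ii).} The genuine gap is in (ii), at the point where you need the interpolation inequality for an \emph{arbitrary} anchor $Y_r$. Your family $F(w)=x\,h_\omega^{\frac w2(\frac1{p_0}-\frac1r)}$, with a general $x\in L^{2r}(M)$, does not take values in a single quasi-Banach space in any evident way. Its boundary values $x\,h_\omega^{d}h_\omega^{it}$ are not even in a Haagerup space, since $\theta_s(h_\omega^{it})=\e^{-ist}h_\omega^{it}$. Below $L^1$ you also cannot reduce to scalar three-lines by duality.

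The repair you suggest does not work. The spectral projections $\one_{[1/n,n]}(h_\omega)$ lie in $\cN$ but are not $\theta$-invariant, so the truncated operators leave $L^p(M)$. Neither the Haagerup quasi-norms nor the Gu--Yin--Zhang theorem apply to them, and there is no monotonicity in $n$ that would let you remove the truncation.

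The missing idea is a density argument driven by the contraction you already proved in (i). It has three ingredients:
\begin{itemize}
\item Gu--Yin--Zhang prove the inequality exactly on the core: $\|i_{p_\theta}^\omega(a)\|_{p_\theta}\le\|i_{p_0}^\omega(a)\|_{p_0}^{1-\theta}\|i_{p_1}^\omega(a)\|_{p_1}^{\theta}$, where $i_p^\omega(a)=h_\omega^{1/(2p)}a\,h_\omega^{1/(2p)}$ and $a$ is $\sigma^\omega$-entire.
\item The set $i_r^\omega(\mathcal A_\omega)$ is dense in $L^r(M)$ (Junge--Xu).
\item The map $x\mapsto\overline{h_\omega^{d(p,r)}x\,h_\omega^{d(p,r)}}$ is a contraction $L^r(M)\to L^p(M)$ for \emph{every} $p\le r$, and it sends $i_r^\omega(a)$ to $i_p^\omega(a)$.
\end{itemize}
Hence $i_r^\omega(a_n)\to Y_r$ forces $i_p^\omega(a_n)\to Y_p$ in $L^p(M)$ simultaneously for all $p\le r$. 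The $p$-triangle inequality $\left|\|u\|_p^p-\|v\|_p^p\right|\le\|u-v\|_p^p$ then lets the core inequality pass to the limit. This is how the paper proves (ii).

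\textbf{Part (iii).} The paper uses the same uniform approximation, together with a core estimate obtained by interpolating $r$ between $p$ and some $s>r$. Your measure-topology route is a legitimate alternative, and it closes more simply than you indicate. For $y\in L^p(M)$ one has $\mu_t(y)=t^{-1/p}\|y\|_p$, so $\|Y_p\|_p=\mu_1(Y_p)$. Since $t\mapsto\mu_t(Y_r)$ is continuous, convergence $Y_p\to Y_r$ in measure already gives the full limit, with no Fatou argument.

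However, that convergence does not follow from $h_\omega^d\to\one$ strongly. Because $\tau(\one_{(0,\delta)}(h_\omega))=\infty$ for every $\delta>0$, one has $\mu_t(h_\omega^d-\one)\ge\frac12$ for all $t>0$ and all $d>0$. To obtain $Y_r^{1/2}h_\omega^{d}\to Y_r^{1/2}$ in measure you must again approximate. For instance, approximate $Y_r^{1/2}$ in $L^{2r}(M)$ by $b\,h_\omega^{1/(2r)}$ with $b\in M$, bound the error uniformly in $p$ by H\"older, and use that $h_\omega^{1/(2p)}\to h_\omega^{1/(2r)}$ in measure.

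In both (ii) and (iii), the ingredient your proposal lacks is approximation of the anchor by core elements, transported through the uniformly contractive downward maps.
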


\begin{proof}
The factorization in
\eqref{eq:weighted-path} follows from closed multiplication and H\"older,
because
\begin{equation}
  \frac1{2p}=\frac1{2r}+d(p,r).
\end{equation}
To verify nonvanishing, the case $p=r$ is immediate, so suppose
$p<r$ and put
\[
  e_n=\one_{[1/n,n]}(h_\omega).
\]
Since $\omega$ is faithful, $e_n\uparrow\one$. If $T_{p,r}=0$, then
right multiplication by the bounded operator
$h_\omega^{-d(p,r)}e_n$ gives
\[
  Y_r^{1/2}e_n=0 .
\]
Let $k_r=\one_{\{0\}}(Y_r)$ be the kernel projection of $Y_r$.
The preceding equality implies $e_n\leq k_r$ for every $n$. Hence
\[
  \one=\bigvee_n e_n\leq k_r,
\]
so $k_r=\one$ and therefore $Y_r=0$. This contradicts the assumption
that $Y_r\neq0$. Thus $T_{p,r}\neq0$, and the factorization in
\eqref{eq:weighted-path} implies that $Y_p\neq0$.

Denote the modular automorphism group of $\omega$ by $\sigma^\omega$.
Let $\mathcal A_\omega\subset M$ be the subalgebra of 
$\sigma^\omega$-entire analytic elements and, for $a\in\mathcal A_\omega$, define the
closed product
\begin{equation}
  i_p^\omega(a)
  \defeq \overline{h_\omega^{1/(2p)}a h_\omega^{1/(2p)}}.
  \label{eq:compatible-core-embedding}
\end{equation}
By \cite[Lemma~1.1]{Junge:2003nbr}, the image
$i_p^\omega(\mathcal A_\omega)$ is dense in $L^p(M)$ for every $0<p<\infty$.
For $x=i_q^\omega(a)$ set
\begin{equation}
  J_{p,q}^\omega\bigl(i_q^\omega(a)\bigr)
  \defeq i_p^\omega(a),
  \qquad 0<p\leq q.
  \label{eq:downward-map}
\end{equation}
Since
\begin{equation}
  i_p^\omega(a)
  =\overline{h_\omega^{d(p,q)}
             i_q^\omega(a)
             h_\omega^{d(p,q)}},
\end{equation}
generalized H\"older multiplication gives, with $x=i_q^\omega(a)$ and the evident convention when $d(p,q)=0$,
\begin{align}
  \|J_{p,q}^\omega(x)\|_p
  &\leq
  \|h_\omega^{d(p,q)}\|_{1/d(p,q)}^2\|x\|_q\notag\\
  &=\tr(h_\omega)^{2d(p,q)}\|x\|_q
  =\|x\|_q.
  \label{eq:downward-contraction}
\end{align}
\cite[Proposition~2.6]{Gu:2019jfc} also records this compatible core
contraction.  For arbitrary $x\in L^q(M)$, generalized H\"older
multiplication further shows that the closed product
\begin{equation}
  \mathsf M_{p,q}(x)
  \defeq\overline{h_\omega^{d(p,q)}x h_\omega^{d(p,q)}}
  \label{eq:downward-closed-product}
\end{equation}
belongs to $L^p(M)$ and satisfies
$\|\mathsf M_{p,q}(x)\|_p\leq\|x\|_q$.  Since it agrees with
$J_{p,q}^\omega$ on the dense core, it is precisely the unique continuous
extension of that map.  We continue to denote the extension by
$J_{p,q}^\omega$.

For \(a\in\mathcal A_\omega\), the defining relation gives
\[
  \bigl(J^\omega_{p,q} J^\omega_{q,r}\bigr)
  \bigl(i_r^\omega(a)\bigr)
  =
  J^\omega_{p,q}\bigl(i_q^\omega(a)\bigr)
  =
  i_p^\omega(a)
  =
  J^\omega_{p,r}\bigl(i_r^\omega(a)\bigr).
\]
Since \(i_r^\omega(\mathcal A_\omega)\) is dense in \(L^r(M)\) and both
\(J^\omega_{p,q} J^\omega_{q,r}\) and \(J^\omega_{p,r}\) are
continuous maps from \(L^r(M)\) to \(L^p(M)\), the identity extends to
all of \(L^r(M)\):
\begin{equation}
  J^\omega_{p,q} J^\omega_{q,r}
  =
  J^\omega_{p,r}.
  \label{eq:downward-composition}
\end{equation}
By \eqref{eq:weighted-path} and the closed-product realization of
\(J^\omega_{s,r}\), we have
\[
  Y_s=J^\omega_{s,r}(Y_r),
  \qquad 0<s\le r.
\]
Consequently, for \(0<p\le q\le r\), the composition identity gives
\[
  J^\omega_{p,q}(Y_q)
  =
  J^\omega_{p,q}J^\omega_{q,r}(Y_r)
  =
  J^\omega_{p,r}(Y_r)
  =
  Y_p.
\]
In terms of closed products, this is
\[
  h_\omega^{d(p,q)}Y_qh_\omega^{d(p,q)}=Y_p.
\]
Finally, the contractivity of \(J^\omega_{p,q}\) yields
\[
  \|Y_p\|_p
  =
  \|J^\omega_{p,q}(Y_q)\|_p
  \leq
  \|Y_q\|_q.
\]
This proves 
\eqref{eq:weighted-compatibility}.  

We now use \cite[Theorem~4.1 and Corollary~4.3]{Gu:2019jfc} which provide the all-$p$ interpolation inequality
\begin{equation}
  \|i_{p_\theta}^\omega(a)\|_{p_\theta}
  \leq
  \|i_{p_0}^\omega(a)\|_{p_0}^{1-\theta}
  \|i_{p_1}^\omega(a)\|_{p_1}^{\theta},
  \label{eq:core-all-p-interpolation}
\end{equation}
proved by quasi-Banach complex interpolation in the crossed product.

Choose $a_n\in\mathcal A_\omega$ with
$i_r^\omega(a_n)\to Y_r$ in $L^r(M)$.  Contractivity gives
\begin{equation}
  i_p^\omega(a_n)
  =J_{p,r}^\omega(i_r^\omega(a_n))
  \longrightarrow Y_p
  \quad\text{in }L^p(M).
  \label{eq:core-approximation-all-p}
\end{equation}
For $0<p<1$, the standard Haagerup $L^p$ quasi-norm satisfies the
$p$-triangle inequality
\begin{equation}
  \|u+v\|_p^p
  \leq
  \|u\|_p^p+\|v\|_p^p,
  \qquad u,v\in L^p(M),
  \label{eq:haagerup-p-triangle}
\end{equation}
see \cite[Theorem~4.9(iii)]{Fack:1986ezo}.  Applying
\eqref{eq:haagerup-p-triangle} to
$x_n=x+(x_n-x)$ and then to $x=x_n+(x-x_n)$ gives
\begin{equation}
  \left|\|x_n\|_p^p-\|x\|_p^p\right|
  \leq
  \|x_n-x\|_p^p.
  \label{eq:p-power-continuity}
\end{equation}
Consequently, $x_n\to x$ in $L^p(M)$ implies
$\|x_n\|_p^p\to\|x\|_p^p$, and hence
$\|x_n\|_p\to\|x\|_p$.
Passing to the limit in \eqref{eq:core-all-p-interpolation} therefore
proves \eqref{eq:weighted-log-convexity} for the arbitrary anchor $Y_r$.

It remains to prove \eqref{eq:weighted-endpoint-continuity}.  For a fixed
$a\in\mathcal A_\omega$, choose $s>r$.  Contractivity gives
\begin{equation}
  \|i_p^\omega(a)\|_p\leq\|i_r^\omega(a)\|_r,
  \qquad p<r.
  \label{eq:core-endpoint-upper}
\end{equation}
Interpolate to $r$ between $p$ and $s$ to obtain
\begin{equation}
  \|i_r^\omega(a)\|_r
  \leq\|i_p^\omega(a)\|_p^{1-\theta(p)}
       \|i_s^\omega(a)\|_s^{\theta(p)},
  \qquad \theta(p)\longrightarrow0
  \quad(p\uparrow r),
  \label{eq:core-endpoint-lower}
\end{equation}
with $\theta(p)$ defined to satisfy $r^{-1}=(1-\theta(p))p^{-1}+\theta(p)s^{-1}$. 
Combining \eqref{eq:core-endpoint-upper} and \eqref{eq:core-endpoint-lower},
\begin{equation}
  \lim_{p\uparrow r}\|i_p^\omega(a)\|_p
  =\|i_r^\omega(a)\|_r.
  \label{eq:core-endpoint-continuity}
\end{equation}

Set
$\delta_n=\|Y_r-i_r^\omega(a_n)\|_r\to0$.
By contractivity,
\begin{equation}
  \|Y_p-i_p^\omega(a_n)\|_p\leq\delta_n
  \quad\text{uniformly for }p\leq r.
  \label{eq:endpoint-uniform-error}
\end{equation}
If $r>1$, take $p$ close enough to $r$ that $p\geq1$ and use the reverse
triangle inequality together with
\eqref{eq:core-endpoint-continuity}; this gives
\begin{equation}
  \limsup_{p\uparrow r}
  \left|\|Y_p\|_p-\|Y_r\|_r\right|
  \leq2\delta_n.
\end{equation}
Letting $n\to\infty$ proves the claim.

If $0<r\leq1$, take $p\in[r/2,r]$ and $0<\delta_n\leq1$.  By
\eqref{eq:p-power-continuity} and
\eqref{eq:endpoint-uniform-error},
\begin{equation}
  \left|\|Y_p\|_p^p-
  \|i_p^\omega(a_n)\|_p^p\right|
  \leq\delta_n^p\leq\delta_n^{r/2}.
\end{equation}
The analogous error at $p=r$ is at most $\delta_n^r$.  The ``reverse
triangle inequality'' together with
\eqref{eq:core-endpoint-continuity} now gives
\begin{equation}
  \limsup_{p\uparrow r}
  \left|\|Y_p\|_p^p-\|Y_r\|_r^r\right|
  \leq\delta_n^{r/2}+\delta_n^r.
\end{equation}
Letting $n\to\infty$ proves convergence of the $p$th powers.  Since
$p\to r>0$ and $Y_r\ne0$, taking the varying root gives
$\|Y_p\|_p\to\|Y_r\|_r$, proving endpoint continuity also in the
quasi-Banach case.  Finally, \eqref{eq:weighted-log-convexity} is exactly convexity of
$B$ in the reciprocal variable, and
\eqref{eq:weighted-endpoint-continuity} supplies its endpoint value.
\end{proof}

\begin{remark}[The quasi-Banach input]
\label{rem:quasi-banach-input}
The all-$p$ inequality \eqref{eq:core-all-p-interpolation} is a substantive
input from the preprint \cite{Gu:2019jfc}.  Without the quasi-Banach input, the
same proof is covered by ordinary Banach interpolation
\cite{KOSAKI198429,Terp:1981lp} whenever
the entire $p$ interval lies in $[1,\infty)$.  Concretely, this includes
the lower branch when $z\geq1$ and the upper branch when $z\geq\alpha$.
The separate argument in Section~\ref{sec:proof-lower-srd} also proves the
sandwiched formula for $1/2\leq\alpha<1$ without the all-$p$ input.  The
remaining unrestricted sub-$L^1$ range is exactly where
\cite{Gu:2019jfc} is used.
\end{remark}

We shall also use the following elementary convex-analysis fact.

\begin{lemma}[Convex endpoint fundamental theorem]
\label{lem:convex-endpoint-ftc}
Let $f$ be finite and convex on $(a,b)$ and suppose that it has a finite
continuous extension to $a$.  Then its extension is absolutely continuous
on every compact interval $[a,b']\subset[a,b)$, and
\begin{equation}
  f(b')-f(a)=\int_a^{b'}f'_+(u)\,\dd u
             =\int_a^{b'}f'_-(u)\,\dd u.
  \label{eq:convex-endpoint-ftc}
\end{equation}
The two one-sided derivatives agree at all but at most countably many points.
\end{lemma}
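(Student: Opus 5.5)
Working on $(a,b')$ first, we will show that the one-sided derivatives of $f$ exist and are finite and nondecreasing, obtain the integral formula on compact subintervals away from $a$, and then pass to the endpoint $a$.

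\emph{Monotone slopes.} For $a<u<v<w<b$, convexity of $f$ gives the chord-slope inequalities
\[
  \frac{f(v)-f(u)}{v-u}\le\frac{f(w)-f(u)}{w-u}\le\frac{f(w)-f(v)}{w-v}.
\]
Hence $f'_\pm$ exist everywhere in $(a,b)$, are finite and nondecreasing, and satisfy $f'_-(u)\le f'_+(u)\le f'_-(v)$ for $u<v$. Two facts follow. First, $f'_-$ and $f'_+$ are monotone and therefore Borel measurable. Second, the intervals $(f'_-(u),f'_+(u))$ are pairwise disjoint, and each nonempty one contains a distinct rational. So $f'_-=f'_+$ except at countably many points.

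\emph{Compact subintervals.} On $[a',b']$ with $a<a'<b'<b$, take $b''\in(b',b)$. The slope inequalities then make $f$ Lipschitz on $[a',b']$, with constant $\max\{|f'_+(a')|,|f'_-(b'')|\}$. A Lipschitz function is absolutely continuous, so the Lebesgue fundamental theorem gives
\[
  f(b')-f(a')=\int_{a'}^{b'}f'(u)\,\dd u .
\]
Here $f'$ agrees a.e.\ with both $f'_+$ and $f'_-$, since the two differ on a countable set.

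\emph{Endpoint passage (the only delicate step).} On $(a,b']$ the integrand is nondecreasing. If $f'_+$ is bounded below near $a$, it is bounded there, the extension is Lipschitz on $[a,b']$, and the claim follows by continuity. Otherwise $f'_+(u)\to-\infty$ as $u\downarrow a$. In that case I fix $u_0\in(a,b')$ with $f'_+<0$ on $(a,u_0]$. Letting $a'\downarrow a$, monotone convergence applies to the nonpositive integrand on $(a',u_0]$, and the left side converges to $f(u_0)-f(a)$ by the assumed continuity of the extension at $a$. Thus
\[
  \int_a^{u_0}f'_+(u)\,\dd u=f(u_0)-f(a),
\]
which is finite, so $f'_+\in L^1(a,b')$. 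The integral formula \eqref{eq:convex-endpoint-ftc} then holds on $[a,b']$, for $f'_+$ and also for $f'_-$ (the two agree a.e.).

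Absolute continuity on $[a,b']$ follows, because the extension is the indefinite integral $f(x)=f(a)+\int_a^x f'_+(u)\,\dd u$ of an integrable function. The main point to watch is exactly this step: convexity alone does not give continuity at $a$, which is why the hypothesis of a continuous finite extension is used to identify the limit of $f(a')$ as $a'\downarrow a$.
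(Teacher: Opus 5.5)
Your proof is correct and follows essentially the same route as the paper. Both arguments use the interior fundamental theorem for convex functions and the monotonicity of the one-sided derivatives, then let $a'\downarrow a$ and use continuity at $a$ to show that the improper integral is finite. Your case split, with monotone convergence on the region where $f'_+<0$, is a concrete version of the paper's remark that a monotone integrand changes sign at most once; you also fill in details the paper treats as standard, namely countable disagreement of $f'_\pm$ via rationals and Lipschitz bounds on interior intervals.
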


\begin{proof}
On every interior compact interval this is the standard fundamental
theorem for convex functions.  The one-sided derivatives are monotone.
Applying the interior formula on $[a+\varepsilon,b']$ and sending
$\varepsilon\downarrow0$, continuity at $a$ shows that the improper
integral of either derivative converges to $f(b')-f(a)$.  Its positive and
negative parts cannot both be infinite because a monotone function changes
sign at most once; since the limiting difference is finite, neither part
can be infinite.  Thus the derivative is integrable, which proves
absolute continuity and \eqref{eq:convex-endpoint-ftc}.
\end{proof}

The compatible weighted scale has the following consequence 
independent of the fixed-ray construction: along any compatible weighted $L^p$-path, the logarithmic $L^p$-norm is a potential whose infinitesimal slope is proportional to the Araki relative entropy of the normalized $p$-th power.

\begin{proposition}[Anchored entropy identity]
\label{prop:anchored-entropy-identity}
Let $\omega$ be a faithful normal state on $M$, let $r>0$, and let
$0\ne Y_r\in L^r(M)_+$.  For $0<p\leq r$, let
\begin{equation}
  Y_p
  \defeq
  \overline{
    h_\omega^{\frac12(\frac1p-\frac1r)}
    Y_r
    h_\omega^{\frac12(\frac1p-\frac1r)}}
  \in L^p(M)_+
  \label{eq:anchored-entropy-path}
\end{equation}
be the compatible path of
Lemma~\ref{lem:compatible-weighted-scale}, and set
\begin{equation}
  Z(p)\defeq\tr(Y_p^p),
  \qquad
  G(p)\defeq\frac1p\log Z(p)=\log\|Y_p\|_p,
  \qquad
  h_{\eta_p}\defeq\frac{Y_p^p}{Z(p)}.
  \label{eq:anchored-entropy-data}
\end{equation}
Then $Z(p)\in(0,\infty)$ and $\eta_p$ is a normal state.  Moreover:
\begin{enumerate}[label=\textup{(\roman*)}]
\item For every $0<p'<p\leq r$,
\begin{align}
  Q_{p'/p,p'}(\eta_p\|\omega)
  &=Z(p')Z(p)^{-p'/p},
  \label{eq:anchored-exact-quotient}\\
  \frac{G(p')-G(p)}{p'-p}
  &=\frac{D_{p'/p,p'}(\eta_p\Vert\omega)}{pp'}.
  \label{eq:anchored-exact-difference-quotient}
\end{align}

\item For every $0<p\leq r$, the left derivative exists in the
extended-real sense and satisfies
\begin{equation}
  G'_-(p)
  =\frac{D(\eta_p\Vert\omega)}{p^2}.
  \label{eq:anchored-entropy-left-derivative}
\end{equation}
For $p<r$ both sides are finite.  At the anchoring endpoint $p=r$
they may equal $+\infty$.

\item The map
\begin{equation}
  p\longmapsto D(\eta_p\Vert\omega)
  \label{eq:anchored-entropy-profile}
\end{equation}
is nondecreasing and Borel measurable as an
$[0,+\infty]$-valued function on $(0,r]$.

\item The function $G$ is absolutely continuous on every compact interval
$[p_0,p_1]\subset(0,r]$, and
\begin{equation}
  G(p_1)-G(p_0)
  =
  \int_{p_0}^{p_1}
  \frac{D(\eta_p\Vert\omega)}{p^2}\,\dd p,
  \qquad 0<p_0<p_1\leq r.
  \label{eq:anchored-entropy-integral}
\end{equation}
In particular,
\begin{equation}
  \log\frac{\|Y_r\|_r}{\|Y_{p_0}\|_{p_0}}
  =
  \int_{p_0}^{r}
  \frac{D(\eta_p\Vert\omega)}{p^2}\,\dd p.
  \label{eq:anchored-entropy-integral-to-anchor}
\end{equation}
These integrals are finite.  If
$D(\eta_r\Vert\omega)=+\infty$, its value at the single endpoint $r$
does not affect the Lebesgue integral.
\end{enumerate}
\end{proposition}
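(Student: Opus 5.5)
The plan mirrors the direct sandwiched argument of Section~\ref{sec:proof-lower-srd}: replace the Araki--Masuda interpolation by Lemma~\ref{lem:compatible-weighted-scale}, and replace the moving characteristic of Lemma~\ref{lem:lower-moving-characteristic} by an exact quotient along the compatible path.

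\emph{Preliminaries.} By Lemma~\ref{lem:compatible-weighted-scale} we have $Y_p\neq0$, hence $Z(p)=\|Y_p\|_p^p\in(0,\infty)$. Faithfulness of $\tr$ on $L^1(M)_+$ then shows that $\eta_p$ is a normal state.

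\emph{Step (i).} Fix $0<p'<p\le r$ and put $\alpha=p'/p<1$, $z=p'$. Then $\alpha/z=1/p$, so $h_{\eta_p}^{\alpha/z}=Z(p)^{-1/p}Y_p$. The weight exponent is
\[
\frac{1-\alpha}{2z}=\frac12\Bigl(\frac1{p'}-\frac1p\Bigr)=d(p',p).
\]
By the compatibility relation \eqref{eq:weighted-compatibility}, the lower $\alpha$--$z$ factor is therefore $x_{\alpha,z}(\eta_p\|\omega)=Z(p)^{-1/p}Y_{p'}$. Taking $\tr(\,\cdot\,^{p'})$ gives \eqref{eq:anchored-exact-quotient}. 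Since $\eta_p$ is a state, $D_{\alpha,z}=\frac{1}{\alpha-1}\log Q$. A short rearrangement using $\alpha-1=(p'-p)/p$ and $\log Z(p')-\tfrac{p'}{p}\log Z(p)=p'(G(p')-G(p))$ yields \eqref{eq:anchored-exact-difference-quotient}.

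\emph{Step (ii).} Let $p'\uparrow p$ in the difference quotient. Then $\alpha=p'/p\uparrow1$ and $z=p'\to p$, so the divergence is taken along a moving characteristic $z=\alpha p$. As in Lemma~\ref{lem:lower-moving-endpoint}, I would use monotonicity of $D_{\alpha,z}$ in $z$ for $\alpha<1$ \cite[Theorem~5.1]{Hiai:2024qve}. This squeezes $D_{\alpha,\alpha p}$ between $D_{\alpha,p/2}$ and $D_{\alpha,p}$, and both bounds converge to $D(\eta_p\Vert\omega)$ by \eqref{eq:lower-alpha-z-limit}. This gives the left derivative in the extended sense, even when $D=+\infty$.

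For $p<r$ the value is finite. Indeed, $B(u)=G(1/u)$ is finite and convex for $u>1/r$ by Lemma~\ref{lem:compatible-weighted-scale}(ii). Since $G(p)=B(1/p)$, the left derivative of $G$ at $p$ equals $-p^{-2}$ times the right derivative of $B$ at $u=1/p$, which is finite at interior points. This is the step I expect to need the most care. The limit identification rests entirely on the cited $z$-monotonicity and the fixed-$z$ limit. The change of variables must also be tracked correctly: the left derivative in $p$ corresponds to the right derivative in $u$.

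\emph{Steps (iii) and (iv).} Write $D(\eta_p\Vert\omega)=-B'_+(1/p)$. The right derivative of the convex function $B$ is nondecreasing in $u$, hence nonincreasing in $p$, so $-B'_+(1/p)$ is nondecreasing in $p$. This gives monotonicity on $(0,r)$; at $p=r$ the value is the limit or $+\infty$, which is consistent. Borel measurability follows because the function is monotone.

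For absolute continuity, apply Lemma~\ref{lem:convex-endpoint-ftc} to $B$ on $(1/r,1/p_0]$. The lemma needs a continuous extension at $u=1/r$, and this is supplied by Lemma~\ref{lem:compatible-weighted-scale}(iii). Changing variables $u=1/p$ and inserting the derivative formula from Step~(ii) gives \eqref{eq:anchored-entropy-integral}. The special case \eqref{eq:anchored-entropy-integral-to-anchor} follows with $p_1=r$.

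The integrals are finite because the left-hand side is finite. The endpoint value at $p=r$ lies on a null set and does not affect the Lebesgue integral.
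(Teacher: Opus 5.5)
Your proposal is correct and follows essentially the same route as the paper's proof. It obtains the exact quotient from the compatibility relation, identifies the left derivative via the squeeze $D_{p'/p,p/2}\le D_{p'/p,p'}\le D_{p'/p,p}$ together with the fixed-$z$ lower-order limit, and derives finiteness, monotonicity and the integral formula from convexity of $B(u)=G(1/u)$ and the convex endpoint fundamental theorem (Lemma~\ref{lem:convex-endpoint-ftc}).
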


\begin{proof}
Lemma~\ref{lem:compatible-weighted-scale} gives
$0\ne Y_p\in L^p(M)_+$.  Hence
$Z(p)\in(0,\infty)$ and $Z(p)^{-1}Y_p^p$ is the Haagerup density of a
normal state.

Fix $0<p'<p\leq r$ and put
\[
  a\defeq\frac{p'}p\in(0,1).
\]
Functional calculus gives
\begin{align}
  \overline{
  h_\omega^{\frac{1-a}{2p'}}
  h_{\eta_p}^{a/p'}
  h_\omega^{\frac{1-a}{2p'}}}
  =
  Z(p)^{-1/p}
  \overline{
    h_\omega^{\frac12(\frac1{p'}-\frac1p)}
    Y_p
    h_\omega^{\frac12(\frac1{p'}-\frac1p)}}.
\end{align}
Compatibility of the anchored path identifies the closed product on the
right with $Y_{p'}$.  Taking the $p'$-th moment therefore gives
\eqref{eq:anchored-exact-quotient}.  Since $\eta_p$ and $\omega$ are
states,
\begin{align}
  D_{p'/p,p'}(\eta_p\Vert\omega)
  &=
  \frac{1}{p'/p-1}
  \log\!\left(Z(p')Z(p)^{-p'/p}\right)\\
  &=
  pp'\frac{G(p')-G(p)}{p'-p},
\end{align}
which proves \eqref{eq:anchored-exact-difference-quotient}.

We now let $p'\uparrow p$.  For $p'$ sufficiently close to $p$, so that
$p/2<p'<p$, monotonicity in the second parameter at lower order
\cite[Theorem~1 (x)]{Kato:2023hlj} gives
\begin{equation}
  D_{p'/p,p/2}(\eta_p\Vert\omega)
  \leq
  D_{p'/p,p'}(\eta_p\Vert\omega)
  \leq
  D_{p'/p,p}(\eta_p\Vert\omega).
  \label{eq:anchored-varying-z-squeeze}
\end{equation}
For the fixed second parameters $p/2$ and $p$, the lower-order endpoint
theorem \cite[Theorem~6.7]{Hiai:2024qve} yields
\begin{equation}
  \lim_{p'\uparrow p}D_{p'/p,p/2}(\eta_p\Vert\omega)
  =
  \lim_{p'\uparrow p}D_{p'/p,p}(\eta_p\Vert\omega)
  =
  D(\eta_p\Vert\omega)
\end{equation}
in $[0,+\infty]$.  Squeezing in
\eqref{eq:anchored-exact-difference-quotient} proves
\eqref{eq:anchored-entropy-left-derivative}.

Define
\begin{equation}
  B(u)\defeq G(1/u)=\log\|Y_{1/u}\|_{1/u},
  \qquad u\geq\frac1r.
\end{equation}
By Lemma~\ref{lem:compatible-weighted-scale}, $B$ is finite and convex on
$(1/r,\infty)$ and continuous at $1/r$.  The one-sided chain rule gives
\begin{equation}
  B'_+(1/p)
  =
  -p^2G'_-(p)
  =
  -D(\eta_p\Vert\omega).
  \label{eq:anchored-B-derivative}
\end{equation}
If $p<r$, then $1/p$ is an interior point of the domain of the finite
convex function $B$, so its one-sided derivative, and hence
$D(\eta_p\Vert\omega)$, is finite.

Since $B'_+$ is nondecreasing in $u$, reversing the reciprocal variable
in \eqref{eq:anchored-B-derivative} shows that
$p\mapsto D(\eta_p\Vert\omega)$ is nondecreasing.  It is therefore Borel
measurable.

Finally, Lemma~\ref{lem:convex-endpoint-ftc} shows that $B$ is absolutely
continuous on every compact interval $[1/r,b]$.  Since $p\mapsto1/p$ is
bi-Lipschitz on compact subintervals of $(0,r]$, $G(p)=B(1/p)$ is
absolutely continuous on $[p_0,p_1]$.  Its ordinary derivative agrees
almost everywhere with the left derivative in
\eqref{eq:anchored-entropy-left-derivative}.  The fundamental theorem of
calculus proves \eqref{eq:anchored-entropy-integral}; taking $p_1=r$
gives \eqref{eq:anchored-entropy-integral-to-anchor}.  Finiteness follows
from the finiteness of the endpoint values of $G$.
\end{proof}

\subsection{Fixed-ray theorem as a specialization of the anchored identity}
\label{subsec:fixed-ray-proof}

Proposition~\ref{prop:anchored-entropy-identity} is 
independent of any specific values of
$\alpha$, $z$ beyond positivity, and of the fixed-ray construction.  We now show that
Theorem~\ref{thm:fixed-ray-integral} is obtained by choosing a
branch-dependent anchor in that proposition. The proof has two parts. 
We first prove the factor-existence and
integral-representation assertions of Theorem~\ref{thm:fixed-ray-integral}.  We then prove the
monotonicity and one-sided derivative assertions of Corollary~\ref{cor:raywise-monotonicity}.

Put $e=s(\omega)$, pass to the corner $eMe$, and, for the duration of the
proof, relabel the corner and the restricted states as $M$, $\psi$, and
$\omega$.  Thus $\omega$ is faithful.  The states constructed in the
corner extend canonically to the original algebra, and the corresponding
relative entropies are unchanged.

Recall that $c=z/\alpha$.  Choose the anchor
\begin{equation}
  (r,Y_r)\defeq
  \begin{cases}
    \bigl(c,h_\psi^{1/c}\bigr),
      &0<\alpha<1,\\[0.5ex]
    \bigl(z,x_{\alpha,z}(\psi\|\omega)\bigr),
      &\alpha>1.
  \end{cases}
  \label{eq:fixed-ray-anchor}
\end{equation}
The lower anchor is available by the support-inclusion hypothesis.  In
the upper branch, the factor exists and belongs to $L^z(M)_+$ by the
finiteness hypothesis.

Let $(Y_p)_{0<p\leq r}$ be the associated anchored path, and use the
notation $Z(p)$, $G(p)$, and $\eta_p$ of Proposition~\ref{prop:anchored-entropy-identity}.  In the lower branch, for $\beta\in[\alpha,1]$,
\[
    Y_{c\beta} = J_{c\beta,c}^{\omega}(h_\psi^{1/c}) = h_\omega^{\frac{1-\beta}{2c\beta}} h_\psi^{1/c}h_\omega^{\frac{1-\beta}{2c\beta}} = x_\beta^{(c)}(\psi\|\omega).
\]
In the upper branch, 
the defining factorization \eqref{eq:upper-fixed-ray-factor} for the anchor $Y_z=x_{\alpha,z}(\psi\|\omega)$ is
\begin{equation}
  h_\psi^{1/c}
  =
  \overline{
  h_\omega^{\frac1{2c}-\frac1{2z}}
  Y_z
  h_\omega^{\frac1{2c}-\frac1{2z}}}
  =
  J_{c,z}^\omega(Y_z).
  \label{eq:upper-fixed-ray-anchor-factorization}
\end{equation}
For $\beta\in[1,\alpha]$, we have
\begin{equation}
    h_\psi^{1/c}
    =
    J_{c,z}^\omega(Y_z)
    = 
    J_{c,c\beta}^\omega J_{c\beta,z}^\omega(Y_z)
    = 
    J_{c,c\beta}^\omega(Y_{c\beta})
    =
    h_\omega^{\frac{\beta-1}{2c\beta}}Y_{c\beta}h_\omega^{\frac{\beta-1}{2c\beta}}.
\end{equation}
By uniqueness, $Y_{c\beta}=x_\beta^{(c)}$ also in the upper branch.

Taking powers, traces and normalized densities then gives
\begin{equation}
  Z(c\beta)
  =
  Q_{\beta,c\beta}(\psi\|\omega),
  \qquad
  \eta_{c\beta}
  =
  \psi_\beta^{(c)},
  \qquad \beta\in I_\alpha.
  \label{eq:anchored-fixed-ray-identification}
\end{equation}
Thus all the intermediate fixed-ray factors have finite nonzero moments.

This proves the existence and finite-nonzero-moment assertions of
Theorem~\ref{thm:fixed-ray-integral} and identifies 
the normalized powers with the fixed-ray
escort states.  It remains to prove the integral identity~\eqref{eq:main-fixed-ray-integral}.

At the two distinguished endpoints,
\begin{equation}
  Z(c)=\tr(h_\psi)=1,
  \qquad
  Z(z)=Q_{\alpha,z}(\psi\|\omega),
  \qquad
  G(c)=0.
  \label{eq:fixed-ray-anchor-endpoints}
\end{equation}

Applying \eqref{eq:anchored-entropy-integral} between $p=c$ and $p=z$
gives
\begin{align}
  \frac1z\log Q_{\alpha,z}(\psi\|\omega)
  &=
  G(z)-G(c)\notag\\
  &=
  \int_c^z
  \frac{D(\eta_p\Vert\omega)}{p^2}\,\dd p\notag\\
  &=
  \frac1c
  \int_1^\alpha
  \frac{D(\psi_\beta^{(c)}\Vert\omega)}{\beta^2}\,\dd\beta.
  \label{eq:fixed-ray-from-anchored-identity}
\end{align}
Here the integrals are oriented; when $0<\alpha<1$, this identity is
obtained by applying
\eqref{eq:anchored-entropy-integral} on $[z,c]$ and reversing the signs.
Since $z=c\alpha$, multiplication by $z/(\alpha-1)$ yields
\begin{equation}
  D_{\alpha,z}(\psi\Vert\omega)
  =
  \frac{\alpha}{\alpha-1}
  \int_1^\alpha
  \frac{D(\psi_\beta^{(c)}\Vert\omega)}{\beta^2}\,\dd\beta.
\end{equation}
This is \eqref{eq:main-fixed-ray-integral}; the formulation using
$\mu_\alpha$ is the same identity with the orientation removed.  Thus the
fixed-ray representation is a specialization of the general anchored
entropy identity.

This completes the proof of Theorem~\ref{thm:fixed-ray-integral}; 
Corollary~\ref{cor:srd-integral} follows by
setting \(c=1\).  We now prove Corollary~\ref{cor:raywise-monotonicity}, 
beginning with the monotonicity of \(R_c\).  On every anchored
segment\footnote{Here an \emph{anchored segment} means an interval of the fixed-ray
parameter on which the factors are obtained from a single anchored
path.}, Proposition~\ref{prop:anchored-entropy-identity} shows that
$p\mapsto D(\eta_p\Vert\omega)$ is nondecreasing.  Since $p=c\beta$,
\eqref{eq:anchored-fixed-ray-identification} implies that
$\beta\mapsto R_c(\beta)$ is nondecreasing on that segment.

Let us now prove \eqref{eq:raywise-refined-identity}. These anchored 
segments cover all of $\mathcal J_c$.  Indeed, if
$0<\beta_1<\beta_2\leq1$, use the lower path anchored at $p=c$.  If
$1\leq\beta_1<\beta_2$ and $\beta_2\in\mathcal J_c$, use the upper path
anchored at $p=c\beta_2$.  If $\beta_1<1<\beta_2$, combine the preceding
two comparisons through $R_c(1)$.  Hence $R_c$ is nondecreasing on
$\mathcal J_c$. As a monotone extended-real-valued function 
on the interval \(\mathcal J_c\), \(R_c\) is Borel measurable.

On any such anchored segment,
\begin{equation}
  F_c(\beta)
  =
  \frac1\beta\log Z(c\beta)
  =
  cG(c\beta).
  \label{eq:fixed-ray-potential-from-anchor}
\end{equation}
Therefore \eqref{eq:anchored-entropy-left-derivative} and the one-sided
chain rule give
\begin{align}
  \beta^2\frac{\dd^-}{\dd\beta}F_c(\beta)
  &=
  \beta^2c^2G'_-(c\beta)\notag\\
  &=
  D(\eta_{c\beta}\Vert\omega)
  =
  R_c(\beta).
\end{align}
If \(\beta>1\) is an interior point of \(\mathcal J_c\), choose
\(\gamma\in\mathcal J_c\) with \(\gamma>\beta\).  Applying
Proposition~\ref{prop:anchored-entropy-identity} to the upper path anchored at \(r=c\gamma\) places
\(p=c\beta\) strictly below the anchor.  Hence
\(R_c(\beta)=D(\eta_{c\beta}\Vert\omega)\) is finite, and the derivative
identity applies there as an interior identity.
Points below one belong to the lower path.  At
$\beta=1$, the lower path gives the asserted endpoint left derivative,
with the extended-real interpretation when
$D(\psi\Vert\omega)=+\infty$.  

If \(M\) is finite-dimensional, then after passing to the support
corner \(h_\omega\) is invertible and
\[
  x_\beta^{(c)}
  =
  h_\omega^{(1-\beta)/(2c\beta)}
  h_\psi^{1/c}
  h_\omega^{(1-\beta)/(2c\beta)}
\]
is a smooth positive-matrix family of constant rank. Smooth
finite-dimensional functional calculus therefore shows that
\(F_c\) and \(R_c\) are smooth on the interior of \(J_c\), so the
left derivative there agrees with the ordinary derivative.
This completes the proof of
Corollary~\ref{cor:raywise-monotonicity}.

\subsection{Proof of the general lower-order boundary formula}
\label{subsec:lower-order-proof}

It remains to prove Proposition~\ref{prop:general-fixed-ray-boundary}.
We first prove the bound \eqref{eq:boundary-cost-bound}, 
then treat the case \(q_c>0\), proving
the escort identification and the boundary 
formula \eqref{eq:general-fixed-ray-boundary}, and finally
handle the case \(q_c=0\).

Let $e=s(\omega)$ and use the canonical corner identification.  Since
$x\mapsto{exe}$ is a positive contraction on $L^c(M)$,
$y_c\in L^c(eMe)_+$ and $q_c<\infty$.  Faithfulness of the canonical
functional on $L^1(eMe)_+$ also gives $q_c=0$ only if $y_c=0$.
Moreover, contractivity of the compression gives
\[
  q_c
  = \|y_c\|_c^c
  = \bigl\|e h_\psi^{1/c}e\bigr\|_c^c
  \leq \bigl\|h_\psi^{1/c}\bigr\|_c^c
  = \tr(h_\psi)
  = 1.
\]
This proves \eqref{eq:boundary-cost-bound}, i.e.\ \(0\leq q_c\leq1\).

Since
$h_\omega$ is supported on $e$, for $0<\beta<1$ we may insert $e$ on both
sides of $h_\psi^{1/c}$ in the closed product:
\begin{align}
  x_\beta^{(c)}(\psi\|\omega)
  &=\overline{
  h_\omega^{\frac{1-\beta}{2c\beta}}
  y_c
  h_\omega^{\frac{1-\beta}{2c\beta}}}.
  \label{eq:general-boundary-factor-compression}
\end{align}
If $q_c>0$, then
$y_c=q_c^{1/c}h_{\widehat\psi_c}^{1/c}$, and hence
\begin{equation}
  x_\beta^{(c)}(\psi\|\omega)
  =q_c^{1/c}
   x_\beta^{(c)}(\widehat\psi_c\|\omega_e).
  \label{eq:general-boundary-factor-scaling}
\end{equation}
Taking the $c\beta$-th moment gives
\begin{equation}
  Q_{\beta,c\beta}(\psi\|\omega)
  =q_c^\beta
   Q_{\beta,c\beta}(\widehat\psi_c\|\omega_e).
  \label{eq:general-boundary-moment-scaling}
\end{equation}
The factor \(q_c^\beta\) in \eqref{eq:general-boundary-moment-scaling} cancels upon normalization.
Consequently, \eqref{eq:general-boundary-factor-scaling} and \eqref{eq:general-boundary-moment-scaling} prove the escort-identification assertion
of Proposition~\ref{prop:general-fixed-ray-boundary}.  

For \(q_c>0\) at $\beta=\alpha$,
\begin{equation}
  D_{\alpha,c\alpha}(\psi\Vert\omega)
  =D_{\alpha,c\alpha}(\widehat\psi_c\Vert\omega_e)
   -\frac{\alpha}{1-\alpha}\log q_c.
\end{equation}
The state $\widehat\psi_c$ is supported in $e$, so applying
Theorem~\ref{thm:fixed-ray-integral} to the corner proves
\eqref{eq:general-fixed-ray-boundary} when \(q_c>0\).  

If $q_c=0$, then $y_c=0$, and
\eqref{eq:general-boundary-factor-compression} gives
\(Q_{\alpha,c\alpha}(\psi\Vert\omega)=0\), hence
\(D_{\alpha,c\alpha}(\psi\Vert\omega)=+\infty\).
This proves the remaining case and completes the proof of Proposition~\ref{prop:general-fixed-ray-boundary}.
Corollary~\ref{cor:srd-support-boundary} is its specialization to \(c=1\).

\section{Discussion}
\label{sec:discussion}

In this paper we have shown that, for supported pairs of states $\psi,\omega$ such that $s(\psi)\leq s(\omega)$, the {\az} divergence $D_{\alpha,z}(\psi\|\omega)$ is a probability average of ordinary relative entropies along a canonical fixed-ray escort trajectory. For lower orders, i.e., $\alpha\in(0,1)$, without support inclusion, the corresponding compressed average is supplemented by the explicit $L^c$ support-boundary term.
This representation holds both for density matrices in finite-dimensional systems and for normal states on arbitrary von Neumann algebras, and needs no DPI restriction.
We have also discussed operational interpretations of our representation through independent theorems: generic DPI parameters give testing
converses, the sandwiched line gives exact strong-converse and
thermal-operation reliability formulas, and a restricted pair conversion
problem uses the full two-parameter family. 

In a companion work~\cite{Kibe:2026hepth}, we apply the integral representation to null shape deformations in quantum field theory (QFT) and obtain, under the regulated-QFT assumptions stated there, quasi-local expressions for \(\alpha\)--\(z\) R\'enyi quantum null energy condition quantities as escort averages of ordinary relative-entropy variations. 
Two further consequences of the fixed-ray construction will be developed elsewhere. Under suitable regularity assumptions, differentiating the associated escort path a second time leads to a modular susceptibility, a quadratic fluctuation quantity interpolating between relative-entropy variance and Bogoliubov-Kubo-Mori covariance. The escort composition law also suggests invariance of quantum sufficiency along an admissible fixed ray and could lead to simultaneous recovery estimates for an associated order-labelled escort experiment.

Another interesting direction for further research is to compare the escort representation with other
integral descriptions of R\'enyi divergences.  Hirche and Tomamichel 
construct quantum $f$-divergences by integrating quantum hockey-stick
divergences and show that the regularized R\'enyi quantities recover the
Petz R\'enyi divergence below one and the sandwiched R\'enyi divergence
above one \cite{Hirche:2023caq}.  Their construction resolves the
divergence into binary-testing quantities, whereas the representation
developed here resolves it into ordinary relative entropies evaluated along
a canonical path of states.  There is also a classical precedent for the
latter viewpoint: van Erven and Harremo\"es organize classical R\'enyi
divergence using the tilted distributions
which appear naturally in variational identities involving
Kullback--Leibler divergences \cite{Erven:2014aa}.  It would be
interesting to determine whether the hockey-stick and escort-path
representations are related by a direct transform, and to what extent the
sandwiched escort family provides the intrinsic noncommutative analogue of
the classical tilted-distribution path.

\begingroup
\renewcommand{\addcontentsline}[3]{}
\section*{Acknowledgments}

T.K. is supported by a Simons Foundation fellowship through the Targeted Grant to Instituto Balseiro. The work of P.R. has been supported by the Polish National Science Centre through Sonata grant (2022/47/D/ST2/02058). 
During the preparation of this manuscript, the authors used Anthropic Claude and OpenAI ChatGPT as editorial aids in drafting and revising portions of the text. Every mathematical statement, proof, and citation in the final manuscript was independently checked by the authors, who take full responsibility for its content.

\endgroup

\providecommand{\href}[2]{#2}\begingroup\raggedright\endgroup

\end{document}